\renewcommand{\Pr}{\mathbf{Pr}}
\newcommand{\Model}{\mathrm{M}}
\newcommand{\SBM}{\mathrm{SBM}}
\begin{document}

\title{Robust recovery for stochastic block models, simplified and generalized}


\author{Sidhanth Mohanty\thanks{MIT. \texttt{sidhanth@csail.mit.edu}. Much of this work was done while the author was a PhD student at UC Berkeley.} \and Prasad Raghavendra\thanks{UC Berkeley. \texttt{raghavendra@berkeley.edu}.} \and David X. Wu\thanks{UC Berkeley. \texttt{david\_wu@berkeley.edu}.}}

\date{\today}
\maketitle

\begin{abstract}
    We study the problem of \emph{robust community recovery}: efficiently recovering communities in sparse stochastic block models in the presence of adversarial corruptions.
    In the absence of adversarial corruptions, there are efficient algorithms when the \emph{signal-to-noise ratio} exceeds the \emph{Kesten--Stigum (KS) threshold}, widely believed to be the computational threshold for this problem.
    The question we study is: \emph{does the computational threshold for robust community recovery also lie at the KS threshold?}
    We answer this question affirmatively, providing an algorithm for robust community recovery for arbitrary stochastic block models on any constant number of communities, generalizing the work of Ding, d'Orsi, Nasser \& Steurer \cite{DdNS22} on an efficient algorithm above the KS threshold in the case of $2$-community block models.

    There are three main ingredients to our work:

    \begin{enumerate}
    \item The Bethe Hessian of the graph is defined as $H_G(t) \triangleq (D_G-I)t^2 - A_Gt + I$ where $D_G$ is the diagonal matrix of degrees and $A_G$ is the adjacency matrix.
    Empirical work suggested that the Bethe Hessian for the stochastic block model has outlier eigenvectors corresponding to the communities right above the Kesten-Stigum threshold \cite{KMMNSZZ13,SKZ14}.

     We formally confirm the existence of outlier eigenvalues for the Bethe Hessian, by explicitly constructing outlier eigenvectors from the community vectors.
    \item We develop an algorithm for a variant of robust PCA on sparse matrices.  Specifically, an  algorithm to partially recover top eigenspaces from adversarially corrupted sparse matrices under mild delocalization constraints. 

    \item A rounding algorithm to turn vector assignments of vertices into a community assignment, inspired by the algorithm of Charikar \& Wirth \cite{CW04} for $2$XOR.
    \end{enumerate}


\end{abstract}

\setcounter{page}{0}
\thispagestyle{empty}
\newpage
\setcounter{page}{0}
\thispagestyle{empty}
\tableofcontents
\newpage

\section{Introduction}
The stochastic block model (SBM) has provided an enlightening lens into understanding a wide range of computational phenomena in Bayesian inference problems, such as computational phase transitions \& information-computation gaps \cite{DKMZ11,MNS18,Mas14,HS17}, spectral methods for sparse matrices \cite{KMMNSZZ13,SKZ14,BLM15}, local message-passing algorithms \cite{MNS14,PY23,GP23}, and robustness \cite{MPW16,BMR21,DdNS22,LM22}.

The SBM is a model of random graphs where the vertices are partitioned into \emph{communities}, denoted by $\bx$, and the probability of an edge existing is contingent on the communities that the two endpoints are part of.
The algorithmic task is the \emph{community recovery} problem: given an input graph $\bG$, estimate the posterior $\bx|\bG$ with an efficient algorithm.
\begin{definition}[Informal]
    In the \emph{stochastic block model}, we are given a $k\times k$ matrix $\Model$, a distribution $\pi$ over $[k]$, and $d > 0$, and $\SBM_n(\Model, \pi, d)$ denotes the distribution where an $n$-vertex graph $\bG$ is sampled by:
    \begin{enumerate}
    \item drawing a \emph{color} $\bx(u)\sim\pi$ for every $u\in[n]$,
    \item for each pair of vertices $u,v$, the edge $\{u,v\}$ is chosen with probability $\Model_{\bx(u), \bx(v)}\cdot\frac{d}{n}$ .
    \end{enumerate}
    In the \emph{community recovery} problem, the goal is to give an efficient algorithm that takes $\bG$ as input and outputs a community assignment $\wh{x}$ approximating $\bx|\bG$ (see \Cref{def:weak-recovery} for a formal definition).
\end{definition}
\noindent {\bf Computational thresholds.}
For a given $\Model$ and $\pi$, increasing $d$ can only possibly make the problem easier.
The main question is to understand the \emph{computational threshold} for community recovery --- i.e. the minimum value of $d$ where the problem goes from being intractable to admitting efficient algorithms.

The first predictions for this computational threshold came from the \emph{cavity method} in statistical physics in the work of Decelle, Krzakala, Moore \& Zdeborova \cite{DKMZ11}.
They posited that the location of this transition is at the \emph{Kesten--Stigum threshold} (henceforth KS threshold), a threshold for broadcast processes on trees studied in the works of Kesten \& Stigum \cite{KS66, KS67}.
The algorithmic side of these predictions was confirmed in the case of the $2$-community block model in the works of Mossel, Neeman \& Sly \cite{MNS18} and Massouli{\'e} \cite{Mas14}, and then for block models in increasing levels of generality by Bordenave, Lelarge \& Massouli{\'e} \cite{BLM15}, Abbe \& Sandon \cite{AS15}, and Hopkins \& Steurer \cite{HS17}.

\medskip

\noindent {\bf Robust algorithms.}
All of these algorithms utilize the knowledge of the distribution the input is sampled from quite strongly --- they are based on $\Omega(\log n)$-length walk statistics in the stochastic block model.
However, the full generative process in inference is not always known precisely.
Thus, we would like algorithms that utilize but do not overfit to the distributional assumptions.

Demanding that our algorithm be \emph{robust}, i.e. resilient to adversarial corruptions to the input, is often a useful way to design algorithms that are less sensitive to distributional assumptions.
This leads one to wonder: can algorithms that don't strongly exploit the prior distribution approach the KS threshold?

\medskip

\noindent {\bf Optimization vs.\ inference.}
Earlier approaches to robust recovery in $2$-community block models were based on \emph{optimization}: semidefinite programming relaxations of the minimum bisection problem, as in the work of Guedon \& Vershynin \cite{GV16}.
These approaches have the advantage of being naturally robust, since the algorithms are approximately Lipschitz around random inputs, but the minimum bisection relaxation is not known to achieve statistical optimality and only succeeds well above the KS threshold.

The following two results point to the suboptimality of optimization-based strategies.
Moitra, Perry \& Wein \cite{MPW16} considered the \emph{monotone} adversary in the $2$-community setting, where the adversary is allowed to make an \emph{unbounded} number of edge insertions within communities and edge deletions across communities.
At an intuitive level, this is supposed to only make the problem easier and indeed does so for the minimum bisection approach, but to the contrary \cite{MPW16} proves that the threshold for recovery \emph{increases}.
Dembo, Montanari \& Sen \cite{DMS17} exactly nailed the size of the minimum bisection in \erdos--\renyi graphs, which are complete noise and have no signal in the form of a planted bisection --- and strikingly, it is actually \emph{smaller} than the size of the planted bisection in the detectable regime!
Thus, it is conceivable that there are bisections completely orthogonal to the planted bisection in a stochastic block model graph that nevertheless have the same size.

The problem of recovering communities is more related to the task of \emph{Bayesian inference}, i.e., applying Bayes' rule and approximating $\bx|\bG$.
Optimizing for the minimum bisection is akin to computing the \emph{maximum likelihood estimate}, which does not necessarily produce samples representative of the posterior distribution of $\bx|\bG$.

\medskip

\noindent {\bf SDPs for inference.}
The work of Banks, Mohanty \& Raghavendra \cite{BMR21} proposed a semidefinite programming-based algorithm for inference tasks that incorporates the prior distribution in the formulation, and illustrated that this algorithm can distinguish between $\bG$ sampled from the stochastic block model from an \erdos--\renyi graph of equal average degree anywhere above the KS threshold while being resilient to $\Omega(n)$ arbitrary edge insertions and deletions.

A similar SDP formulation was later studied by Ding, d'Orsi, Nasser \& Steurer \cite{DdNS22} in the $2$-community setting, and was used to give an algorithm to recover the communities with a constant advantage over random guessing in the presence of $\Omega(n)$ edge corruptions for all degrees above the KS threshold.
They analyze the spectra of matrices associated with random graphs after deleting vertices with large neighborhoods, which introduces unfriendly correlations, and causes their analysis to be highly technical.

The main contribution of our work is an algorithm for robust recovery, which is amenable to a significantly simpler analysis.
Our algorithm also succeeds at the recovery task for arbitrary block models with a constant number of communities.
\begin{theorem}[Informal statement of main theorem]
    Let $(\Model, \pi, d)$ be SBM parameters such that $d$ is above the KS threshold, and let $\bG,\bx\sim\SBM_n(\Model, \pi, d)$. There exists $\delta = \delta(\Model, \pi, d) > 0$ such that the following holds. There is a polynomial time algorithm that takes as input any graph $\wt{\bG}$ that can be obtained by performing arbitrary $\delta n$ edge insertions and deletions to $\bG$ and outputs a coloring $\wh{\bx}$ that has ``constant correlation'' with $\bx$, with high probability over the randomness of $\bG$ and $\bx$.
\end{theorem}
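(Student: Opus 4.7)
The plan is to weave together the three ingredients from the abstract into a single pipeline. Given the corrupted graph $\wt{\bG}$, I would first preprocess by truncating or deleting vertices whose degree grossly exceeds some $C\cdot d$ threshold; since the clean graph $\bG$ has maximum degree $O(d+\log n)$ outside a $o(n)$-size set with high probability, this step only removes $O(\delta n)$ vertices and destroys a tolerable amount of signal. Then form the Bethe Hessian $H_{\wt{\bG}}(t^\star)$ at a parameter $t^\star$ close to the reciprocal of the Perron eigenvalue of the broadcast matrix associated to $(\Model,\pi)$, chosen so that on the clean graph the signal eigenvalues are negative outliers while the bulk of the spectrum is positive. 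Invoke the robust sparse PCA subroutine (ingredient~2) on $H_{\wt{\bG}}(t^\star)$ to extract an approximate bottom eigenspace $\wh V\in\mathbb{R}^{n\times (k-1)}$, and finally apply a Charikar--Wirth-style rounding (ingredient~3) to the rows $\wh y_u$ of $\wh V$ to output a coloring $\wh{\bx}\in[k]^n$.

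The analysis rests on ingredient~1: on the clean graph, $H_{\bG}(t^\star)$ has $k-1$ negative outlier eigenvalues whose eigenvectors are essentially linear combinations of the community indicators $\mathbf{1}_{\bx=i}$, each of which has coordinates $\Theta(1/\sqrt{n})$. In particular, the outlier eigenspace is $\ell_\infty$-delocalized, and an orthonormal basis $V$ of this eigenspace encodes, up to a global rotation, the color $\bx(u)$ via the row $y_u$ with constant average correlation. Viewing $H_{\wt{\bG}}(t^\star)-H_{\bG}(t^\star)$ as a sparse symmetric perturbation with $O(\delta n)$ nonzero rows and bounded entries, the guarantees of ingredient~2 (which tolerates adversarial sparse perturbations of delocalized eigenspaces) yield an $\wh V$ that is close to $V$ in principal angle. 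Running the Charikar--Wirth-style rounding of ingredient~3 converts this perturbed-but-aligned vector assignment into a discrete coloring $\wh{\bx}$ with constant correlation to $\bx$, much as vector solutions for $2$XOR can be rounded to Boolean assignments with a constant integrality gap; for $k>2$ one would use a $k$-ary variant such as random Gaussian projection with bucketing.

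The main obstacle, and where most of the real work of the proof lies, is controlling the spectral effect of adversarial corruptions on the Bethe Hessian. Because $H_G(t)=(D_G-I)t^2-A_Gt+I$ is quadratic in the adjacency matrix and uses the degree matrix on the diagonal, inserting even one edge perturbs two diagonal entries and rearranges $\Theta(1)$ of the largest eigenvalues; an adversary may concentrate $\delta n$ insertions at a single vertex to create arbitrary localized spikes. The preprocessing step is precisely what makes these spikes benign: after removing vertices of egregiously large degree, the remaining perturbation is a sparse matrix whose spectral influence on delocalized eigenvectors is small, which is exactly the hypothesis under which ingredient~2 operates. The remaining moving parts --- tuning $t^\star$ so that the clean spectral gap is $\Omega(1)$ uniformly in $n$, quantifying the $\ell_\infty$-delocalization of the explicit outlier eigenvectors produced in ingredient~1, and showing that the rounding of ingredient~3 survives the principal-angle error introduced by robust PCA --- are technical but conceptually standard once the Bethe Hessian is brought under control.
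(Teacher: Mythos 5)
Your pipeline feeds the robust PCA subroutine the raw Bethe Hessian $H_{\wt{\bG}}(t^*)$ and asserts that its $(k-1)$-dimensional negative eigenspace consists of delocalized vectors essentially equal to linear combinations of the community indicators. This is where the argument breaks. The paper explicitly points out that near the KS threshold the quadratic form $\langle \mathbf{1}_c - \tfrac{1}{k}\mathbf{1},\; H_{\bG}(t^*)\,(\mathbf{1}_c - \tfrac{1}{k}\mathbf{1})\rangle$ is in fact \emph{positive}, so the centered community indicators are not witnesses to the negative outliers of the Bethe Hessian. What the negative eigenvectors of $H_{\bG}(t^*)$ actually correlate with are the propagated vectors $A^{(\ell)}(\mathbf{1}_c - \tfrac{1}{k}\mathbf{1})$, which (i) are not obviously $\ell_\infty$-delocalized, and (ii) cannot be rounded to a community assignment by the Charikar--Wirth-style step without first inverting the nonbacktracking-power map $A^{(\ell)}$ --- a step your pipeline has no analogue of. The paper's fix is to run Phases~2 and~3 on the conjugated matrix $M_{\bG,\ell}(t) = A^{(\ell)} H_{\bG}(t) A^{(\ell)}$ for a carefully chosen \emph{constant} $\ell$: the quadratic form of the centered community indicators against $M_{\bG,\ell}(t)$ is provably $\le -\Omega(n)$, which brings the delocalized community indicators back into the picture as bona fide negative witnesses. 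That delocalization is exactly what both the robust PCA guarantee and the rounding guarantee require.

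A secondary issue: you ask the robust PCA step to return $\wh{V}$ ``close to $V$ in principal angle.'' That is not achievable under adversarial sparse corruption: an adversary can plant an extra delocalized direction with a very negative eigenvalue, and no algorithm can then distinguish it from the true eigenspace. The paper deliberately relaxes the goal to producing an $O(r)$-dimensional subspace $U$ whose projector has $O(r/n)$ diagonal entries and which retains $\Omega(1)$ correlation with every delocalized unit vector $y$ satisfying $\langle y, M y\rangle < -\Omega(1)$. This weaker guarantee is what the iterative row/column trimming procedure in the paper delivers, and the rounding step is designed to work from it rather than from an actual reconstruction of the clean eigenspace.
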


Many of the ingredients in the above result are of independent interest.
First, we exhibit a symmetric matrix closely related to the Bethe Hessian of the graph, such that its bottom eigenspace is correlated with the communities.  
Next, we design an efficient algorithm to robustly recover the bottom-$r$ eigenspace of a {\it sparse} matrix in the presence of adversarial corruptions.
Finally, we demonstrate a general rounding scheme to obtain community assignments from this eigenspace.

\begin{remark}[Robustness against node corruptions]
    The node corruption model, introduced by Liu \& Moitra \cite{LM22}, is a harsher generalization of the edge corruption model.
    In recent work, Ding, d'Orsi, Hua \& Steurer \cite{HDdOS23} proved that in the setting of sparse SBM, any algorithm that is robust to edge corruptions can be turned into one robust to node corruptions in a blackbox manner.
    Hence, our results apply in this harsher setting too.
\end{remark}

\subsection{Related work}

We refer the reader to the survey of Abbe \cite{Abb17} for a detailed treatment of the rich history and literature on community detection in block models, its study in other disciplines, and the many information-theoretic and computational results in various parameter regimes.

Introducing an adversary into the picture provides a beacon towards algorithms that utilize but do not \emph{overfit} to distributional assumptions.
Over the years, a variety of adversarial models have been considered, some of which we survey below.

\medskip

\noindent {\bf Corruption models for stochastic block model.}
Prior to the works of \cite{BMR21,DdNS22}, Stefan \& Massouli{\'e} \cite{SM19} considered the robust recovery problem, and gave a robust spectral algorithm to recover communities under $O(n^{\eps})$ adversarial edge corruptions for some small enough $\eps > 0$.

Liu \& Moitra \cite{LM22} introduced the \emph{node corruption} model where an adversary gets to perform arbitrary edge corruptions incident to a constant fraction of corrupted vertices, and gave algorithms that achieved optimal accuracy in the presence of node corruptions and the monotone adversary sufficiently above the KS threshold.
Soon after, Ding, d'Orsi, Hua \& Steurer \cite{HDdOS23} gave algorithms achieving the Kesten--Stigum threshold using algorithms for the edge corruption model in the low-degree setting \cite{DdNS22}, and results on the optimization SDP in the high-degree setting \cite{MS16} in a blackbox manner.

\medskip

\noindent {\bf Semirandom \& smoothed models.}
Some works have considered algorithm design under harsher adversarial models, where an adversarially chosen input undergoes some random perturbations.

Remarkably, at this point, the best algorithms for several graph and hypergraph problems match the performance of our best algorithms for their completely random counterparts.
For example, at this point, the semirandom planted coloring and clique problems were introduced by Blum \& Spencer \cite{BS95}, and Feige \& Kilian \cite{FK01}, and a line of work \cite{CSV17,MMT20} culminating in the work of Buhai, Kothari \& Steurer \cite{BKS23} showed that the size of the planted clique/coloring recoverable in the semirandom setting matches the famed $\sqrt{n}$ in the fully random setting.

Another example where algorithms for a semirandom version of a block model-like problem have been considered is semirandom CSPs with planted solutions, where the work of Guruswami, Hsieh, Kothari \& Manohar \cite{GHKM23} gives algorithms matching the guarantees of solving fully random planted CSPs.

\subsection{Organization}
In \Cref{sec:tech-overview}, we give an overview of our algorithm and proof.
In \Cref{sec:prelim}, we give some technical preliminaries.
In \Cref{sec:recovery-algo}, we describe our algorithm and show how to analyze it.
In \Cref{sec:outlier-eigs}, we prove results about the spectrum of the Bethe Hessian matrix and variants, key to our algorithm.
In \Cref{sec:robust-eigenspace}, we show how to use a trimming procedure and a spectral algorithm on a Bethe Hessian variant to recover a subspace which has good correlation with the true communities in the presence of adversarial corruptions.
Finally, in \Cref{sec:rounding}, we prove some technical claims relevant to analyzing our algorithm.

\section{Technical overview}    \label{sec:tech-overview}
An $n$-vertex graph $\bG$ is drawn from a stochastic block model and undergoes $\delta n$ adversarial edge corruptions, and then the corrupted graph $\wt{\bG}$ is given to us as input.
For simplicity of discussion, we restrict our attention to \emph{assortative} symmetric $k$-community block models above the KS threshold, i.e. the connection probability between two vertices $i$ and $j$ only depends on whether they belong to the same community or different communities, and the intra-community probability is higher.
Nevertheless, our approach generalizes to any arbitrarily specified $k$-community block model above the KS threshold.

Let us first informally outline the algorithm; see \cref{sec:recovery-algo} for formal details.
\begin{enumerate}
    \item First, we preprocess the corrupted graph $\wt{\bG}$ by truncating high degree vertices, which removes corruptions localized on small sets of vertices in the graph.
    \item We then construct an appropriately defined graph-aware symmetric matrix $M_{\bG} \in \R^{n \times n}$ whose negative eigenvalues contains information about the true communities for the \emph{uncorrupted} graph. We motivate this construction in \cref{sec:spectral-overview}.
    \item We recursively trim the rows and columns of $M_{\wt{\bG}}$ to remove small negative eigenvalues in its spectrum. Then we use a spectral algorithm to robustly recover a subspace $U$ which contains information about the communities.  Both points are described in \cref{sec:robust-eigenspace-overview}.
    \item Finally, we round the subspace $U$ into a community assignment, using a vertex embedding provided by $U$. This is detailed in \cref{sec:rounding-overview}.
\end{enumerate}

\subsection{Outlier eigenvectors for the Bethe Hessian}
Bordenave, Lelarge \& Massouli{\'e} \cite{BLM15} analyzed the spectrum of the \emph{nonbacktracking matrix} and rigorously established its connection to community detection.
The asymmetric nonbacktracking matrix $B_G \in \qty{0, 1}^{2\abs{E(G)} \times 2\abs{E(G)}}$ is indexed by directed edges, with 
\begin{align*}
    (B_G)_{(u_1 \to v_1), (u_2 \to v_2)} \triangleq \Ind[v_1 = u_2]\Ind[v_2 \neq u_1].
\end{align*}
\cite{BLM15} showed that above the KS threshold, the $k$ outlier eigenvalues for $B_{\bG}$ correspond to the $k$ community vectors.
More precisely, in the case of symmetric $k$-community stochastic block models above the KS threshold, \cite{BLM15} proved that for the randomly drawn graph $\bG$, there is a small $\eps > 0$ for which its nonbacktracking matrix $B_{\bG}$ has exactly $k$ eigenvalues larger than $(1+\eps){\sqrt{d}}$ in magnitude (\Cref{thm:BLM}). 

The \emph{Bethe Hessian} matrix is a symmetric matrix associated with a graph, whose early appearances can be traced to the works of Ihara \cite{Iha66} and Bass \cite{Bas92}. The Bethe Hessian of a graph with parameter $t \in \R$ is defined as
\begin{align*}
    H_G(t) \triangleq (D_G-I)t^2 - A_Gt + I,
\end{align*}
where $D_G$ and $A_G$ are the diagonal degree matrix and adjacency matrix of $G$, respectively. For $t$ in the interval $[0, 1]$, it can be interpreted as a regularized version of the standard graph Laplacian. 
The Bethe Hessian for the stochastic block model was considered in the empirical works \cite{KMMNSZZ13,SKZ14}, where they observed that for some choice of $t$, the Bethe Hessian and the nonbacktracking matrix has outlier eigenvectors which can be used for finding communities in block models.
Concretely, in \cite{SKZ14} they observed that for $\bG$ drawn from stochastic block models above the KS threshold, there is a choice of $t$ such that $H_{\bG}(t)$ only has a small number of negative eigenvectors, all of which correlate with the hidden community assignment.

We confirm this empirical observation in the following proposition.

\begin{proposition}[Bethe Hessian spectrum]\label{prop:bethe-hessian}
    Let $(\Model, \pi, d)$ be $k$-community SBM parameters such that $d$ is above the KS threshold, and let $\bG,\bx\sim\SBM_n(\Model, \pi, d)$.
    Then there exists $\eps > 0$ such that for $t^* = \frac{1}{(1+\eps)\sqrt{d}}$, the Bethe Hessian $H_{\bG}(t^*)$ has at most $k$ negative eigenvalues and at least $k-1$ negative eigenvalues.
\end{proposition}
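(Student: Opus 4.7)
The plan is to prove the upper and lower bounds separately, with the Ihara--Bass identity
\[
\det H_G(t) \;=\; (1-t^2)^{|V(G)|-|E(G)|}\,\det(I - t B_G)
\]
(valid for connected $G$, which holds for $\bG$ w.h.p.) as the common starting point. This identity implies that for $t \in (0,1)$, the zeros of the polynomial $\det H_G(t)$ coincide (with multiplicity) with the reciprocals of real eigenvalues of $B_G$ strictly greater than $1$.

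For the upper bound I would let $N(t)$ denote the number of negative eigenvalues of $H_G(t)$. Since $H_G(0) = I$ we have $N(0) = 0$, and $N$ is piecewise constant in $t$, changing by at most the zero multiplicity at each zero of $\det H_G(t)$. By \Cref{thm:BLM}, $B_G$ has at most $k$ eigenvalues of modulus exceeding $(1+\eps)\sqrt d = 1/t^*$, so at most $k$ zeros of $\det H_G$ occur in $(0, t^*]$, yielding $N(t^*) \leq k$.

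For the lower bound I would construct an explicit $(k-1)$-dimensional subspace on which $H_G(t^*)$ is negative definite. By \Cref{thm:BLM}, $B_G$ has $k$ outlier eigenvalues; denote the $k-1$ non-Perron outliers by $\mu_2, \ldots, \mu_k$ (all satisfying $\mu_i > (1+\eps)\sqrt d$), and let $w_i$ be their edge-space eigenvectors. Via the Ihara--Bass correspondence each $w_i$ induces a vertex-space vector $v_i$ satisfying $H_G(1/\mu_i)\,v_i = 0$. Substituting this identity into the Rayleigh quotient of $H_G(t^*)$ on $v_i$ gives
\[
\frac{v_i^\top H_G(t^*)\,v_i}{\|v_i\|^2} \;=\; (1 - t^* \mu_i) \;+\; s_i \cdot t^*\bigl(t^* - 1/\mu_i\bigr),
\]
where $s_i := v_i^\top (D_G - I)\,v_i / \|v_i\|^2$. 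Community-correlation of BLM's outlier eigenvectors implies delocalization of $v_i$, and degree concentration in the SBM then yields $s_i = (d-1)(1+o(1))$ with high probability. Writing $\mu_i = (1+\eps_i')\sqrt d$ with $\eps_i' > \eps$ and simplifying, the right side becomes
\[
\frac{(\eps_i' - \eps)\bigl(1 - (1+\eps)(1+\eps_i')\bigr)}{(1+\eps)^2(1+\eps_i')} + o(1) \;<\; 0,
\]
since $(1+\eps)(1+\eps_i') > 1$. Approximate orthogonality of $v_2, \ldots, v_k$, inherited from that of BLM's outlier eigenvectors, would then let me invoke Courant--Fischer to conclude $N(t^*) \geq k - 1$.

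The hardest part will be controlling the vertex-space vectors $v_i$ --- verifying that they are sufficiently delocalized for $s_i \to d - 1$ and pairwise nearly orthogonal. This should follow from inspecting BLM's construction of $w_i$ as weighted truncated nonbacktracking walk tensors against the informative eigenvectors of the community matrix, combined with standard degree-concentration and trimming arguments for sparse SBM. In non-assortative settings one would additionally need to group complex-conjugate outlier eigenvalues of $B_G$ into real invariant subspaces before running the same computation.
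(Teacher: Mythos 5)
Your upper bound matches the paper's: you use Ihara--Bass to relate the negative inertia of $H_{\bG}(t)$ to real eigenvalues of $B_{\bG}$ above $1/t$, combine with the BLM spectral bound, and invoke a continuity/zero-counting argument (the paper formalizes the multiplicity bookkeeping via Jacobi's formula in \Cref{lem:B-to-H}). That part is sound.

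Your lower bound takes a genuinely different route from the paper, but it has a real gap. You propose to use the kernel vectors $v_i$ of $H_{\bG}(1/\mu_i)$ as witnesses, and your Rayleigh-quotient identity
\[
\frac{v_i^\top H_{\bG}(t^*) v_i}{\|v_i\|^2} = (1 - t^*\mu_i) + s_i\, t^*\!\left(t^* - \tfrac{1}{\mu_i}\right),
\qquad s_i \triangleq \frac{v_i^\top (D_{\bG}-I)v_i}{\|v_i\|^2},
\]
is correct. But the whole argument then hinges on the unjustified assertion $s_i = (d-1)(1+o(1))$. In a sparse SBM the degrees do \emph{not} concentrate --- they are $\mathrm{Poisson}(O(1))$ --- so $s_i$ is a degree-weighted average against $v_i(v)^2$, and its value is extremely sensitive to how the eigenvector mass is distributed over high- and low-degree vertices. ``Community correlation'' of the BLM edge-space eigenvectors does not imply the kind of entrywise control on the induced vertex-space vector $v_i$ that you would need; indeed, outlier eigenvectors of sparse random matrices typically put extra weight on high-degree vertices, which can push $s_i$ above $d$ and spoil the sign. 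The paper's remark after \Cref{prop:new-matrix} makes exactly this kind of point: the ``obvious'' witness candidates have positive quadratic form against $H_{\bG}(t^*)$ near the KS threshold, so finding the right test vectors is delicate. The paper resolves this by replacing $H_{\bG}(t^*)$ with the conjugated matrix $M_{\bG,\ell} = A^{(\ell)} H_{\bG}(t^*) A^{(\ell)}$, testing it against the \emph{deterministic} centered community indicators, and evaluating the quadratic form by explicit path-shape counting (\Cref{lemma:lifted-outlier}); Sylvester's law of inertia then transfers the negative signature back to $H_{\bG}(t^*)$. This completely avoids the need for any delocalization or degree-concentration input about random eigenvectors.

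There is also a secondary gap: to apply Courant--Fischer you need $H_{\bG}(t^*)$ to be negative definite on the full span of $v_2,\dots,v_k$, not just to have $v_i^\top H_{\bG}(t^*)v_i < 0$ individually. Since the $v_i$ are kernel vectors of \emph{different} matrices $H_{\bG}(1/\mu_i)$, there is no a priori control on the cross terms $v_i^\top H_{\bG}(t^*) v_j$, and approximate $\ell_2$-orthogonality of the $v_i$ is both unestablished and, on its own, insufficient. The paper avoids this by working with witness vectors $\bv^{(n)}$ lifted from eigenvectors $\psi_i$ of $T$ all sharing the \emph{same} eigenvalue $\lambda_2$, so the relevant polynomial $p(T)$ is simultaneously diagonalized and the quadratic form is negative uniformly over the span. (Your algebra also quietly replaces $\tfrac{d-1}{d}$ by $1$; this does not change the sign conclusion, but it is one more place where the precise value of $s_i$ matters.)
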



\paragraph{Constructing the outlier eigenspace} \label{sec:spectral-overview}
 
There are two assertions in \cref{prop:bethe-hessian}.  To show that $H_{\bG}(t^*)$ has at most $k$ negative eigenvalues, one can relate these negative eigenvalues to the $k$ outlier eigenvalues of $B_{\bG}$ using an Ihara--Bass argument and use a continuity argument as outlined in Fan and Montanari \cite[Theorem 5.1]{FM17}.
 
The more interesting direction is to exhibit at least $k-1$ negative eigenvalues; we will explicitly construct a $k-1$ dimensional subspace starting with the community vectors to witness the negative eigenvalues for $H_G(t^*)$.

Let $\Ind_c$ denote the indicator vector for the vertices belonging to community $c$ and $\Ind$ the all-ones vector. We show that every vector in the span of $\{A^{(\ell)}(\Ind_c - \frac{1}{k} \Ind)\}_{c\in[k]}$ achieves a negative quadratic form against $H_{\bG}(t^*)$, where $A^{(\ell)}$ is the $n\times n$ matrix where the $(i,j)$-th entry encodes the number of length-$\ell$ nonbacktracking walks between $i$ and $j$. This demonstrates a $(k-1)$-dimensional subspace on which the quadratic form is negative.  Formally, we show the following:.
\begin{proposition}\label{prop:new-matrix}
    Under the same setting and notations as \cref{prop:bethe-hessian}, for $\ell \ge 0$ define
    \begin{align*}
        M_{\bG,\ell}\triangleq A^{(\ell)} H_{\bG}(t^*) A^{(\ell)}.
    \end{align*}
    For $\ell = \Theta\left(\tfrac{\log(1/\eps)}{\eps}\right)$ and every $c \in [k]$, we have 
    \begin{align*}
        \ev{\Ind_c - \tfrac{1}{k} \Ind, M_{\bG,\ell}(\Ind_c - \tfrac{1}{k} \Ind)} \le -\Omega(n).
    \end{align*}
    Hence,  $M_{\bG,\ell}$ has at most $k$ negative eigenvalues and at least $k-1$ negative eigenvalues.
\end{proposition}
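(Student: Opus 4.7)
My plan is to establish the proposition in three steps: (i) the quadratic form bound on each $v_c$, (ii) the upper bound of $k$ on the number of negative eigenvalues via \Cref{prop:bethe-hessian}, and (iii) the lower bound of $k-1$ by upgrading (i) to the full span.

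For (i), set $w_j \triangleq A^{(j)} v_c$ and apply the standard nonbacktracking recursion $A_G A^{(j)} = A^{(j+1)} + (D_G - I) A^{(j-1)}$ to expand
\begin{align*}
\langle v_c, M_{\bG,\ell} v_c\rangle = t^{*2}\langle w_\ell, (D_G - I) w_\ell\rangle - t^*\langle w_\ell, w_{\ell+1} + (D_G - I) w_{\ell-1}\rangle + \|w_\ell\|^2.
\end{align*}
The \cite{BLM15} analysis of nonbacktracking walks above the KS threshold says that with high probability $v_c$ is an approximate signal eigenvector of $A^{(j)}$ with eigenvalue $\lambda^j$, where $\lambda \ge (1+\eps)\sqrt{d}$ is the relevant outlier eigenvalue of $B_{\bG}$. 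Plugging in concentration estimates of the form $\|w_j\|^2 \approx C \lambda^{2j}$, $\langle w_j, (D_G - I) w_{j'}\rangle \approx (d-1) C \lambda^{j+j'}$, and $\langle w_j, w_{j+1}\rangle \approx C \lambda^{2j+1}$ with $C = \Theta(n)$ collapses the quadratic form to $C \lambda^{2\ell} f(t^*)$, where
\begin{align*}
f(t) = (d-1) t^2 - \qty(\lambda + \tfrac{d-1}{\lambda}) t + 1 = (d-1)\qty(t - \tfrac{1}{\lambda})\qty(t - \tfrac{\lambda}{d-1}).
\end{align*}
An elementary check shows $1/\lambda < t^* < \lambda/(d-1)$ precisely when $\lambda > (1+\eps)\sqrt{d}$, which gives $f(t^*) < 0$ and $\langle v_c, M_{\bG,\ell} v_c\rangle \le -\Omega(n)$.

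For (ii), the factorization $M_{\bG,\ell} = A^{(\ell)} H_{\bG}(t^*) A^{(\ell)}$ with $A^{(\ell)}$ symmetric gives a rank-type inequality: if $V$ is a subspace on which $M_{\bG,\ell}$ is strictly negative-definite, then $\langle v, M_{\bG,\ell} v\rangle < 0$ forces $A^{(\ell)} v \ne 0$ for every nonzero $v \in V$, so $A^{(\ell)}|_V$ is injective and $A^{(\ell)}(V)$ is an equidimensional subspace on which $H_{\bG}(t^*)$ is negative-definite; by \Cref{prop:bethe-hessian}, $\dim V \le k$. For (iii), I need to upgrade the per-$v_c$ bound to negative-definiteness on the full subspace $V = \mathrm{span}\{v_c : c \in [k]\}$, which has dimension $k-1$ since $\sum_c v_c = 0$. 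Invoking the permutation symmetry of the symmetric assortative SBM, the Gram matrix $G_{c, c'} \triangleq \langle v_c, M_{\bG,\ell} v_{c'}\rangle$ concentrates around a matrix of the form $a I_k + b J_k$; the relation $\sum_c v_c = 0$ forces $G \Ind = 0$, i.e.\ $a + (k-1) b = 0$, so the restriction of $G$ to $\Ind^\perp \subset \R^k$ is $a I_{k-1}$ with $a = G_{11} - G_{12} = \tfrac{k}{k-1} G_{11} \le -\Omega(n)$, exhibiting $V$ as a $(k-1)$-dim negative-definite subspace.

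The main obstacle is rigorously establishing the signal-dominated concentration for $\|w_j\|^2$ and the bilinear terms involving $D_G - I$ and $A_G$. Above the KS threshold the signal $\lambda^{2j}$ exceeds the bulk scale $(d-1)^j$ only by a $\mathrm{poly}(\eps)$ factor, and the choice $\ell = \Theta(\log(1/\eps)/\eps)$ is calibrated precisely to amplify this signal-to-noise ratio above a constant while keeping the local-tree approximation of the sparse SBM accurate; the trace-moment machinery developed in \cite{BLM15} should do most of the heavy lifting in making these approximations rigorous.
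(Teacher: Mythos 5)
Your step (i) rests on concentration estimates that are not justified by \cite{BLM15} and, more importantly, are not even correct at the leading order. The assertion that $D_G - I$ acts like $(d-1)I$ on $w_j$ fails because degree fluctuations correlate with the walk weights $w_j$: the paper's trace-moment computation in Lemma~\ref{lemma:lifted-outlier} (via Corollary~\ref{cor:quad-form-tree-shapes}) gives $\angles*{w_\ell,(D-I)w_\ell}/n \approx \mu^{2\ell}(d\,S_\ell + 1)$, where $\mu = \lambda_2(T)d$ and $S_\ell = \sum_{s=0}^\ell (\lambda_2^2 d)^{-s}$, whereas your heuristic gives $(d-1)\mu^{2\ell}S_\ell$; these differ by a constant factor, and the extra $+1$ comes precisely from the length-$2\ell$ path shape that your approximation misses. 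Likewise, $\|w_\ell\|^2/n$ is not simply $C\lambda^{2\ell}$ with $\eps$-independent $C$: it equals $\mu^{2\ell}S_\ell$, and near the KS threshold $S_\ell \to \frac{1+\eps}{\eps}$, a geometric series reflecting that the shorter tree shapes ($s > 0$) are not negligible. As a result, your final quadratic $f(t) = (d-1)t^2 - (\lambda + \frac{d-1}{\lambda})t + 1$ has a root at $1/\lambda$ (correct) but its other root $\lambda/(d-1)$ is wrong; the qualitative conclusion that $t$ slightly above $1/\lambda$ works happens to survive, but only because of a coincidence at the shared root, not because the intermediate estimates are sound.

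The paper avoids this entirely by expanding $M_\ell(t)$ into shape matrices and invoking the combinatorial fact (\Cref{cor:only-trees-matter}, from \cite{BMR21}) that only tree shapes contribute at order $n$ to the quadratic form against lifted vectors. This gives the correct polynomial $p(\lambda)$ whose coefficients depend on $\ell$ via the geometric series $\frac{1 - (\lambda^2 d)^{-\ell-1}}{1 - (\lambda^2 d)^{-1}}$; they then choose $\delta$ and $\ell = \Theta(\log(1/\eps)/\eps)$ to control the balance of terms explicitly. Note also that your step (iii) relies on the permutation symmetry of the symmetric SBM to argue the Gram matrix is $aI + bJ$, whereas the paper gets the full $(k-1)$-dimensional negativity for free by observing that the lifted eigenvectors $\psi_i$ of $T$ are simultaneously eigenvectors of $p(T)$ — this is cleaner and extends to asymmetric block models. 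Your step (ii) is fine and is in essence the congruence argument used in Lemma~\ref{lemma:spectral-gap}.
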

Nonbacktracking powers and related constructions were previously  studied in \cite{Mas14,MNS18}, but there they take $\ell = \Theta(\log n)$, whereas we only consider constant $\ell$. Besides simplifying the analysis of the quadratic form, using constant $\ell$ is also critical for tolerating up to $\Omega(n)$ corruptions. 

As a consequence of \cref{prop:new-matrix}, the negative eigenvectors of $M_{\bG, \ell}$ are correlated with the centered community indicators $\{\Ind_c - \frac{1}{k} \Ind\}_{c \in [k]}$, while the negative eigenvectors of $H_{\bG}(t^*)$ are correlated with $\{A^{(\ell)}(\Ind_c - \frac{1}{k} \Ind)\}_{c\in[k]}$. 
The upshot is that we can directly use the negative eigenvectors of $M_{\bG, \ell}$ to recover the true communities in the absence of corruptions.

\begin{remark}
    Based on the empirical observations in \cite{KMMNSZZ13,SKZ14}, a natural hope is to directly use the Bethe Hessian for recovery. However, it turns out that the quadratic form of the centered true community indicators $\ev{(\Ind_c - \tfrac{1}{k} \Ind), H_{\bG}(t^*) (\Ind_c - \tfrac{1}{k} \Ind)}$ are actually \emph{positive} close to the KS threshold, so the same approach does not establish that the negative eigenvectors of $H_{\bG}(t^*)$ correlate with the communities. 
\end{remark}

We will now discuss how to recover the outlier eigenspace in the presence of adversarial corruptions.

\subsection{Robust PCA for sparse matrices }\label{sec:robust-eigenspace-overview}
The discussion above naturally leads to the following algorithmic problem of robust recovery: Given as input a corrupted version $\wt{M}$ of a symmetric matrix $M$, can we recover the bottom/top $r$-dimensional eigenspace of $M$?
Since the true communities are constantly correlated with the outlier eigenspace of $M = M_{\bG,\ell}$, recovering the outlier eigenspace of $M$ from its corrupted version $\wt{M} = \wt{M}_{\bG,\ell}$ is a major step towards robustly recovering communities.

The problem of robustly recovering the top eigenspace, a.k.a. robust PCA has been extensively studied, and algorithms with provable guarantees have been designed (see \cite{CLMW11}).  
However, the robust PCA problem in our work is distinct from those considered in the literature in a couple of ways.
For us, the uncorrupted matrix $M$ is sparse and both the magnitude and location of the noisy entries are adversarial.
Furthermore, for our purposes, we need not recover the actual outlier eigenspace of $M$.
Indeed, as we discuss below, it suffices to robustly recover a constant dimensional subspace which is constantly correlated with the true communities. 

We design an efficient algorithm to robustly recover such a subspace under a natural set of sufficient conditions on $M$.
Before we describe these conditions, let us fix some notation.
We will call a vector $x \in \R^n$ to be $C$-delocalized if no coordinate is large relative to others, i.e.,  $|x_i|^2 \leq \frac{C}{n} \|x\|^2$ for all $i \in [n]$. 
Delocalization has previously been used in the robust PCA literature under the name ``incoherence'' \cite{CLMW11}. 

   Let $M$ be a $n \times n$ matrix with at most $r$ negative eigenvalues.  In particular, the $r$-dimensional negative eigenspace $V_M$ of $M$ is the object of interest.
    Let $\wt{M}$ be a corrupted version of $M$, differing from $M$ in $\delta n$ coordinates.

    Given the corrupted version $\wt{M}$, a natural goal would be to recover the $r$-dimensional negative eigenspace $V_M$.  
    It is easy to see that it could be impossible to recover the space $V_M$.
    Instead, we will settle for a relaxed goal, namely, recover a slightly larger dimensional subspace $U$ that non-trivially correlates with delocalized vectors in the true eigenspace $V_M$.  
    More formally, we will solve the following problem.
\begin{restatable}{problem}{problemsubspace}\label{prob:subspace-recovery}
    Given the corrupted matrix $\wt{M}$ as input, give an efficient algorithm to output a subspace $U$ with the following properties:
    \begin{enumerate}
        \item {\bf Low dimensional.} The dimension of $U$ is $O\parens*{r}$.
        \item {\bf Delocalized.}
        The diagonal entries of its projection matrix $\Pi_U$ are bounded by $O\parens*{\frac{r}{n}}$.
        \item {\bf Preserves delocalized part of negative eigenspace.} For any $C$-delocalized unit vector $y$ such that $\angles*{y, M y} < -\Omega(1)$, we have $\angles*{y, \Pi_U y} \ge \Omega\parens*{1}$.
    \end{enumerate}
\end{restatable}

In fact, our algorithm will recover a principal submatrix of $\wt{M}$ whose eigenspace $V$ for eigenvalues less than $-\eta$ is $O(r)$-dimensional. 
Moreover, the eigenspace $V$ can be processed to another delocalized, $O(r)$-dimensional subspace $U$ that satisfies the conditions outlined above.

Although the matrix $M$ has a constant number of negative eigenvalues, its corruption $\wt{M}$ can have up to $\Omega(n)$ many. 
At first glance, it may be unclear how a constant dimensional subspace $U$ can be extracted from $\wt{M}$. 
The crucial observation is that the large negative eigenvalues introduced by the corruptions are highly localized.
Thus, we will design an iterative trimming algorithm that aims to delete rows and columns to clean up these localized corruptions.
When the algorithm terminates, it yields the $O(r)$-dimensional subspace $V$.

\medskip

\noindent {\bf Recovering a principal submatrix.}
We now describe the trimming algorithm informally and refer the reader to \cref{sec:robust-eigenspace} for the formal details.

We first fix some small parameter $\eta > 0$ and execute the following procedure, which produces a series of principal submatrices $\wt{M}^{(t)}$ for $t \ge 0$, starting with $\wt{M}^{(0)} \triangleq \wt{M}$. 
\begin{enumerate}
    \item At step $t$, if the eigenspace $V$ of eigenvalues of $\wt{M}^{(t)}$ less than $-\eta$ is $O(r)$-dimensional, we terminate the algorithm and output $V$.
    \item Otherwise, compute the projection $\Pi^{(t)}$ corresponding to the $\le -\eta$ eigenspace of $\wt{M}^{(t)}$. 
    \item Sample an index $i \in [n]$ of $\wt{M}^{(t)}$ with probability proportional to $\Pi^{(t)}_{i, i}$. 
    \item Zero out row and column $i$, and set this new principal submatrix as $\wt{M}^{(t+1)}$.
\end{enumerate}

We now discuss the intuition behind the procedure and formally prove its correctness in \cref{subsec:cleaned-up-spectrum}.
The main idea of step 3 is that one should prefer to delete highly localized eigenvectors which have relatively large negative eigenvalues.
This is reasonable because the size of the diagonal entries of $\wt{M}^{(t)}$ serve as a rough proxy for the level of delocalization. 

As a concrete illustration of this intuition, suppose that $\wt{M} = \Pi^{(0)} = -uu^\top - vv^\top$, where $u, v$ are orthogonal unit vectors.
Moreover, suppose $u$ is $C$-delocalized whereas $v = e_1$.
Then $\Pi^{(0)}_{1, 1} = 1$ whereas $|\Pi^{(0)}_{i, i}| \le C^2/n$ for $i > 1$.
Hence, deleting the first row and column of $\wt{M}$ also deletes the localized eigenvector $v$.
In general, whenever one of the eigenvectors of $\wt{M}^{(t)}$ is heavily localized on a subset of coordinates $S$, the diagonal entries in $\Pi^{(t)}_{S, S}$ are disproportionately large. 
This leads to a win-win scenario: either we reach the termination condition, or we are likely to mitigate the troublesome large localized eigenvectors.

We now discuss how we achieve the second and third guarantees in \cref{prob:subspace-recovery}.

\medskip
\noindent{\bf Trimming the subspace.}
The final postprocessing step is simple.
Let $V$ denote the eigenspace with eigenvalues less than $-\eta$ for the matrix $\wt{M}^{(T)}$ obtained at end of iterative procedure.

To ensure delocalization (condition 2 in \cref{prob:subspace-recovery}), the idea is to take its projector $\Pi_V$ and trim away the rows and columns with diagonal entry exceeding $\frac{\tau}{n}$ for some large parameter $\tau > 0$. 
The desired delocalized subspace $U$ is obtained by taking the eigenspace of the trimmed $\Pi_V$ corresponding to the eigenvalues exceeding a threshold that is $O(\eta)$. 
Since $V$ is $O(r)$-dimensional, so too is $U$.

The more delicate part is condition 3 in \cref{prob:subspace-recovery}. Namely, we must show that despite corruptions and the repeated trimming steps, $x$ remains a delocalized witness vector for $\Pi_U$, and thus has constant correlation with the subspace $U$.
The key intuition for this is that delocalized witnesses are naturally robust to adversarial corruptions, so long as the adversarial corruptions have bounded row-wise $\ell_1$ norm. 
In particular, since delocalization is an $\ell_{\infty}$ constraint, H\"{o}lder's inequality bounds the difference in value of the quadratic form using $M$ and $\wt{M}$.
In \cref{subsec:postproc-props}, we prove that for sufficiently small constant levels of corruption, $x$ is also a delocalized witness for $\wt{M}$ and $\Pi_U$.

Finally, we discuss how to round the recovered subspace $U$ into a community assignment.

\subsection{Rounding to communities}    \label{sec:rounding-overview}
At this stage, we are presented with a constant-dimensional subspace $U$ with the key feature that it is correlated with the community assignment vectors $\{\Ind_c\}_{c\in[k]}$.
Our goal is to round $U$ to a community assignment that is ``well-correlated'' with the ground truth.
In order to discuss how we achieve this goal, we must make precise what it means to be ``well-correlated'' with the ground truth.
Notice that a community assignment is just as plausible as the same assignment with the names of communities permuted, and thus counting the number of correctly labeled vertices is not a meaningful metric.

A more meaningful metric is the number of pairwise mistakes, i.e.~the number of pairs of vertices in the same community assigned to different communities or in different communities assigned to the same community.
A convenient way to express this metric is via the inner product of positive semidefinite matrices encoding whether pairs of vertices belong to the same community or not.
Given a community assignment $x$, we assign it the matrix $X$, defined as
\[
    X[i,j] =
    \begin{cases}
        1 &\text{if $x(i) = x(j)$} \\
        -\frac{1}{k-1} & \text{if $x(i)\ne x(j)$}.
    \end{cases}
\]
For the ground truth assignment $\bx$ and the output of our algorithm $\wh{x}$, we measure the correlation with $\langle\bX,\wh{X}\rangle$.
Observe that for any guess $\wh{X}$ that is oblivious to the input (for example, classifying all vertices to the same community, or blindly guessing), the value of $\langle\bX,\wh{X}\rangle$ is concentrated below $\Tilde{O}(n^{3/2})$.
On the other hand, if $\wh{X} = \bX$, then this correlation is $\Omega(n^2)$.
See \Cref{def:weak-recovery,sec:justify-metric} for how this notion generalizes to arbitrary block models, and subsumes other notions of weak-recovery defined in literature.

The projection matrix $\Pi_U$ satisfies $\angles*{\Pi_U, \bX}\ge\Omega\parens*{\norm*{\Pi_U}_F \cdot \norm*{\bX}_F } = \Omega(n)$.
We give a randomized rounding strategy according to which $\E\wh{X} \psdge c \cdot n \cdot \Pi_U$ for some constant $c > 0$.
Consequently, $\E\langle\bX,\wh{X}\rangle = cn\cdot\angles*{\Pi_U,\bX} \ge \Omega(n^2)$.

Observe that for any community assignment $x$, its matrix representation $X$ is rank-$(k-1)$, which lets us write it as $VV^{\top}$ for some $n\times (k-1)$ matrix $V$.
Here, the $i$-th row of $V$ is some vector $v_{x(i)}$ that only depends only on the community $x(i)$ where vertex $i$ is assigned.

Our rounding scheme uses $\Pi_U$ to produce an embedding of the $n$ vertices as rows of a $n\times(k-1)$ matrix $W$ whose rows are in $\{v_1,\dots,v_k\}$. 
In the community assignment $\wh{x}$ outputted by the algorithm, the $i$-th vertex is assigned to community $j$ if the $i$-th row of $W$ is equal to $v_j$.
We then show that $\E WW^{\top} \psdge c\cdot n\cdot \Pi_U$.
Since $\wh{X} = \E WW^{\top}$, we can conclude $\E\langle\bX,\wh{X}\rangle \ge \Omega(n^2)$.

\medskip
\noindent {\bf Rounding scheme.}
Our first step is to obtain an embedding of the $n$ vertices into $\R^{k-1}$ by choosing a $(k-1)$-dimensional random subspace $U'$ of $U$, then writing its projector as $M' M'^{\top}$, and choosing the embedding as the rows of $M'$: $u_1',\dots,u_n'$.
Suppose this embedding has the property that for some $c' > 0$, the rows of $\sqrt{c'n}U'$ lie inside the convex hull of $v_1,\dots,v_k$, then we can express each $u_i'$ as a convex combination $\sum_{j=1}^k p^{(i)}_{j} v_j$ and then independently sample $w_i$ from $\{v_1,\dots,v_k\}$ according to the probability distribution $(p^{(i)}_j)_{j\in[k]}$.
The resulting embedding $W$ would satisfy the property that $\E WW^{\top} \psdge c'\cdot \frac{k-1}{\mathrm{dim}(U)}\cdot n\cdot \Pi_U$, where this inequality holds since the off-diagonal entries are equal, and the diagonal of $WW^{\top}$ is larger.

The reason an appropriate scaling $c'$ exists follows from the facts that the convex hull of $v_1,\dots,v_k$ is full-dimensional and contains the origin, which we prove in \Cref{sec:rounding}.

\section{Preliminaries} \label{sec:prelim}

\noindent {\bf Stochastic block model notation.}
We write $\bone$ to denote the all-ones vector and $e_i$ to denote the $i$th standard basis vector, with the dimensions implicit. For a $k$-community block model, let $\pi \in \R^k$ denote the prior community probabilities, and $\Pi = \diag(\pi)$, so that $\pi = \Pi \bone$. The edge probabilities are parameterized by a symmetric matrix $\Model \in \R^{k \times k}$, the block probability matrix. A true community assignment $\bx : [n] \to [k]$ is sampled i.i.d. from $\pi$. Conditioned on $\bx$, an edge between $i$ and $j$ is sampled with probability $\frac{\Model_{\bx(i), \bx(j)}d}{n}$. To ensure that the average degree is $d$, we stipulate that $\Model\pi = \bone$. 

We will also use $\bX\in \R^{n\times k}$ to denote the \emph{one-hot encoding} of $\bx$, i.e., the matrix where the $t$-th row is equal to $e_{\bx(t)}$. We will sometimes find it convenient to access the columns of $\bX$, which are the indicator vectors for the $k$ different communities; we denote these by $\bone_c$ for any community $c \in [k]$. For any $f:[k]\to\R$, define the lift of $f$ with respect to the true community assignment by $\boldf^{(n)}\triangleq \sum_{c\in[k]} f(c)\cdot\Ind_c$. 

Another natural matrix that appears throughout the analysis is the Markov transition matrix $T \triangleq M\Pi$, which by detailed balance evidently has stationary distribution $\pi$. This is an asymmetric matrix, but since $T$ defines a time-reversible Markov chain with respect to $\pi$, $T$ is self-adjoint with respect to the inner product $\ev{\cdot, \cdot}_{\pi}$ in $\R^k$ induced by $\pi$. Hence $T$ is diagonalizable with real eigenvalues and its eigenvalues are $1 = \lambda_1 > \abs{\lambda_2} \ge \cdots \ge \abs{\lambda_{k}}$, with ties broken by placing positive eigenvalues before the negative ones.
Note that the normalization condition $\Model\pi = \bone$ translates into $T\bone = \bone$.

\medskip

\noindent{\bf Matrix notation.}
We use $\psdle$ and $\psdge$ to denote inequalities on matrices in the Loewner order.
For any $n\times n$ matrix $X$, we use $\Pi_{\le a}(X)$ and $\Pi_{\ge a}(X)$ to denote the projectors onto the spaces spanned by eigenvectors of $X$ with eigenvalue at most and at least $a$ respectively. We also define $X_{\le a}\triangleq \Pi_{\le a}(X) X \Pi_{\le a}(X)$ and $X_{\ge a} \triangleq \Pi_{\ge a}(X) X \Pi_{\ge a}(X)$, the corresponding truncations of the eigendecomposition of $X$. 

For $S\subseteq[n]$, we use $X_{S,S}$ to denote the matrix obtained by taking $X$ and zeroing out all rows and columns with indices outside $S$.

\medskip

\noindent {\bf Nonbacktracking matrix and Bethe Hessian.} For a graph $G$, let $B_G$ be its nonbacktracking matrix, $A_G$ be its adjacency matrix, $D_G$ be its diagonal matrix of degrees, $A_G^{(\ell)}$ be its $\ell$-th nonbacktracking power of $A_G$, and $H_G(t)\triangleq (D_G-I)t^2 - A_Gt + I$ be its Bethe Hessian matrix.
The matrix we use for our algorithm is $M_{G,\ell}(t)\triangleq A_G^{(\ell)} H_G(t) A_G^{(\ell)}$.
We will drop the $G$ from the subscript when the graph $G$ is clear from context.

\medskip

\noindent {\bf Determinants.}
Below, we collect some standard linear algebraic facts that will prove useful.
\begin{fact}    \label{fact:every-submatrix-singular}
    Suppose a matrix $X$ has a kernel of dimension $k$, then every $(n-j)\times(n-j)$ submatrix of $X$ for $j < k$ is singular.
\end{fact}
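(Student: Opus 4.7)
The plan is to reduce the singularity claim to the rank-nullity theorem and the fact that rank cannot increase under restriction. The hypothesis $\dim \ker X \ge k$ gives $\mathrm{rank}(X) \le n - k$, so I would like to argue that any $(n-j) \times (n-j)$ submatrix inherits this rank bound.

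First I would make the restriction operation explicit. For any subsets $R, C \subseteq [n]$ with $\abs{R} = \abs{C} = n - j$, the submatrix $X_{R,C}$ obtained by selecting the rows indexed by $R$ and the columns indexed by $C$ can be written as $P_R X P_C^{\top}$, where $P_R \in \R^{(n-j) \times n}$ and $P_C \in \R^{(n-j) \times n}$ are row-selection matrices. Submultiplicativity of rank then gives
\begin{align*}
    \mathrm{rank}(X_{R,C}) \;=\; \mathrm{rank}(P_R X P_C^{\top}) \;\le\; \mathrm{rank}(X) \;\le\; n - k.
\end{align*}

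Since $j < k$, we have $n - k < n - j$, so $\mathrm{rank}(X_{R,C}) < n - j$, which means the $(n-j) \times (n-j)$ matrix $X_{R,C}$ is singular. As $R$ and $C$ were arbitrary, this handles every $(n-j) \times (n-j)$ submatrix.

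This proof is essentially a one-line application of rank monotonicity under restriction, so I do not anticipate any real obstacles. The only subtlety worth flagging is the convention for ``submatrix'': I am reading it as selecting an arbitrary $(n-j) \times (n-j)$ block (not necessarily a principal one, and not the zeroing-out convention $X_{S,S}$ introduced earlier for square $n \times n$ objects), since the statement refers to $(n-j) \times (n-j)$ submatrices. The argument is insensitive to this choice: zeroing out the complementary rows and columns would produce an $n \times n$ matrix of the same rank $\le n - k$, which would again be singular for any $j \ge 0$.
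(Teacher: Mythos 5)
Your proof is correct. The paper states this as a \texttt{fact} without supplying a proof, so there is no paper argument to compare against; the rank-nullity plus rank-monotonicity-under-restriction argument you give is the standard and natural one, and the remark about the ``submatrix'' convention being immaterial is accurate.
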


\begin{fact}[Jacobi's formula]
    For any differentiable function $X:\R\to\R^{n\times n}$,
    \[
        \frac{d}{du}\det(X(u)) = \sum_{i=1}^n \sum_{j=1}^n \det\parens*{X(u)_{[n]\setminus\{i\}, [n]\setminus\{j\}}} \cdot \parens*{-1}^{i+j} \cdot \frac{d}{du} (X(u))_{i,j}.
    \]
\end{fact}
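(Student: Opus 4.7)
The plan is to view $\det$ as a polynomial in the $n^2$ entries $(X_{ij})_{i,j\in[n]}$ and apply the chain rule. Concretely, writing $\det(X(u))$ as the composition of the polynomial map $Y \mapsto \det(Y)$ with the differentiable curve $u \mapsto X(u)$, the chain rule gives
\[
    \frac{d}{du}\det(X(u)) = \sum_{i=1}^n \sum_{j=1}^n \frac{\partial \det}{\partial Y_{ij}}\bigg|_{Y=X(u)} \cdot \frac{d}{du}(X(u))_{i,j}.
\]
So it suffices to identify the partial derivative of the determinant with respect to the $(i,j)$ entry as the corresponding signed minor, i.e.\ to show that
\[
    \frac{\partial \det}{\partial Y_{ij}}\bigg|_{Y=X(u)} = (-1)^{i+j}\det\parens*{X(u)_{[n]\setminus\{i\},[n]\setminus\{j\}}}.
\]

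To verify this identity, I would fix the row index $i$ and invoke Laplace's cofactor expansion along that row:
\[
    \det(Y) = \sum_{k=1}^n (-1)^{i+k}\, Y_{ik}\, \det\parens*{Y_{[n]\setminus\{i\},[n]\setminus\{k\}}}.
\]
The key observation is that each minor $\det(Y_{[n]\setminus\{i\},[n]\setminus\{k\}})$ depends only on entries outside row $i$, and hence is independent of the variable $Y_{ij}$. Consequently, differentiating both sides with respect to $Y_{ij}$ annihilates every term in the sum except the $k = j$ term, where the coefficient is $(-1)^{i+j}\det(Y_{[n]\setminus\{i\},[n]\setminus\{j\}})$. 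Evaluating at $Y = X(u)$ and plugging back into the chain-rule display yields exactly Jacobi's formula.

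This is a standard linear-algebraic identity and there is no substantive obstacle; the only subtlety is notational, namely interpreting $X(u)_{[n]\setminus\{i\},[n]\setminus\{j\}}$ as the $(n-1)\times(n-1)$ minor obtained by removing row $i$ and column $j$ (rather than zeroing them, which would force the determinant to vanish and make the formula vacuous). One could alternatively prove the same result via the multilinearity of $\det$ in its rows---writing $\frac{d}{du}\det(X(u)) = \sum_i \det[r_1;\ldots;\dot r_i;\ldots;r_n]$ by the product rule and then Laplace-expanding each term along row $i$---but both routes reduce to the cofactor identity above.
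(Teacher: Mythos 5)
The paper states Jacobi's formula as a \emph{fact} without proof, so there is no in-text argument to compare against; your proof is the standard one (chain rule on $\det$ as a polynomial in the $n^2$ entries, then Laplace cofactor expansion along row $i$ to identify the partials as signed minors) and it is correct. Your parenthetical about the meaning of $X(u)_{[n]\setminus\{i\},[n]\setminus\{j\}}$ is a worthwhile catch: the paper's preliminaries only define $X_{S,S}$ (with the ``zero out rows and columns outside $S$'' convention), and extending that convention literally to the $S\neq T$ case here would force a zero row and make every determinant on the right-hand side vanish. The formula is of course intended with the $(n-1)\times(n-1)$ minor interpretation, as you say, and the same reading is needed in the paper's subsequent use of $\det(H(u)_{S,T})$ inside \cref{lem:jacobi-iterate} and the proof of \cref{lem:B-to-H}.
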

\begin{lemma}   \label{lem:jacobi-iterate}
    Let $X:\R\to\R^{n\times n}$ and $f:\R\to\R$ be any pair of smooth functions.
    For any $j\ge 0$, there exist functions $(q_{S,T}:\R\to\R)_{S,T\subseteq[n],~|S|=|T|\ge n-j}$ such that:
    \[
        \parens*{\frac{d}{du}}^j \bracks*{\det(X(u))\cdot f(u)} = \sum_{S,T\subseteq[n],~|S|=|T|\ge n-j} \det\parens*{X(u)_{S,T}} q_{S,T}(u).
    \]
\end{lemma}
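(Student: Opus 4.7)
The plan is to proceed by induction on $j$. For the base case $j = 0$, the statement holds trivially by taking $S = T = [n]$ and $q_{[n],[n]}(u) = f(u)$, since $\det(X(u)_{[n],[n]}) = \det(X(u))$.

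For the inductive step, assume the formula holds at level $j$, so we can write
\[
    \parens*{\frac{d}{du}}^j \bracks*{\det(X(u))\cdot f(u)} = \sum_{S,T:\,|S|=|T|\ge n-j} \det\parens*{X(u)_{S,T}} q_{S,T}(u)
\]
for some smooth scalar functions $q_{S,T}$. Differentiating once more and applying the product rule term by term, I would split each summand as
\[
    \frac{d}{du}\bracks*{\det(X(u)_{S,T}) q_{S,T}(u)} = \det(X(u)_{S,T})\,q'_{S,T}(u) + q_{S,T}(u)\cdot \frac{d}{du}\det(X(u)_{S,T}).
\]
The first piece is already of the desired form, since $|S| = |T| \ge n - j \ge n - (j+1)$, and simply contributes to the new coefficient function indexed by $(S,T)$ at level $j+1$.

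For the second piece, I would apply Jacobi's formula \emph{to the submatrix} $X(u)_{S,T}$ rather than the full matrix. This yields
\[
    \frac{d}{du}\det(X(u)_{S,T}) = \sum_{i \in S,\, j \in T} (-1)^{\sigma(i,S)+\sigma(j,T)}\,\det\parens*{X(u)_{S\setminus\{i\},\,T\setminus\{j\}}}\cdot X'(u)_{i,j},
\]
where $\sigma(i,S)$ denotes the position of $i$ within $S$ (the precise sign does not matter; only that we get a finite linear combination of submatrix determinants with smooth coefficients). Each resulting submatrix has row and column sets of size $|S|-1 = |T|-1 \ge n - (j+1)$, so collecting these contributions into the coefficient functions $q_{S',T'}$ at level $j+1$ completes the induction.

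There is no real obstacle here; the proof is a mechanical induction whose only subtlety is bookkeeping the sizes of the index sets and verifying that the Jacobi derivative applied to a submatrix only produces determinants of strictly smaller submatrices. The key identity being repeatedly invoked is simply Jacobi's formula generalized to arbitrary square submatrices, which follows from the same cofactor expansion applied inside $X(u)_{S,T}$.
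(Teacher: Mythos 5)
Your proof is correct and takes exactly the same approach as the paper, whose own proof is the two-sentence remark that the claim follows by induction with Jacobi's formula supplying the inductive step. Your writeup simply fills in the details the paper elides, in particular the (correct) observation that Jacobi's formula must be applied to each square submatrix $X(u)_{S,T}$ rather than to the full matrix, and that the resulting minors have index sets one size smaller, keeping the invariant $|S|=|T|\ge n-(j+1)$.
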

\begin{proof}
    We prove this by induction.
    This is clearly true when $j = 0$, and the induction step is a consequence of Jacobi's formula.
\end{proof}

\medskip

\noindent {\bf Kesten-Stigum threshold.}
We say that a stochastic block model is above the Kesten--Stigum (KS) threshold if $\lambda_2(T)^2 d > 1$, where recall that $\lambda_2$ is the second largest eigenvalue in absolute value.
We use $r$ to denote the number of eigenvalues of $T$ equal to $\lambda_2(T)$.

\section{Recovery algorithm}\label{sec:recovery-algo}
Let $\bG$ be the graph drawn from $\SBM_n(\Model,\pi,d)$, and let $\wt{\bG}$ denote the input graph which is $\bG$ along with an arbitrary $\delta n$ adversarial edge corruptions.
Our algorithm for clustering the vertices into communities proceeds in multiple phases, described formally below.

The first phase preprocesses the graph by making it bounded degree and constructs an appropriate matrix $M$ associated to the graph.
The second phase cleans up $M$ and uses a spectral algorithm to robustly recover a subspace containing nontrivial information about the true communities.
Finally, the third phase rounds the subspace to an actual community assignment.

\noindent\rule{16cm}{0.4pt}
\begin{algorithm}
$\wt{\bG}$ is given as input, and a community assignment to the vertices is produced as output.

\noindent {\bf Phase 1: Deletion of high-degree vertices.}
For some large constant $B > 0$ to be specified later, we perform the following truncation step: delete all edges incident on vertices with degree larger than $B$ in $\wt{\bG}$. This forms a graph $\wt{\bG}_B$, with corresponding adjacency matrix $A_{{\wt{\bG}}_B} \in \R^{\abs{V(\bG)} \times \abs{V(\bG)}}$. To avoid confusion, we preserve the vertex set $V(\bG)$, but it should be understood that the truncated vertices do not contribute to the graph.

For technical considerations, we also define a (nonstandard) truncated diagonal matrix 
\begin{equation}
\ol{D}_{{\wt{\bG}}_B} \triangleq \diag\left(\deg(v)\bone[\deg(v) \le B]\right)_{v \in V(\bG)}
\end{equation} 
With this, we can then define the truncated Bethe Hessian matrix 
\begin{equation}
\ol{H}_{\wt{\bG}_B}(t) \triangleq I - tA_{\wt{\bG}_B} + t^2(\ol{D}_{\wt{\bG}_B} - I).
\end{equation} 
Finally, the input matrix to the next phase is 
\begin{equation}
\ol{M}_{\wt{\bG}_B, \ell}(t) \triangleq A_{\wt{\bG}_B}^{(\ell)} \ol{H}_{\wt{\bG}_B}(t) A_{\wt{\bG}_B}^{(\ell)},
\end{equation}
where we also choose the value of $t$ later.

\begin{remark}
To reduce any chance of confusion with the notation, we reiterate our conventions for distinguishing between different versions of various matrices. If a graph is truncated at level $B$, then we add a subscript $B$. We use tilde to denote that we are working with a corrupted graph. Finally, we use overline to denote that we are working with the nonstandard version of the Bethe Hessian after truncation.  

For example, the matrix $\ol{D}_{\bG_B}$ no longer corresponds to the degree matrix of $\bG_B$, since as stated it still counts edges from truncated vertices. This is done to simplify the analysis of the spectrum of $\ol{M}_{\bG_B, \ell}(t)$ but we do not believe it to be essential. 
\end{remark}

\medskip

\noindent {\bf Phase 2: Recovering a subspace with planted signal.}
Define $M\triangleq \ol{M}_{\wt{\bG}_B,\ell}$.
We give an iterative procedure to ``clean up'' $M$ by deleting a few rows and columns.
We then run a spectral algorithm on the cleaned up version of $M$.

Let $\eta > 0$ be a small constant we choose later, and let $K \triangleq B^{2\ell+3}$.
\begin{enumerate}
    \item Define $M^{(0)}$ as $M$.
    Let $t$ be a counter initialized at $0$, and $\Phi(X)$ as the number of eigenvalues of $X$ smaller than $-\eta$.
    \item While $\Phi(M^{(t)}) > \frac{2K}{\eta}r$: compute the projection matrix $\Pi^{(t)}\triangleq \Pi_{\le -\eta}(M^{(t)})$, choose a random $i\in[n]$ with probability $\frac{\Pi^{(t)}_{i,i}}{\Tr\parens*{\Pi^{(t)}}}$, and define $M^{(t+1)}$ as the matrix obtained by zeroing out the $i$-th row and column of $M^{(t)}$.
    Then increment $t$.
\end{enumerate}
Let $T$ be the time of termination and $\tau > 0$ be a large enough constant we choose later.
We compute $\Pi^{(T)}$, and then compute as the set $S$ of all indices $i$ where $\Pi^{(T)}_{i,i} \le \frac{\tau}{n}$.
Define $\wt{\Pi}$ as $\parens*{\Pi^{(T)}_{S,S}}_{\ge \eta/K}$, and compute its span $U$, where we recall that $\parens*{X}_{\ge a}$ denotes the truncation of the eigendecomposition of $X$ for eigenvalues at least $a$.
This subspace $U$ is passed to the next phase.

\medskip

\noindent {\bf Phase 3: Rounding to a community assignment.}
Define $r'$ as $r-1$ when $\lambda_2(T) > 0$ and as $r$ when $\lambda_2(T) < 0$.
We first obtain an $r'$-dimensional embedding of the vertices into $\R^{r'}$.
Compute a random $r'$-dimensional subspace $U'$ of $U$, and take an orthogonal basis $u_1',\dots,u'_{r'}$.
Place these vectors as a column of a matrix $M'$ in $\R^{n\times r'}$.
The rows of $M'$ gives us the desired embedding.

On the other hand, we use the natural embedding of the $k$ communities into $\R^{r'}$ induced by the $r'$ nontrivial right eigenvectors corresponding to the eigenvalue $\lambda_2(T)$: $(\psi_i)_{1\le i\le r'}$ of $T$.
In particular, let $\Psi_{r'} \triangleq \mqty[\psi_1 & \cdots & \psi_{r'}] \in \R^{k \times r'}$ be the matrix of these $r'$ nontrivial eigenvectors of $T$.
Then the row vectors $\phi_1, \ldots, \phi_k \in \R^{r'}$ form the desired embedding of communities.

In the rounding algorithm, we first find the largest $c$ such that all the rows of $c\cdot M'$ lie in the convex hull of $\phi_1, \ldots, \phi_k$. 
We can find such a value of $c$ if it exists by solving a linear program, and this $c > 0$ is guaranteed to exist by \Cref{lemma:convex-hull}.
Then, for each $i \in [n]$ we express each row of $c \cdot M'$ as a convex combination $\sum_{j=1}^k w_{i}^{(j)} \phi_j$ for nonnegative $w_i^{(j)}$ such that $\sum_{j=1}^k w_i^{(j)} = 1$.
Finally, we assign vertex $i$ to community $j$ with probability $w_i^{(j)}$, and output the resulting community assignment $\wh{\bx}$.
\noindent\rule{16cm}{0.4pt}
\end{algorithm}

\begin{remark}
    Scaling the rows of $M'$ so as to lie in the convex hull of $\{\phi_j\}_{j \in [k]}$, is reminiscent of the rounding algorithm of Charikar \& Wirth \cite{CW04} to find a cut of size $\frac{1}{2} + \Omega(\frac{\eps}{\log(1/\eps)})$ in a graph with maximum cut of size $\frac{1}{2} + \eps$: in their algorithm, they scale $n$ scalars to lie in the interval $[-1,1]$.
\end{remark}

\subsection{Analysis of algorithm}\label{sec:algo-analysis}
Our goal is to prove that the output $\wh{\bx}$ of our algorithm is well-correlated with the true community assignment $\bx$.
We begin by defining a notion of weak recovery for $k$-community stochastic block models.
\begin{restatable}[Weak recovery]{definition}{weakrecovery}\label{def:weak-recovery}
    Let $\Psi \triangleq \mqty[\psi_2 & \cdots & \psi_{k}] \in \R^{k \times (k-1)}$ be the matrix of the top-$(k-1)$ nontrivial eigenvectors of the transition matrix $T$ of a stochastic block model.
    
    For $\rho > 0$, we say that a (randomized) algorithm for producing community assignments $\wh{\bX} \in \R^{n \times k}$ achieves $\rho$-weak recovery if 
    \begin{align*}
        \ev{\E \wh{\bX}_{\Psi} , \bX_{\Psi}} \ge \rho \norm*{ \E \wh{\bX}_{\Psi}}_F \norm*{\bX_{\Psi}}_F,
    \end{align*}
    where $B_{\Psi} \triangleq (B \Psi) (B \Psi)^\top$ for a matrix $B \in \R^{n \times k}$.
\end{restatable}
\begin{remark}
    Intuitively, this notion is capturing the ``advantage'' of the algorithm over random guessing, or simply outputting the most likely community. 
    See \cref{sec:justify-metric} for a more detailed discussion of this notion, how it recovers other previously considered measures of correlation in the case of the symmetric block model, and why it is meaningful.
    In particular, it implies the notion of weak recovery used in \cite{DdNS22}.
\end{remark}

Our main guarantee is stated below.
\begin{theorem} \label{thm:formal-thm}
    For any SBM parameters $(\Model,\pi,d)$ above the KS threshold, there is a constant $\rho(\Model,\pi,d) > 0$ such that the above algorithm takes in the corrupted graph $\wt{\bG}$ as input and outputs $\wh{\bx}$ achieving $\rho(\Model,\pi,d)$-weak recovery with probability $1-o_n(1)$ over the randomness of $\bG\sim\SBM_n(\Model,\pi,d)$.
\end{theorem}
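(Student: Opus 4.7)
The proof combines three main ingredients, following the three phases of the algorithm. First I would argue that Phase~1 reduces the problem to a robust PCA instance of the form studied in \Cref{prob:subspace-recovery}. For a constant $B$ chosen large relative to $d$, standard concentration for $\bG \sim \SBM_n(\Model,\pi,d)$ ensures only $o(n)$ vertices have degree exceeding $B$, so the clean truncated graph $\bG_B$ differs from $\bG$ in only $o(n)$ edges. The adversary's $\delta n$ corruptions introduce at most $O(\delta n)$ further vertices with corrupted neighborhoods. Since $\ell$ is a constant and $\wt{\bG}_B$ has maximum degree $B$, edge-level perturbations propagate to bounded row-$\ell_1$ perturbations of $M := \ol{M}_{\wt{\bG}_B,\ell}(t^*)$, affecting only $O(\delta n)$ rows and columns, compared to the clean matrix $M_{\bG,\ell}(t^*)$.

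Next I would establish that the centered community indicators $y_c := \Ind_c - \tfrac{1}{k}\Ind$ serve as delocalized negative witnesses for $M$. By \Cref{prop:new-matrix}, after normalizing to unit length, each $y_c$ is $O(1)$-delocalized (its coordinates are $\Theta(1/\sqrt{n})$ once $\bx$ concentrates on $\pi$) and satisfies $\langle y_c, M_{\bG,\ell}(t^*) y_c\rangle \le -\Omega(1)$. H\"{o}lder's inequality, applied to the bounded row-$\ell_1$ perturbation from Phase~1, transfers this to $\langle y_c, M y_c\rangle \le -\Omega(1)$ for $\delta$ sufficiently small. I then invoke \Cref{prob:subspace-recovery} to conclude that Phase~2's output $U$ is $O(r)$-dimensional, $O(r/n)$-delocalized on its diagonal, and satisfies $\langle y_c, \Pi_U y_c\rangle \ge \Omega(1)$ for each $c$. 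Summing over $c$ and translating to the $\bX_\Psi$ notation of \Cref{def:weak-recovery} yields $\langle \Pi_U, \bX_\Psi\rangle \ge \Omega(n)$, since the $y_c$ are precisely the image of the standard basis of $\R^k$ under the projection orthogonal to $\pi$ in $\R^n$, which (up to a known invertible change of basis depending only on $\Model,\pi,d$) is exactly the span of the columns of $\bX\Psi$.

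For Phase~3, I follow the outline of \Cref{sec:rounding-overview}. Drawing a random $r'$-dimensional subspace $U'$ of $U$ and writing its projector as $M'M'^{\top}$ furnishes an embedding of vertices, which \Cref{lemma:convex-hull} guarantees fits inside the convex hull of $\{\phi_j\}_{j\in[k]}$ after scaling by some constant $c > 0$ depending only on $(\Model,\pi,d)$. The independent per-vertex rounding then produces $\wh{\bx}$ with $\E\wh{\bX}_\Psi \psdge c n \cdot \Pi_U$, since the diagonal entries of $\E \wh{\bX}_\Psi$ dominate, and the off-diagonal expectations match $cn \cdot \Pi_U$ exactly. Combining with the previous paragraph gives $\langle \E\wh{\bX}_\Psi, \bX_\Psi\rangle \ge cn \cdot \langle \Pi_U, \bX_\Psi\rangle = \Omega(n^2)$. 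Since $\|\E\wh{\bX}_\Psi\|_F$ and $\|\bX_\Psi\|_F$ are both $O(n)$ (as both arise from one-hot encodings whose $\Psi$-images have bounded entries), this yields $\rho$-weak recovery for a constant $\rho = \rho(\Model,\pi,d) > 0$.

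The main obstacle is Step~3: arguing that the iterative trimming procedure of Phase~2 terminates with an $O(r)$-dimensional subspace while both removing the $\Omega(n)$ spurious negative eigenvalues introduced by corruption \emph{and} preserving the $k-1$ delocalized witnesses $y_c$. The subtlety is that the trimming samples coordinates proportional to projector diagonal entries, so one must argue that the highly localized corrupted eigenvectors are preferred for deletion in each step, while delocalized witnesses (whose mass is spread uniformly across coordinates) survive both the trimming and the subsequent diagonal-truncation postprocessing. This delicate potential-function argument is exactly the content of \Cref{sec:robust-eigenspace} and underlies the guarantees of \Cref{prob:subspace-recovery} invoked in Step~3.
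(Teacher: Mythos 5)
Your proof plan is structurally the same as the paper's: Phase~1 reduces to a bounded row-$\ell_1$ perturbation (the paper's \Cref{lem:corruption-M}), Phase~2 invokes the robust subspace recovery guarantees (\Cref{thm:eigenspace-recovery}, specialized as \Cref{lem:subspace-output}), and Phase~3 rounds via the convex-hull argument (\Cref{lemma:convex-hull}), with the final chain $\langle\E\wh{\bX}_\Psi,\bX_\Psi\rangle \ge \langle W_\Psi,\bX_\Psi\rangle = c^2\langle\Pi_{U'},\bX_\Psi\rangle$ and averaging over the random choice of $U'$.

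The one place where you deviate, and where there is a genuine imprecision, is your choice of delocalized negative witnesses. You take $y_c = \Ind_c - \tfrac{1}{k}\Ind$ for every $c\in[k]$ and cite \Cref{prop:new-matrix}. That proposition, however, belongs to the technical overview, which is explicitly restricted to the \emph{symmetric} block model; in that special case the centered indicators coincide (up to scaling) with the lifts of the eigenvectors of $T$ at $\lambda_2$. For a general block model, the paper's actual argument (\Cref{lemma:lifted-outlier}, \Cref{lemma:truncation}, \Cref{lem:subspace-output}) uses as witnesses the lifts $\bpsi_i^{(n)}$ of the $r$ eigenvectors of $T$ with eigenvalue $\lambda_2$, where $r$ is the multiplicity of $\lambda_2$ and may be strictly less than $k-1$. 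The polynomial $p(\lambda)$ in the proof of \Cref{lemma:lifted-outlier} is only shown to be negative at $\lambda=\lambda_2$, so a generic centered indicator --- which has components along eigenvectors of $T$ with smaller $|\lambda_j|$ --- need not achieve a negative quadratic form, and the ``invertible change of basis'' you invoke does not exist (it would have to map a $(k-1)$-dimensional space onto an $r$-dimensional one). The fix is exactly what the paper does: replace the $y_c$'s with unit vectors in $\mathrm{span}\{\bpsi_1^{(n)},\dots,\bpsi_r^{(n)}\}$ (which are $O(1/\sqrt{\pi_{\min}})$-delocalized since $\|\psi_i\|_\infty \le 1/\sqrt{\pi_{\min}}$), use $\langle\Pi_U,\bX_\Psi\rangle \ge \sum_{j:\lambda_j=\lambda_2}\langle\bpsi_j^{(n)},\Pi_U\bpsi_j^{(n)}\rangle \ge \Omega(n)$, and round using $\Psi_{r'}$ with $r'\in\{r-1,r\}$ depending on $\sgn(\lambda_2)$, as in Phase~3. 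With that substitution the rest of your argument goes through unchanged.
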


To prove the above theorem it suffices to analyze $\angles*{(\E \wh{\bX})_{\Psi}, \bX_{\Psi}}$. To see why, let us first set up some notation. For each vertex $i$, we obtain a simplex vector $w_i \in \R^{k}$, which we can stack as rows into a weight matrix $W \in \R^{n \times k}$. We then independently round each vertex so that $\E \wh{\bX} = W$. 

To analyze our rounding scheme, first note that $\E[\wh{\bX}_{\Psi}]$ is equal to $W_{\Psi}$ off of the diagonal and is larger than $W_{\Psi}$ on the diagonal, and thus $\E[\wh{\bX}_{\Psi}]\psdge W_{\Psi}$.
Since $\bX_{\Psi}$ is positive semidefinite, $\angles*{\E[\wh{\bX}_{\Psi}], \bX_{\Psi}} \ge \angles*{W_{\Psi}, \bX_{\Psi}}$.
Thus, it suffices to lower bound $\angles*{W_{\Psi}, \bX_{\Psi}}$. By construction,
$W_{\Psi}$ is equal to $c^2 \cdot \Pi_{U'}$, where recall that $U'$ was a random $r'$-dimensional subspace of $U$, the output of Phase 2 of the algorithm.
Thus,
\[
    \E_{U'} W_{\Psi} = c^2 \cdot \E_{U'} \Pi_{U'} = c^2 \cdot\frac{r'}{\dim(U)} \Pi_U.
\]

In \Cref{lem:subspace-output}, we prove that when $(\Model,\pi,d)$ are above the KS threshold, $\angles*{\Pi_U, \bX_{\Psi}} \ge \Omega(1)\cdot \norm*{\Pi_U}_F \cdot \norm*{\bX_{\Psi}}_F$ and $\dim(U) = O(1)$.
In \Cref{lemma:convex-hull}, we show that when $\diag(\Pi_U) = O(1/n)$, we can take $c = \Omega(\sqrt{n})$; this delocalization condition is guaranteed by phase 2 of the algorithm.
Combined with the fact that $\norm*{\wh{\bX}_{\Psi}}_F = O(n)$, it follows that $\angles*{\E \wh{\bX}_{\Psi}, \bX_{\Psi}} \ge \Omega(1)\cdot\norm*{\wh{\bX}_{\Psi}}_F\cdot\norm*{\bX_{\Psi}}_F$, which establishes \Cref{thm:formal-thm}.

\section{The number of outlier eigenvalues} \label{sec:outlier-eigs}
In this section, we pinpoint the number of outlier eigenvalues of $M_{\bG, \ell}(t) \triangleq A_{\bG}^{(\ell)} H_{\bG}(t) A_{\bG}^{(\ell)}$. 
Specifically, we will show the following theorem.
\begin{theorem} \label{thm:main-outlier}
    Let $(\mathrm{M},\pi,d)$ be a model and let $T$ be its corresponding Markov transition matrix.
    Let $r$ be multiplicity of $\lambda_2(T)$.
    There exists constants $\ell,\, \delta$ and $\alpha$, such that for $t = \frac{1+\delta}{\lambda_2{d}}$, with high probability over $\bG\sim\SBM_n(\mathrm{M},\pi,d)$, $M_{\bG, \ell}(t)$ has at least $r$ negative eigenvalues of magnitude at most $-\alpha$, and at most $r+1$ negative eigenvalues.
\end{theorem}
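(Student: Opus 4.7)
The plan is to prove the upper bound $(r+1)$ and the lower bound $(r)$ separately, using the Ihara--Bass formula for the former and an explicit construction of negative-definite witnesses for the latter.

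\emph{Upper bound via Ihara--Bass.} First observe that the number of negative eigenvalues of $M_{\bG, \ell}(t)$ is bounded by the number of negative eigenvalues of $H_{\bG}(t)$: if $W$ is a subspace on which $\angles*{x, M_{\bG, \ell}(t) x}$ is strictly negative, then $W \cap \ker\parens*{A^{(\ell)}} = \{0\}$ (else the quadratic form vanishes), so $A^{(\ell)} W$ is a subspace of the same dimension on which $H_{\bG}(t)$ has strictly negative quadratic form. It then suffices to bound the negative eigenvalues of $H_{\bG}(t^*)$ by $r+1$. The Ihara--Bass identity $\det(I - tB_G) = (1-t^2)^{|E|-|V|} \det(H_G(t))$ shows that zeros of $t \mapsto \det(H_G(t))$ coincide with reciprocals of eigenvalues of $B_G$ (for $t \ne \pm 1$). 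A continuity argument in $t$, in the spirit of Fan--Montanari \cite{FM17}, then identifies the count of negative eigenvalues of $H_{\bG}(t^*)$ with the count of real eigenvalues of $B_{\bG}$ exceeding $1/t^*$. Above the KS threshold, the Bordenave--Lelarge--Massouli\'e theorem produces one Perron eigenvalue near $d$, exactly $r$ eigenvalues near $\lambda_2 d$, and all remaining eigenvalues of $B_{\bG}$ bounded by $\sqrt{d}(1 + o(1))$ in absolute value. Choosing $\delta$ small enough (depending on the gap between $|\lambda_2|$ and $|\lambda_{r+2}|$ of $T$) ensures that only the top $r+1$ eigenvalues of $B_{\bG}$ clear the threshold $1/t^*$.

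\emph{Lower bound by explicit witnesses.} Let $\psi_2, \ldots, \psi_{r+1}$ be an orthonormal basis (in $\ev{\cdot,\cdot}_\pi$) for the $\lambda_2$-eigenspace of $T$, and define lifts $\psi_i^{(n)} \triangleq \sum_{c\in[k]} \psi_i(c)\, \Ind_c \in \R^n$. These $r$ lifts are linearly independent with high probability, by concentration of the community counts $\abs*{\bx^{-1}(c)} \approx \pi(c) n$. For any unit vector $v$ in their span, I would expand
\begin{align*}
    \angles*{v, M_{\bG, \ell}(t^*) v} = \norm*{A^{(\ell)} v}^2 - t^*\angles*{A^{(\ell)} v, A_{\bG} A^{(\ell)} v} + (t^*)^2\, \angles*{A^{(\ell)} v, (D_{\bG} - I) A^{(\ell)} v},
\end{align*}
use the nonbacktracking recursion $A_{\bG}\, A^{(\ell)} = A^{(\ell+1)} + (D_{\bG}-I) A^{(\ell-1)}$ to expand the middle term, and invoke BLM-style concentration of $\angles*{\psi^{(n)}, A^{(j)} \psi^{(n)}}$ around $n (\lambda_2 d)^j \norm*{\psi}_\pi^2$ together with degree concentration $\angles*{\psi^{(n)}, D_{\bG} \psi^{(n)}} \approx d\, n \norm*{\psi}_\pi^2$. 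Setting $s \triangleq t^* \lambda_2 d = 1+\delta$, the leading order of the quadratic form becomes
\begin{align*}
    n \norm*{\psi}_\pi^2\, (\lambda_2 d)^{2\ell}\, (1-s) \bracks*{1 - \frac{(d-1)s}{(\lambda_2 d)^2}},
\end{align*}
which is strictly negative: $1 - s = -\delta < 0$, and the bracketed factor is positive above the KS threshold since $\lambda_2^2 d > 1$ forces $(d-1)/(\lambda_2 d)^2 < 1$, so the full bracket stays bounded away from $0$ for small $\delta$. This yields $\angles*{v, M_{\bG, \ell}(t^*) v} \le -\alpha n$ for some constant $\alpha > 0$, and Courant--Fischer gives the desired $r$ negative eigenvalues of magnitude at least $\alpha$.

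\emph{Main obstacle.} The technical crux is establishing BLM-style concentration of the walk quantities $\angles*{\psi^{(n)}, A^{(j)} \psi^{(n)}}$ and their degree-weighted analogues around the $(\lambda_2 d)^j$ leading order, near the KS threshold where the signal-to-noise ratio is only $(\lambda_2 \sqrt{d})^j$. I would do this via moment computations: writing each walk quantity as a sum over combinatorial walk structures on $\bG$, isolating the dominant tree contributions (which match the broadcast process on the multi-type Galton--Watson tree with transition matrix $T$), and controlling cycle corrections via trace bounds. The constant $\ell$ is chosen large enough (depending on $\delta$ and the KS gap $\lambda_2^2 d - 1$) so the signal dominates these residual errors, and $\alpha$ is extracted from the explicit leading coefficient displayed above.
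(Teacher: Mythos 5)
Your overall strategy is the same as the paper's: upper bound the outlier count via Ihara--Bass together with the Bordenave--Lelarge--Massouli\'{e} characterization of the spectrum of $B_{\bG}$, and establish the lower bound by computing the quadratic form of $M_{\bG,\ell}(t)$ against the lifts $\bpsi_i^{(n)}$ of the $\lambda_2$-eigenvectors. Your upper bound argument (reducing from $M_{\bG,\ell}$ to $H_{\bG}$ via the image of $A^{(\ell)}$, then Ihara--Bass plus a continuity count, then BLM) is essentially the paper's \cref{lem:B-to-H}.

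However, the lower-bound computation has a genuine gap. You treat $A^{(\ell)}$ as acting approximately as the scalar $(\lambda_2 d)^{\ell}$ on $\bpsi^{(n)}$, so that $\norm{A^{(\ell)}v}^2 \approx n(\lambda_2 d)^{2\ell}\norm{\psi}_\pi^2$ and the analogous cross-terms factor similarly, arriving at the expression $n\norm{\psi}_\pi^2(\lambda_2 d)^{2\ell}(1-s)\bigl[1 - \frac{(d-1)s}{(\lambda_2 d)^2}\bigr]$. But $A^{(\ell)}A^{(\ell)} \neq A^{(2\ell)}$: when the quadratic form $\angles*{\bpsi^{(n)}, A^{(\ell)}A^{(\ell)}\bpsi^{(n)}}$ is expanded combinatorially, the concatenation of two length-$\ell$ nonbacktracking walks can backtrack for $s$ steps at the midpoint for any $0 \le s \le \ell$, which contributes a whole family of tree shapes, not just the length-$2\ell$ path. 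Summing over $s$ gives a geometric factor $\sum_{s=0}^{\ell}(\lambda_2^2 d)^{-s}$ multiplying the naive leading order; the analogous correction appears in the $A^{(\ell)}AA^{(\ell)}$ and $A^{(\ell)}(D-I)A^{(\ell)}$ terms. Writing $\lambda_2^2 d = 1+\eps$, this factor is $\Theta(1/\eps)$ as $\ell$ grows, and it is not absorbed by adjusting constants: after this correction, the bracketed expression becomes (up to lower-order terms) $\frac{(1+\delta)\delta}{\lambda_2^2 d^2} + \frac{\delta^2}{\eps} - \delta$, which is negative only when $\delta$ is chosen small compared to $\eps$. Your formula, by contrast, is negative for every small $\delta > 0$ independent of $\eps$, which misrepresents the constraint and hides the fact that $\ell$ must be taken of order $\log(1/\eps)/\eps$ to approach the limit of the geometric sum. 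In short, the ``BLM-style concentration of $\angles*{\psi^{(n)}, A^{(j)}\psi^{(n)}}$'' you invoke is indeed valid, but your target quantities are $\angles*{\psi^{(n)}, A^{(\ell)}A^{(j)}\psi^{(n)}}$ and degree-weighted analogues, which require the full shape-matrix expansion the paper carries out in \cref{lemma:lifted-outlier} (via \cref{obs:walks-counted-per-term,obs:tree-shape-structure,cor:quad-form-tree-shapes}) rather than a scalar approximation for $A^{(\ell)}$ on $\bpsi^{(n)}$.
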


We first establish the upper bound on the number of outlier eigenvalues, and then the lower bound.
To establish the theorem, we will harness the Ihara-Bass formula that connects $H_G(t)$ to the non-backtracking walk matrix, and the work of Bordenave et al.~\cite{BLM15} on the spectrum of non-backtracking walk matrix $B$ in stochastic block models.  We begin by recalling these results from the literature.
\begin{theorem}[Ihara--Bass formula]\label{thm:ihara-bass}
    For any simple, undirected, unweighted graph $G$ on $n$ vertices and $m$ edges, we have 
    \begin{align*}
        \det(I - tB_G) = \det(H_G(t))(1-t^2)^{m-n}
    \end{align*}
\end{theorem}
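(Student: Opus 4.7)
The plan is Bass's algebraic approach: exhibit a single $(n + 2m) \times (n + 2m)$ block matrix whose determinant, evaluated by Schur complement in two different ways, produces the two sides of the identity. First I would set up the linear-algebraic apparatus on directed edges. Let $S, T \in \{0,1\}^{n \times 2m}$ be the incidence matrices with $S_{v,e} = \mathbf{1}[v = \mathrm{tail}(e)]$ and $T_{v,e} = \mathbf{1}[v = \mathrm{head}(e)]$, and let $J \in \{0,1\}^{2m \times 2m}$ be the edge-reversal involution. A brief verification gives five identities: $S T^\top = A_G$, $T T^\top = D_G$, $S J = T$, $J^2 = I_{2m}$, and most importantly $B_G = T^\top S - J$ (the first term records ``$\mathrm{head}(e) = \mathrm{tail}(f)$'' and the $J$ correction kills the single backtracking continuation).

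Next, I would consider
\[
    N(t) \;\triangleq\; \begin{pmatrix} I_n & t\,S \\ T^\top & I_{2m} + tJ \end{pmatrix}
\]
and evaluate $\det(N(t))$ two ways. Using the top-left block as pivot, the Schur complement is $(I_{2m} + tJ) - T^\top (tS) = I_{2m} - t(T^\top S - J) = I_{2m} - tB_G$, so $\det(N(t)) = \det(I_{2m} - tB_G)$. Using the bottom-right block as pivot (valid whenever $t^2 \neq 1$, since $(I + tJ)(I - tJ) = (1-t^2)I_{2m}$), one gets $\det(N(t)) = \det(I_{2m} + tJ) \cdot \det\bigl(I_n - tS(I_{2m} + tJ)^{-1} T^\top\bigr)$. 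Here $J$ is a signed permutation with eigenvalues $\pm 1$ each of multiplicity $m$, giving $\det(I_{2m} + tJ) = (1-t^2)^m$. Substituting $(I_{2m} + tJ)^{-1} = (1-t^2)^{-1}(I_{2m} - tJ)$ and applying $S T^\top = A_G$ together with $SJT^\top = TT^\top = D_G$ yields
\[
    S(I_{2m} + tJ)^{-1} T^\top \;=\; \tfrac{1}{1-t^2}\bigl(A_G - t D_G\bigr),
\]
so the bracketed factor simplifies to $(1-t^2)^{-1}\bigl((1-t^2)I_n - tA_G + t^2 D_G\bigr) = (1-t^2)^{-1} H_G(t)$. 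Thus $\det(N(t)) = (1-t^2)^{m-n}\det(H_G(t))$.

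Matching the two evaluations gives $\det(I_{2m} - tB_G) = (1-t^2)^{m-n}\det(H_G(t))$ for all $t$ with $t^2 \neq 1$; since both sides (after clearing the possibly negative exponent as $(1-t^2)^n \det(I - tB_G) = (1-t^2)^m \det(H_G(t))$) are polynomials in $t$, continuity extends the identity to all $t \in \mathbb{R}$. There is no serious obstacle: once the block matrix is in hand, the proof is mechanical, resting entirely on the five incidence identities and the geometric inversion of $I + tJ$. The only substantive step is guessing the correct block matrix, and this is naturally motivated by the factorization $B_G = T^\top S - J$, which forces the off-diagonal blocks to involve $S$ and $T^\top$ and the bottom-right block to involve $J$.
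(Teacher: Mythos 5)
Your proof is correct and complete: the five incidence identities all check out (in particular $B_G = T^\top S - J$ matches the paper's definition of the nonbacktracking matrix, and $(1-t^2)I_n - tA_G + t^2D_G$ is exactly $H_G(t)$), and the two Schur-complement evaluations of $\det N(t)$ together with the polynomial-identity argument at $t = \pm 1$ give the formula. The paper itself states the Ihara--Bass formula as a result recalled from the literature and offers no proof, so there is nothing to compare against; what you have written is the standard Bass block-matrix argument, correctly executed.
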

For a matrix $X$, we use $\lambda_i(X)$ to denote its $i$-th largest eigenvalue when sorted by absolute value.

\begin{theorem}[Nonbacktracking spectrum of SBM \cite{BLM15}]   \label{thm:BLM}
    For $\bG \sim \SBM_n(\mathrm{M}, \pi, d)$:
    \[
        \lambda_i(B_{\bG}) = \lambda_i(T)\cdot d \pm o_n(1)
    \]
    for all $1\le i\le k$ such that $\abs{\lambda_i(T)} > \frac{1}{\sqrt{d}}$ with high probability.
    All the remaining eigenvalues of $B_{\bG}$ are at most $\sqrt{d}\cdot(1+o_n(1))$ in magnitude with high probability.
\end{theorem}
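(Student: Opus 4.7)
The plan is to prove the two assertions of the theorem -- the placement of the outlier eigenvalues and the bulk bound -- by completely different techniques. Outliers will be exhibited by explicitly constructing approximate eigenvectors from the planted community structure, while the bulk bound will come from a trace-method argument exploiting the local tree-like geometry of a sparse random graph. Throughout, the guiding heuristic is that the local neighborhood of a typical vertex of $\bG$ is, in total-variation, $o_n(1)$-close to the first few generations of a multi-type Galton--Watson branching process whose offspring mean matrix is exactly $dT$.

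For the outlier eigenvalues, my starting point is this branching process approximation. For each right eigenvector $\psi_i$ of $T$ with $|\lambda_i(T)| \cdot d > \sqrt{d}$, I would lift $\psi_i$ to a function $\chi_i$ on directed edges of $\bG$ by setting $\chi_i(u \to v) \triangleq \psi_i(\bx(v))$. On the infinite branching process, a single application of the one-step nonbacktracking operator acts on such a lift by multiplication by $dT$, so on the tree $\chi_i$ is a true eigenvector with eigenvalue $d\lambda_i(T)$. To transfer this to $\bG$, I would iterate: for $\ell \lesssim c\log_d n$, the ball of radius $\ell$ around a typical vertex is a tree with high probability, and a coupling shows $B_{\bG}^\ell \chi_i = (d\lambda_i(T))^\ell \chi_i + o_n(\|\chi_i\|)$. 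A Gram--Schmidt / perturbation argument on the span of the $\{\chi_i\}$ then extracts the claimed eigenvalues up to $o_n(1)$ additive error, and orthogonality of the $\psi_i$'s in the $\pi$-inner product lifts to approximate orthogonality of the $\chi_i$'s, ensuring that we really do produce this many distinct outliers.

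For the bulk -- i.e., showing that after removing these outliers the spectral radius of $B_{\bG}$ is $\sqrt{d}(1+o_n(1))$ -- I would turn to the trace method. Specifically, I would analyze $\Tr\bigl(((B_{\bG})^{(\ell)})^\top (B_{\bG})^{(\ell)}\bigr)$ (or equivalently moments of a suitable ``outlier-killed'' operator) for $\ell$ proportional to $\log_d n$, which combinatorially counts pairs of length-$\ell$ nonbacktracking walks sharing endpoints. The decisive feature is that the nonbacktracking constraint forces any closed/coincident walk that contributes more than its ``tree value'' to traverse a \emph{tangle} -- a subgraph with positive excess (edges minus vertices plus one). Tangles of bounded excess within a radius-$\ell$ ball are rare in the SBM: the expected number of subgraphs with $v$ vertices and $v+k$ edges is $O(n(d\ell/n)^k)$. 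Summing over tangle shapes and paths shows that the dominant contribution scales as $n \cdot d^\ell \cdot \mathrm{poly}(\ell)$, so taking a $2\ell$-th root gives the desired bulk bound $\sqrt{d}(1+o_n(1))$.

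The main obstacle is making the trace argument work in the sparse regime. Because the maximum degree is $\Theta(\log n / \log \log n)$, a crude walk-count would let a single high-degree vertex contribute a factor polynomial in the max degree at each step, blowing up the bound far beyond $\sqrt{d}$; this is precisely why the statement is false for the ordinary adjacency matrix. The nonbacktracking constraint is what saves us, since it prevents a walk from oscillating at a single vertex, but capitalizing on this requires an intricate encoding of each contributing walk pair by its underlying tangle together with the sequence of tree excursions hanging off it. Executing this rigorously, controlling the resulting combinatorial sum uniformly in $n$, and combining it with independence of distant edges in the SBM, constitutes the technical heart of the argument and is the step that ultimately forces the cutoff $\ell = \Theta(\log_d n)$.
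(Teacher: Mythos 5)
This theorem is not proved in the paper at all: it is quoted as an external result of Bordenave, Lelarge \& Massouli\'e \cite{BLM15}, so there is no in-paper argument to compare your proposal against. What you have written is, in essence, a faithful roadmap of the proof in \cite{BLM15} itself: lift the eigenvectors $\psi_i$ of $T$ to functions on directed edges using the planted labels, show that $B_{\bG}^\ell$ acts on these lifts approximately as multiplication by $(d\lambda_i(T))^\ell$, and control everything orthogonal to them by a trace-method bound on a tangle-free decomposition of $B_{\bG}^\ell$ with $\ell = \Theta(\log_d n)$. You also correctly identify why the naive adjacency-matrix argument fails in the sparse regime (max degree $\Theta(\log n/\log\log n)$) and why the nonbacktracking constraint plus the rarity of positive-excess subgraphs rescues the bound $\sqrt d\,(1+o_n(1))$.

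Two caveats on the sketch as written. First, the step ``a coupling shows $B_{\bG}^\ell \chi_i = (d\lambda_i)^\ell \chi_i + o_n(\|\chi_i\|)$'' understates the mechanism: even on the limiting multi-type Galton--Watson tree the normalized functional $(d\lambda_i)^{-\ell}\sum_{|v|=\ell}\psi_i(\bx(v))$ fluctuates, and its convergence is a martingale ($L^2$) convergence that holds precisely because $d\lambda_i(T)^2>1$; tree-likeness of the $\ell$-ball alone does not give it, and this is exactly why outliers are only produced for $|\lambda_i(T)|>1/\sqrt d$. Second, the bulk bound needs more than counting tangles: one must first decompose $B_{\bG}^\ell$ into a low-rank part spanned by the $\chi_i$'s plus a remainder, and apply the trace method to the remainder restricted to tangle-free walks, with a separate argument that tangled balls are rare enough to be negligible. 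You flag this as the ``technical heart'' but do not carry it out, and that is where essentially all of the length of \cite{BLM15} lives. As a description of how the cited theorem is proved, your proposal is accurate and well organized; as a self-contained proof it is an outline, not a proof, and within the present paper the correct move is simply the one the authors make --- cite \cite{BLM15}.
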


The following is a consequence of the Ihara--Bass formula.
Along with \Cref{thm:BLM}, this implies an upper bound on the number of outlier eigenvalues in \Cref{thm:main-outlier}.
\begin{lemma}   \label{lem:B-to-H}
    For any $t^* > 0$, the number of negative eigenvalues of $H_G(t^*)$ is at most the number of real eigenvalues of $B_G$ larger than $1/t^*$, and for $t^* < 0$, the number of negative eigenvalues of $H_G(t^*)$ is at most the number of real eigenvalues of $B_G$ smaller than $1/t^*$.
\end{lemma}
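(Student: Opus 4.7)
The plan is a continuity argument combined with the Ihara--Bass formula. I focus on $t^* > 0$; the case $t^* < 0$ is analogous. Since $H_G(0) = I$ is positive definite, it has no negative eigenvalues. As $t$ varies continuously from $0$ to $t^*$, the eigenvalues $\lambda_i(H_G(t))$ depend continuously on $t$. Any eigenvalue that is strictly negative at $t = t^*$ must have crossed zero at some $t_0 \in (0, t^*)$, and strict negativity at $t^*$ forces this crossing to occur strictly before $t^*$ by continuity. Therefore the number of negative eigenvalues of $H_G(t^*)$ is at most the total order of vanishing of the polynomial $\det(H_G(t))$ on the open interval $(0, t^*)$.

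To account for multiplicities cleanly when several eigenvalues simultaneously hit zero at the same $t_0$, I would show that $\dim \ker H_G(t_0) \ge k$ implies $\det(H_G(t))$ vanishes at $t_0$ to order at least $k$. This is precisely the purpose of the paper's preliminaries: combining \Cref{lem:jacobi-iterate} (applied with $f \equiv 1$) with \Cref{fact:every-submatrix-singular}, the $j$-th derivative of $\det(H_G(t))$ at $t_0$ is a linear combination of determinants of $(n-i)\times(n-i)$ submatrices of $H_G(t_0)$ with $i \le j$, and these all vanish for $j \le k-1$. Summing over $t_0$, the number of negative eigenvalues of $H_G(t^*)$ is at most $\sum_{t_0 \in (0, t^*)} \mathrm{ord}_{t_0}\det(H_G(t))$.

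Next, I would invoke the Ihara--Bass formula (\Cref{thm:ihara-bass}): $\det(I - tB_G) = \det(H_G(t)) \cdot (1 - t^2)^{m-n}$, where $m = |E(G)|$, $n = |V(G)|$, and $m \ge n$ for the SBM graphs of interest. For any $t_0 \in (0, t^*)$ with $t_0 \ne 1$, the factor $(1 - t_0^2)^{m-n}$ is nonzero, so the orders of vanishing match: $\mathrm{ord}_{t_0}\det(H_G(t)) = \mathrm{ord}_{t_0}\det(I - tB_G)$. Writing $\det(I - tB_G) = t^{2m}\det(\tfrac{1}{t}I - B_G)$, the order of vanishing at $t_0$ equals the algebraic multiplicity of $1/t_0$ as an eigenvalue of $B_G$. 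Since $t_0 \in (0, t^*)$ if and only if $1/t_0 > 1/t^*$, summing multiplicities yields the stated upper bound by the number of real eigenvalues of $B_G$ strictly larger than $1/t^*$. The case $t^* < 0$ follows from the mirror-image argument: vary $t$ downward from $0$ to $t^*$, and note that $t_0 \in (t^*, 0)$ corresponds to $1/t_0 < 1/t^*$.

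The main technical obstacles are (i) correctly counting multiplicities when eigenvalues of $H_G(t)$ coincide along the homotopy, which is resolved by the Jacobi-formula tools that the paper has already set up, and (ii) the boundary case $t_0 = 1$, which is only relevant when $t^* > 1$ and can be sidestepped by perturbing $t^*$ slightly, since the lemma's inequalities are strict; in the main application we have $t^* = \frac{1}{(1+\eps)\sqrt{d}} < 1$, so this issue never arises.
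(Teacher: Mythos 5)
Your proof is correct and follows essentially the same route as the paper's: a continuity argument from $H_G(0) = I$, the Ihara--Bass formula, and the Jacobi-formula machinery (\Cref{lem:jacobi-iterate} plus \Cref{fact:every-submatrix-singular}) to handle multiplicities. The one cosmetic difference is that the paper applies \Cref{lem:jacobi-iterate} directly to $\det(I-uB_G) = \det(H_G(u))\cdot(1-u^2)^{m-n}$ with $f(u) = (1-u^2)^{m-n}$, so the $\dim\ker H_G(t_0)$-fold vanishing is read off straight from the derivatives of $\det(I-uB_G)$; you instead apply it to $\det(H_G(u))$ alone (i.e.\ $f\equiv 1$) and then transfer orders of vanishing across the Ihara--Bass identity, which forces you to separately address the $t_0 = 1$ and $m\ge n$ issues. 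Note that your concern about $t_0 = 1$ is unnecessary even when $t^* > 1$: since we only need the one-sided bound $\mathrm{ord}_{t_0}\det(H_G) \le \mathrm{ord}_{t_0}\det(I-tB_G)$, the extra factor $(1-t^2)^{m-n}$ can only help (assuming $m\ge n$), so no perturbation of $t^*$ is required.
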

\begin{proof}
    We present the proof assuming $t^* > 0$, since the case when $t^* < 0$ follows an identical proof.
    First, observe that the number of negative eigenvalues of $H_G(t^*)$ is at most:
    \[
        \sum_{0\le t < t^*:H_G(t)\text{ is singular}} \mathrm{dim}\,\mathrm{kernel}(H_G(t)),
    \]
    since $H_G(0)$ has no negative eigenvalues, and the eigenvalues are a continuous function of $t$.

    By the Ihara--Bass formula, if $H_G(t)$ is singular, then $\det(I-tB_G)$ must be $0$, and in particular, $1/t$ must be an eigenvalue of $B_G$.
    We now prove that $\mathrm{dim}\,\mathrm{kernel}(H_G(t))$ is at most the multiplicity of the eigenvalue $1/t$, from which the desired statement follows.

    Let $r$ be the dimension of the kernel of $H_G(t)$.
    We will show $\qty(\frac{d}{du})^{j}\det(I-uB_G) \Bigr|_{u=t} = 0$ for $0\le j\le r-1$, thereby implying that $t$ occurs as a root of $\det(I-uB_G)$, and hence $1/t$ as an eigenvalue, with multiplicity at least $r$.
    Now, by the Ihara--Bass formula,
    \begin{align*}
        \parens*{\frac{d}{du}}^j\det(I-uB_G) &= \parens*{\frac{d}{du}}^j \bracks*{\det(H(u))\cdot(1-u^2)^{m-n}},
    \end{align*}
    which by \Cref{lem:jacobi-iterate} is equal to
    \[
        \sum_{S,T\subseteq[n],\,|S|=|T|\ge n - j} \det(H(u)_{S,T}) \cdot q_{S,T}(u)
    \]
    for some collection of scalar functions $q_{S,T}$.
    Since $H(t)$ has a kernel of dimension $r$, by \Cref{fact:every-submatrix-singular} $H(t)_{S,T}$ is singular for any choice of $S,T\subseteq[n]$ with $|S|=|T|\ge n-r+1$, and hence the above quantity is $0$ for all $0\le j\le r-1$, which completes the proof.
\end{proof}

\subsection{Lower bound on the number of outlier eigenvalues} \label{sec:local-stat}
In this section, we lower bound the number of negative eigenvalues of $M_{\bG, \ell}(t)$, where $t > 0$. The same arguments apply to the case where $t < 0$, so we suppress the subscript in the dimension $r$ of the outlier eigenspace for ease of notation. 

To lower bound the number of outliers, we will show that with high probability the quadratic form of $M_{\bG, \ell}(t)$ is negative on an $r$-dimensional subspace spanned by vectors corresponding to the true community assignment. To do so, we will calculate the first and second moments of this random quadratic form, but the bulk of the work comes in calculating the first moment. The key insight here is that since these vectors are constructed using the true communities, the expectations can be tractably computed. 

It is fruitful to give a combinatorial interpretation of the matrix $M_{\bG, \ell}(t)$. As is standard in trace moment calculations, the quadratic form can be decomposed into sums over various different graph shapes. Some standard accounting for these shapes then leads us to the final answer.
To concretely articulate what we mean, we first define the notion of a \emph{partially labeled shape} and its associated matrix.
\begin{definition}[Shape matrix]
    A \emph{shape} is a graph $H$ along with two distinguished vertices: a \emph{left vertex} $u_L$ and a \emph{right vertex} $u_R$.
    Given an $n$-vertex graph $G$, its shape matrix of $G$, denoted $M_H(G)$, is an $n\times n$ matrix where the $(i,j)$ entry is the number of copies of $H$ in $G$ such that $u_L$ can be identified with $i$ and $u_R$ with $j$.
    Equivalently,
    \[
        M_H(G)_{i,j} \coloneqq \sum_{\substack{\tau:V(H)\to[n] \\ \tau \text{ injective} \\ \tau(u_L) = i,~\tau(u_R) = j}} \prod_{\{a,b\}\in E(H)} G_{\tau(a),\tau(b)}
    \]
\end{definition}

\begin{remark}
    We only consider shapes $H$ whose size does not grow with $n$.
\end{remark}

Let us explain at this point how we explicitly construct the $r$-dimensional subspace. The idea is to lift the $r$ right eigenvectors $\psi_1, \ldots, \psi_r \in \R^{k}$ of $T$ above the KS threshold using the indicator vectors $\Ind_c$ for $c \in [k]$. This makes the computation tractable because edges in the SBM are conditionally independent given the true community assignment.  
Accordingly, we are interested in polynomials of the form:
\[
    \Ind_c^{\top} M_H(G) \Ind_{c'},
\]
where $c$ and $c'$ are colors in $[k]$ and $\Ind_c$ is the indicator vector of vertices with color $c$.
The above quadratic form counts the number of copies of $H$ inside $G$ where the vertex identified with $u_L$ has color $c$ and the vertex identified with $u_R$ has color $c'$.

We use the following special case of \cite[Theorem 6.6]{BMR21}.
\begin{lemma}
    Suppose $H$ is a connected graph, then for $\bG\sim\SBM_n(\Model,\pi, d)$, we have:
    \[
        \Ind_c^{\top} M_H(\bG) \Ind_{c'} = \sum_{\substack{\sigma:V(H)\to[k] \\ \sigma(u_L) = c, \sigma(u_R) = c'}} \prod_{v\in V(H)} \pi_{\sigma(v)} \prod_{ij\in H} M_{\sigma(i)\sigma(j)} \cdot d^{|E(H)|} \cdot n^{|E(H)|-|V(H)|} \pm o(n)
    \]
    with probability $1-o_n(1)$, where the $o\parens*{\cdot}$ hides factors depending on $|E(H)|$.
\end{lemma}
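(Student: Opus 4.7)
The statement packages (i) an explicit mean computation for $X \triangleq \Ind_c^\top M_H(\bG) \Ind_{c'}$ and (ii) an $o(n)$ concentration bound holding with probability $1 - o_n(1)$. My plan is to handle these separately: first I would compute $\E X$ directly via linearity of expectation, and then I would bound $\mathrm{Var}(X)$ and invoke Chebyshev.

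For the expectation, I would expand
\[
X = \sum_{\tau: V(H) \to [n]\,\text{injective}} \Ind[\bx(\tau(u_L)) = c,\,\bx(\tau(u_R)) = c'] \prod_{ab \in E(H)} \bG_{\tau(a),\tau(b)},
\]
condition on $\bx$ so that the edges become independent Bernoullis with the stated SBM probabilities, and group contributions by the induced coloring $\sigma \triangleq \bx \circ \tau : V(H) \to [k]$. Since $\tau$ is injective and $\bx$ is iid $\pi$ on distinct positions, the coloring $\sigma$ occurs with probability $\prod_v \pi_{\sigma(v)}$, and then each edge appears independently with probability $\Model_{\sigma(a)\sigma(b)} d/n$. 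Multiplying by the $n^{|V(H)|}(1 - O(1/n))$ injective maps yields the claimed main term, with the $O(1/n)$ correction absorbed into the $o(n)$ slack.

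For concentration I would compute $\E X^2$ by summing over pairs $(\tau_1, \tau_2)$ of injective maps, classified by the overlap pattern---the number $s$ of coinciding image vertices in $[n]$ together with the number $e'$ of coinciding edges in the resulting shared subgraph. Disjoint pairs ($s = 0$) contribute $(\E X)^2(1 - O(1/n))$ by the independence of colors and edges on disjoint vertex sets. For each overlap class with $s \ge 1$, the count of such pairs is $O(n^{2|V(H)| - s})$ and the edge probability product is $O((d/n)^{2|E(H)| - e'})$, giving a contribution of order $n^{2(|V(H)| - |E(H)|) + (e' - s)}$. When $H$ is a tree the shared subgraph is a subforest, so $e' \le s - 1$ and every overlap class is a factor of at least $n^{-1}$ below $(\E X)^2$; summing over the constantly many patterns gives $\mathrm{Var}(X) = o(n^2)$, and Chebyshev then yields $|X - \E X| = o(n)$ with probability $1 - o_n(1)$. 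When $H$ contains a cycle, $\E X = O(n^{|V(H)| - |E(H)|}) = O(1)$, so Markov's inequality immediately gives $X = o(n)$ with high probability and the claim becomes trivial.

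The hard part will be the combinatorial bookkeeping in the variance analysis: enumerating the possible overlap patterns between two injective copies of $H$ in $[n]$ and uniformly controlling each contribution. Because $|V(H)|$ is constant there are only constantly many such patterns, and the crucial inequality $e' \le s - 1$ for nonempty tree overlaps follows from the fact that any subgraph of a forest is itself a forest. The real effort is therefore careful case analysis rather than any single deep estimate, and the generous $\pm o(n)$ target makes the arithmetic quite forgiving.
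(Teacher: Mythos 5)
The paper does not prove this lemma: it is invoked as a special case of Theorem 6.6 in Banks--Mohanty--Raghavendra \cite{BMR21}, with no argument given in the manuscript itself. Your proposal therefore supplies a self-contained proof where the paper relies on a citation, and the argument you outline is correct. The mean computation is exactly the standard one: expand over injective $\tau$, condition on $\bx$ to make edges conditionally independent, use injectivity so that the colors $\bx\circ\tau$ are i.i.d.\ $\pi$, and group by $\sigma=\bx\circ\tau$; multiplying by the $n^{|V(H)|}(1-O(1/n))$ injections produces the main term. (Note that the exponent in the lemma as printed reads $n^{|E(H)|-|V(H)|}$, but the surrounding discussion, in particular the remark immediately following, makes clear it should be $n^{|V(H)|-|E(H)|}$; your derivation silently produces the correct exponent.) The variance argument is also sound: the crucial observation is that the shared edges between two injective copies of a forest $H$ in $[n]$ again form a forest on the $s$ shared image vertices, so $e'\le s-1$ whenever $s\ge 1$, and hence each nontrivial overlap class contributes a factor at least $n^{-1}$ below $(\E X)^2 = \Theta(n^{2(|V|-|E|)})$. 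Since $|V(H)|$ is constant, there are finitely many overlap patterns, giving $\Var(X)=O(n^{2(|V|-|E|)-1})$, which is $o(n^2)$ for trees, and Chebyshev finishes. Your observation that the non-tree case ($|E|\ge|V|$, so $\E X=O(1)$) is dispatched by Markov is a clean way to cover all connected $H$. One minor streamlining: it is slightly cleaner to write $\Var(X)=\sum_{\text{overlapping}}\bigl[p(\tau_1,\tau_2)-p(\tau_1)p(\tau_2)\bigr]$ directly, avoiding the need to argue that disjoint pairs contribute $(\E X)^2(1-O(1/n))$; both routes give the same bound.
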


\begin{remark}  \label{rem:tree-v-nontree}
    Observe that in the above if $H$ is not a tree, then $|\Ind_c^{\top} M_H(\bG) \Ind_{c'}| = o(n)$.
    If $H$ is a tree, then it can be expressed as a path $P = p_0,\dots,p_{\ell}$ where $p_0 = u_L$ and $p_\ell = u_R$, along with a subtree per vertex on $P$.
    In this case, with high probability:
    \[
        \Ind_c^{\top} M_H(\bG) \Ind_{c'} = \angles*{e_c, T^{\ell} e_{c'}}_{\pi} \cdot d^{|V(H)|-1} \cdot n \pm o(n).
    \]
\end{remark}
More generally, we are interested in understanding the quadratic form of vectors that are constant on each color class with $M_G(H)$, i.e. lifts of vectors in $\R^k$.
\begin{corollary}   \label{cor:quad-form-tree-shapes}
    Let $H$ be a tree such that $u_L$ and $u_R$ are distance-$\ell$ apart.
    For any $f:[k]\to\R$, define $\boldf^{(n)}\coloneqq \sum_{c\in[k]} f(c)\cdot\Ind_c$. 
    Then with high probability, for all $f,g:[k]\to\R$ simultaneously:
    \[
      \angles{\boldf^{(n)}, M_H(\bG)\bg^{(n)}} = \angles*{f, T^{\ell} g}_{\pi} \cdot d^{|V(H)|-1}\cdot n \pm o(n).
    \]
\end{corollary}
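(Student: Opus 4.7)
The plan is to reduce the statement to the special case of indicator vectors provided by Remark~\ref{rem:tree-v-nontree}, and then use bilinearity together with a union bound over the finite family of color pairs to obtain the simultaneous guarantee.

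First, I would expand both sides of the desired identity as bilinear forms in $(f, g)$. By the linearity of the lift $f\mapsto \boldf^{(n)} = \sum_{c} f(c) \Ind_c$, we have
\[
\angles{\boldf^{(n)}, M_H(\bG)\bg^{(n)}} \;=\; \sum_{c, c' \in [k]} f(c)\, g(c') \cdot \angles{\Ind_c, M_H(\bG)\Ind_{c'}}.
\]
On the other hand, writing the $\pi$-inner product as $\angles{u,v}_\pi = \sum_c \pi_c u_c v_c$, a direct expansion gives
\[
\angles{f, T^\ell g}_\pi \;=\; \sum_{c,c' \in [k]} f(c)\, g(c') \cdot \angles{e_c, T^\ell e_{c'}}_\pi.
\]
So it suffices to prove the identity for all pairs of indicator vectors $\Ind_c, \Ind_{c'}$ on a single high-probability event.

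Second, I would invoke Remark~\ref{rem:tree-v-nontree} for each of the $k^2$ color pairs $(c,c') \in [k]^2$. Each application is a high-probability statement that
\[
\angles{\Ind_c, M_H(\bG)\Ind_{c'}} \;=\; \angles{e_c, T^\ell e_{c'}}_\pi \cdot d^{|V(H)|-1}\cdot n \;\pm\; o(n).
\]
Since $k$ is a constant, a union bound over this finite collection yields a single event $\mathcal{E}$ of probability $1 - o_n(1)$ on which all $k^2$ per-pair estimates hold simultaneously.

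Third, on the event $\mathcal{E}$ I would combine the per-pair estimates through the bilinear decomposition above. For any fixed $f, g:[k]\to\R$, the pointwise errors accumulate to at most $k^2 \cdot \|f\|_\infty\|g\|_\infty \cdot o(n) = o(n)$, since $k, \|f\|_\infty, \|g\|_\infty$ are all constants. Crucially, the event $\mathcal{E}$ does not depend on $(f, g)$, so the same event witnesses the estimate simultaneously for every $f, g$, which is exactly the quantifier structure demanded by the corollary. There is no real obstacle here; the only subtlety is making sure the simultaneous-in-$(f, g)$ quantifier is handled by reducing to the finite basis $\{e_c\}_{c \in [k]}$ before union bounding, rather than after, so that the single high-probability event suffices for all $f, g$ at once.
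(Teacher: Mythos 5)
Your proposal is correct and matches what the paper intends: the paper presents this as an immediate corollary of \Cref{rem:tree-v-nontree} without an explicit proof, and the route you spell out---expand bilinearly in $f,g$ into the $k^2$ color-pair terms $\angles{\Ind_c, M_H(\bG)\Ind_{c'}}$, apply the remark to each, union bound over the constant-size collection, and recombine---is precisely the implicit argument. Your closing observation that the reduction to the finite basis $\{e_c\}$ must happen \emph{before} the union bound is the one genuinely important point in making the ``for all $f,g$ simultaneously'' quantifier work, and you handle it correctly.
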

The bulk of the contribution to the value of the quadratic form comes from tree shapes.
\begin{corollary}   \label{cor:only-trees-matter}
    With high probability over $\bG\sim\SBM_n(\Model,\pi, d)$, for any collection of shapes $\calH$ along with coefficients $(c_H)_{H\in\calH}$, and any functions $f,g:[k]\to\R$, we have:
    \[
        \angles*{\boldf^{(n)}, \sum_{H\in\calH} c_H M_H \bg^{(n)} } = \angles*{\boldf^{(n)}, \sum_{\substack{ H\in\calH \\ H~\text{ tree} }} c_H M_H \bg^{(n)} } \pm o(n).
    \]
\end{corollary}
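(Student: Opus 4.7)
The plan is to reduce to the single-color-class statement of \cref{rem:tree-v-nontree} by expanding $\boldf^{(n)}$ and $\bg^{(n)}$ in the basis of community indicators and invoking linearity. Since everything in sight has only constant-size support (constantly many shapes, constantly many colors, shapes of bounded size), the high-probability conclusions piece together via a single union bound.

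First, decompose
\begin{align*}
\boldf^{(n)} = \sum_{c \in [k]} f(c)\, \Ind_c, \qquad \bg^{(n)} = \sum_{c' \in [k]} g(c')\, \Ind_{c'},
\end{align*}
and expand
\begin{align*}
\left\langle \boldf^{(n)},\, \sum_{H \in \calH} c_H M_H(\bG)\, \bg^{(n)} \right\rangle
= \sum_{H \in \calH} c_H \sum_{c, c' \in [k]} f(c)\, g(c')\, \Ind_c^\top M_H(\bG) \Ind_{c'}.
\end{align*}
Split the outer sum into $H$ that are trees and $H$ that are not, yielding the tree-part claimed in the corollary plus a non-tree remainder. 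It remains to show that this remainder is $o(n)$ with high probability, uniformly in $f,g$.

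For each individual non-tree $H \in \calH$ and each ordered color pair $(c,c') \in [k]^2$, \cref{rem:tree-v-nontree} (which is an immediate consequence of the preceding lemma applied to the non-tree case, where $|E(H)| - |V(H)| \geq 0$ only when $H$ is a tree, and is strictly negative otherwise, so the $n^{|E(H)| - |V(H)|}$ factor kills the leading term) gives
\begin{align*}
\bigl|\Ind_c^\top M_H(\bG) \Ind_{c'}\bigr| = o(n)
\end{align*}
with probability $1 - o_n(1)$. Taking a union bound over the constantly many pairs $(H, c, c')$ with $H \in \calH$ non-tree and $(c,c') \in [k]^2$, simultaneously every such inner product is $o(n)$ with high probability. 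On this event, the non-tree remainder is bounded by
\begin{align*}
\sum_{\substack{H \in \calH \\ H \text{ not tree}}} |c_H| \sum_{c, c' \in [k]} |f(c)|\, |g(c')| \cdot o(n) = O_{\calH, f, g, k}(1) \cdot o(n) = o(n),
\end{align*}
since $|\calH|$, $k$, the coefficients $(c_H)$, and the functions $f,g$ are all treated as constants (not depending on $n$). This yields the claimed approximation.

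The only potential subtlety is the uniformity over $f, g$, but since $f, g : [k] \to \R$ are functions on a \emph{constant-size} domain, fixing any finite scaling convention (say, unit norm) makes the $O(1)$ prefactor above harmless, and general $f, g$ follow by bilinearity. No genuine obstacle arises: the corollary is essentially a bookkeeping consequence of \cref{rem:tree-v-nontree} combined with a union bound over a constant-sized index set.
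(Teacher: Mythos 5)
Your main argument is correct and follows the route the paper leaves implicit: expand $\boldf^{(n)}$ and $\bg^{(n)}$ in the indicator basis, split by tree versus non-tree shapes, cite \cref{rem:tree-v-nontree} to bound each non-tree term by $o(n)$, and take a union bound over the finitely many triples $(H,c,c')$; bilinearity then handles uniformity over bounded $f,g$. One flaw worth flagging: in your parenthetical re-derivation of \cref{rem:tree-v-nontree} you assert that $|E(H)|-|V(H)| \geq 0$ only when $H$ is a tree and is strictly negative otherwise, but this is backwards --- a connected graph $H$ satisfies $|E(H)| \geq |V(H)|-1$ with equality iff $H$ is a tree, so $|E(H)|-|V(H)|$ equals $-1$ for trees and is $\geq 0$ for connected non-trees. (The confusion likely traces to a sign typo in the paper's preceding lemma: the exponent there should read $n^{|V(H)|-|E(H)|}$, which yields the $\Theta(n)$ scaling for trees and $O(1) = o(n)$ for connected non-trees.) Since your argument correctly invokes the stated \emph{conclusion} of \cref{rem:tree-v-nontree} rather than relying on the parenthetical, the overall proof stands.
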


To lighten the notation, for the rest of this section we make the dependence on the original graph $\bG$ implicit. So for our algorithm, we would like to illustrate negative eigenvalues in the spectrum of $M_{\ell}(t) = A^{(\ell)} H(t) A^{(\ell)}$. 

Let $\psi_1,\dots,\psi_r$ be the right eigenvectors of $T$ corresponding to the largest nontrivial eigenvalue $\lambda_2(T) = \sqrt{\frac{1+\eps}{d}}$ for some $\eps > 0$. 
We will show that with high probability for all $v$ in $\mathrm{span}\qty{\psi_1,\dots,\psi_r}$, the quadratic form $\angles*{\bv^{(n)}, M_{\ell}(t)\bv^{(n)}}$ is substantially negative, where $t = \frac{1+\delta}{\sqrt{(1+\eps)d}}$ for some small enough $\delta$ we choose in posterity.
\begin{lemma}\label{lemma:lifted-outlier}
    For all $\ell$ sufficiently large and all $v\in\mathrm{span}\qty{\psi_1,\dots,\psi_r}$, $\angles*{\bv^{(n)}, M_{\ell}(t) \bv^{(n)}} < -\alpha n$, where $\alpha > 0$ is an absolute constant.
    Consequently, $M_{\ell}(t)$ has at least $r$ eigenvalues less than $-\alpha$.
\end{lemma}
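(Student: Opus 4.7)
My plan is to compute $\angles*{\bv^{(n)}, M_\ell(t)\bv^{(n)}}$ in closed form via the shape decomposition and verify that the leading term is strictly negative for appropriate $\delta$ and all sufficiently large $\ell$. Expanding $M_\ell(t) = A^{(\ell)}A^{(\ell)} - tA^{(\ell)}AA^{(\ell)} + t^2 A^{(\ell)}(D-I)A^{(\ell)}$ and applying the Ihara--Bass-type identity $AA^{(\ell)} = A^{(\ell+1)} + (D-I)A^{(\ell-1)}$ (which one reads off the coefficient of $t^{\ell+1}$ in $H(t)y(t) = (1-t^2)I$, where $y(t) := \sum_j A^{(j)}t^j$), the middle term becomes $A^{(\ell)}A^{(\ell+1)} + A^{(\ell)}(D-I)A^{(\ell-1)}$. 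This reduces everything to quadratic forms of products of two nonbacktracking walk matrices, possibly sandwiching a $(D-I)$ factor.

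The tree shapes contributing to $A^{(a)}A^{(b)}$ are ``spike trees'' $H_s^{(a,b)}$ consisting of a spine of length $a+b-2s$ from $u_L$ to $u_R$ together with a spike of length $s$ hanging off at the vertex where the two walks split, for $s = 0, 1, \ldots, \min(a,b)$; by \Cref{cor:quad-form-tree-shapes}, each contributes $\lambda^{a+b-2s} d^{a+b-s} \norm*{v}_\pi^2 n \pm o(n)$, where $\lambda = \lambda_2(T) = \sqrt{(1+\eps)/d}$. The shapes for $A^{(a)}DA^{(b)}$ are obtained by attaching an additional pendant edge at the junction vertex, introducing an extra factor of $d$; hence $\angles*{\bv^{(n)}, A^{(a)}(D-I)A^{(b)}\bv^{(n)}} \approx (d-1)\angles*{\bv^{(n)}, A^{(a)}A^{(b)}\bv^{(n)}}$. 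Non-tree shapes contribute $o(n)$ by \Cref{cor:only-trees-matter}. Setting $S_\ell := \sum_{s=0}^\ell (1+\eps)^{\ell-s}$ and summing, one arrives at
\[
\angles*{\bv^{(n)}, M_\ell(t)\bv^{(n)}} = d^\ell \norm*{v}_\pi^2 n \, \{ S_\ell[1 - t\lambda d + t^2(d-1)] - t\lambda(d-1)\,S_{\ell-1} \} \pm o(n).
\]

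Substituting $t = (1+\delta)/(\lambda d)$, using $\lambda^2 d = 1+\eps$ and the recursion $S_\ell = (1+\eps)S_{\ell-1} + 1$, the braced quantity reduces to $\delta\, S_{\ell-1} \cdot [(\delta - \eps) - (1+\delta)/d] + O(1)$. For any $\delta \in (0, \eps)$ and $d \geq 2$, the bracket is strictly negative, and since $d^\ell S_{\ell-1}$ grows exponentially like $(d(1+\eps))^\ell/\eps$, the quadratic form is at most $-\alpha n\norm*{v}_\pi^2$ for any fixed $\alpha > 0$ once $\ell$ is sufficiently large. The ``consequently'' clause then follows by lifting an orthonormal basis of the $r$-dimensional $\lambda_2$-eigenspace of $T$ to an $r$-dimensional subspace of $\R^n$ on which the Rayleigh quotient of $M_\ell(t)$ is uniformly at most $-\alpha$, and invoking the Courant--Fischer min-max principle. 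The main technical obstacle will be confirming that the single-parameter family $\qty{H_s^{(a,b)}}_s$ exhausts the relevant tree shapes for $A^{(a)}A^{(b)}$ with multiplicity one---i.e., that every cancellation pattern in a pair of nonbacktracking walks sharing an endpoint falls into exactly one $H_s$---and that the corresponding bookkeeping for $A^{(a)}DA^{(b)}$ amounts to adding a single pendant edge at the junction vertex to leading order.
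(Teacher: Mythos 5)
Your overall strategy — expand $M_\ell(t)$, reduce to tree-shape counts, evaluate the quadratic form via \Cref{cor:quad-form-tree-shapes} and \Cref{cor:only-trees-matter}, then show the resulting scalar polynomial in $\lambda_2$ is negative — is essentially the paper's. The reformulation of the middle term via $AA^{(\ell)} = A^{(\ell+1)} + (D-I)A^{(\ell-1)}$ is a legitimate alternative to the paper's direct enumeration of shapes for $A^{(\ell)}AA^{(\ell)}$ and is not itself a problem.

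The gap is the step $\angles*{\bv^{(n)}, A^{(a)}(D-I)A^{(b)}\bv^{(n)}} \approx (d-1)\angles*{\bv^{(n)}, A^{(a)}A^{(b)}\bv^{(n)}}$. This is not correct, and it changes the final polynomial. Two things go wrong. First, attaching a genuinely new pendant edge at the junction $v$ multiplies the expected count by $\approx d$ (since the pendant vertex is one of $\sim n$ choices with edge probability $\sim d/n$), and the $-I$ in $D - I$ does \emph{not} simply cancel one of these pendant choices; the $-1$ shifts the count $\deg(v)$ to $\deg(v)-1$, and conditional on a nonbacktracking walk arriving at $v$, $\deg(v) - 1$ still has expectation $\approx d$ (not $d-1$), because the degree is Poisson and conditioning on the arrival edge leaves the remaining degree Poisson$(d)$. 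Second, and more structurally, the tree shapes of $A^{(a)}(D-I)A^{(b)}$ are \emph{not} exhausted by ``$H_s^{(a,b)}$ with a pendant leaf'': the pendant vertex can coincide with the first step of the outgoing nonbacktracking walk, which yields a bare path shape with no pendant at all (this is the second bullet in the paper's case analysis of $A^{(\ell)}(D-I)A^{(\ell)}$, the ``edge used three times'' case). Working out the exact sum from the paper's shape list, with $c_s := \lambda^{2\ell-2s} d^{2\ell-s}\norm*{v}_\pi^2$, one finds
\[
\angles*{\bv^{(n)}, A^{(\ell)}(D-I)A^{(\ell)}\bv^{(n)}} = \Bigl(c_0 + d\sum_{s=0}^\ell c_s\Bigr) n \pm o(n),
\]
whereas your $(d-1)\sum_s c_s$ is smaller by $c_0 + \sum_s c_s > 0$. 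Since this term enters $M_\ell(t)$ with a $+t^2$ sign, you are under-counting a positive contribution, which makes the claimed upper bound on the quadratic form spuriously more negative; the sign conclusion you reach is therefore not actually supported by your intermediate formula. To make the argument rigorous you need to enumerate the degenerate shapes explicitly, as the paper does, rather than appeal to the pendant heuristic. (The analogous degenerate case must also be tracked in $A^{(\ell)}(D-I)A^{(\ell-1)}$ after your application of the nonbacktracking recursion.)

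Two smaller points. You assert without proof that the family $\{H_s^{(a,b)}\}_s$ exhausts the tree shapes of $A^{(a)}A^{(b)}$ with multiplicity one; this is true (a pair of nonbacktracking walks sharing an endpoint determine a unique split point), and the paper's proof via \Cref{obs:tree-shape-structure} is the argument you would supply, but you correctly flag this as needing a check. The ``consequently'' clause is fine: lift an orthonormal basis of the $\lambda_2$-eigenspace and apply Courant--Fischer, which matches the paper's implicit reasoning.
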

\begin{proof}
We can write $M_{\ell}(t)$ as a weighted combination of a collection of shape matrices $\calH$.
The plan is to extract out all the tree shapes that occur in $\calH$ along with their weights and obtain a formula for the quadratic form using \Cref{cor:quad-form-tree-shapes}.

We expand $M_{\ell}(t) = A^{(\ell)} A^{(\ell)} - t A^{(\ell)}AA^{(\ell)} + t^2 A^{(\ell)}(D-I)A^{(\ell)}$.
Each term counts walks following certain simple rules.
\begin{observation} \label{obs:walks-counted-per-term}
    We can characterize the walks counted by each term as follows.
    \begin{enumerate}
        \item $A^{(\ell)} A^{(\ell)}[i,j]$ counts walks starting at $i$ and ending at $j$ with the phases (i) walk $\ell$ nonbacktracking steps, (ii) walk $\ell$ nonbacktracking steps.
        \item $A^{(\ell)}AA^{(\ell)}[i,j]$ counts walks with phases (i) walk $\ell$ nonbacktracking steps, (ii) walk one step, (iii) walk $\ell$ nonbacktracking steps.
        \item $A^{(\ell)}(D-I)A^{(\ell)}[i,j]$ counts walks with phases (i) walk $\ell+1$ nonbacktracking steps and then backtrack one step, (ii) walk $\ell$ nonbacktracking steps.
    \end{enumerate}
\end{observation}

\begin{observation} \label{obs:tree-shape-structure}
    For a walk to give rise to a tree shape, each step must either (i) visit a vertex not visited in the walk so far, or (ii) use an edge that has already been stepped on.
\end{observation}
Using this rule, we can deduce all the tree shapes that come from each term.
The tree shapes arising in $\calH$ turn out to have quite simple structure.
We use $H_{s,i,r}$ to denote the shape obtained by taking a length-$s$ path where both endpoints are distinguished, and attaching a length-$r$ path at vertex $i$.

\medskip

\noindent {\bf Analysis of $A^{(\ell)}A^{(\ell)}$.}
By \Cref{obs:walks-counted-per-term}
and \Cref{obs:tree-shape-structure}, any tree shape for this term must arise as a result of first walking $\ell$ self-avoiding steps, then backtracking $s$ steps, and then walking $\ell-s$ self-avoiding steps for $0\le s\le \ell$.
The shape this gives rise to is $H_{2\ell - 2s, \ell-s, s}$.
Therefore, by \Cref{cor:only-trees-matter}, with high probability, for all $v:[k]\to\R$ we have:
\[
    \angles*{ \bv^{(n)}, A^{(\ell)}A^{(\ell)} \bv^{(n)} } = \sum_{s=0}^{\ell} \angles*{\bv^{(n)}, M_{H_{2\ell-2s, \ell-s, s}} \bv^{(n)}} \pm o(n) = \sum_{s=0}^{\ell} \angles*{v, T^{2\ell-2s}v}_{\pi} \cdot d^{2\ell-s} \cdot n \pm o(n).
\]

\medskip

\noindent {\bf Analysis of $A^{(\ell)} A A^{(\ell)}$.}
By \Cref{obs:walks-counted-per-term} and \Cref{obs:tree-shape-structure}, the tree shapes arising here must fall into one of the following cases.
\begin{enumerate}
    \item Length-$\ell$ self-avoiding walk, then $s$ backtracking edges, and then a length-$(\ell-s+1)$ self-avoiding walk for $0\le s\le \ell$.
    The shapes arising in this case are $H_{2\ell-2s+1, \ell-s, s}$.
    \item Length-$(\ell+1)$ self-avoiding walk, then $s$ backtracking edges, and then a length-$(\ell-s)$ self-avoiding walk for $1\le s\le \ell$.
    The shapes arising in this case are $H_{2\ell-2s+1, \ell+1-s, s}$.
    \item Length-$\ell$ self-avoiding walk using edge $e$ for its final step, then a backtracking step that uses $e$, and then a backtracking step that uses $e$ for a third time, followed by a self-avoiding walk of length-$(\ell-1)$.
    The shape arising in this case is simply a length-$(2\ell-1)$ path with both endpoints distinguished.
\end{enumerate}
By \Cref{cor:only-trees-matter}, with high probability, for all $v:[k]\to\R$:
\begin{align*}
    \angles*{\bv^{(n)}, A^{(\ell)} A A^{(\ell)} \bv^{(n)}} &= \parens*{\angles*{v, T^{2\ell-1}v} \cdot d^{2\ell-1} - \angles*{v, T^{2\ell+1}v} \cdot d^{2\ell+1}} \cdot n \, + \\
    &\sum_{s=0}^{\ell} 2\angles*{ v, T^{2\ell-2s+1} v } \cdot d^{2\ell-s+1} \cdot n \pm o(n).
\end{align*}

\medskip

\noindent {\bf Analysis of $A^{(\ell)}(D-I)A^{(\ell)}$.}
By \Cref{obs:walks-counted-per-term} and \Cref{obs:tree-shape-structure}, the tree shapes fall into one of the cases below.
\begin{enumerate}
    \item Length-$(\ell+1)$ self-avoiding walk, then $s$ nonbacktracking steps, followed by a length-$(\ell-s+1)$ self-avoiding walk for $1\le s \le \ell+1$.
    The shapes arising in this case are $H_{2\ell-2s+2, \ell-s+1, s}$.
    \item Length-$(\ell+1)$ self-avoiding walk where the last used edge is $e$, a backtracking step that uses $e$, and then a backtracking step that uses $e$ for a third time, followed by a length-$(\ell-1)$ self-avoiding walk.
    This shape is simply a length-$2\ell$ path with both endpoints distinguished.
\end{enumerate}
By \Cref{cor:only-trees-matter}, with high probability, for all $v:[k]\to\R$:
\begin{align*}
    \angles*{\bv^{(n)}, A^{(\ell)}(D-I)A^{(\ell)} \bv^{(n)}} = \angles*{v, T^{2\ell}v} \cdot d^{2\ell} \cdot n + \sum_{s=1}^{\ell+1} \angles*{v, T^{2\ell-2s+2} v} \cdot d^{2\ell-s+2} \cdot n \pm o(n).
\end{align*}

\medskip

\noindent {\bf Estimating $\angles*{\bv^{(n)}, M_{\ell}(t) \bv^{(n)} }$.}
By combining the above estimates, the value of $\angles*{\bv^{(n)}, M_{\ell}(t) \bv^{(n)} }$ is:
\(
    n\cdot\angles*{v, p(T) v} \pm o(n)
\)
where $p$ is the univariate polynomial
\[
    p(\lambda) \triangleq -t \cdot d^{2\ell-1} \lambda^{2\ell-1} \cdot (1 - \lambda^2 d^2) + t^2 d^{2\ell} \lambda^{2\ell} + \sum_{s=0}^{\ell} d^{2\ell-s} \lambda^{2\ell-2s} - 2t d^{2\ell-s+1} \lambda^{2\ell-2s+1} + t^2 d^{2\ell-s+1}\lambda^{2\ell-2s}.
\]
To prove that $\angles*{v,p(T)v}$ is negative for all $v$ in $\mathrm{span}\qty{\psi_1,\dots,\psi_r}$, it suffices to show it is negative for $v = \psi_1,\dots,\psi_r$ since those are also eigenvectors of $p(T)$.
In particular, it suffices to show that $p(\lambda_2)$ is negative.
Towards doing so, we first simplify the expression for $p(\lambda)$.
\begin{align*}
    p(\lambda) = (\lambda d)^{2\ell} \cdot \parens*{ t^2 - \frac{t}{\lambda d}\cdot (1-\lambda^2 d^2) + \parens*{ 1 - 2\lambda dt + dt^2 } \cdot \parens*{ \frac{1-(\lambda^2 d)^{-\ell-1}}{1-(\lambda^2 d)^{-1}} } }.
\end{align*}
We plug in $\lambda = \sqrt{\frac{1+\eps}{d}}$, $\ell = K \cdot \left( \frac{\log(1/\eps)}{\eps} \vee 1 \right)$ for some large $K > 0$, and $t = \frac{1+\delta}{\sqrt{(1+\eps)d}}$ for some small $\delta > 0$ we choose in posterity, which yields:
\begin{align*}
    &(\lambda d)^{2\ell} \cdot \parens*{ \frac{(1+\delta)^2}{(1+\eps)d} - \frac{1+\delta}{(1+\eps) d} +  (1+\delta) + \parens*{1 - 2(1+\delta) + \frac{(1+\delta)^2}{1+\eps}}\cdot\frac{1+\eps}{\eps} + o_K(1) } \\
    &= (\lambda d)^{2\ell} \cdot \parens*{ \frac{\delta+\delta^2}{(1+\eps)d} + (1+\delta) + \frac{ 1+\eps - 2(1+\delta)(1+\eps) + (1+\delta)^2 }{\eps} + o_K(1)} \\
    &= (\lambda d)^{2\ell} \cdot \parens*{ \frac{\delta+\delta^2}{(1+\eps)d} + (1+\delta) + \parens*{-1 - 2\delta + \frac{\delta^2}{\eps}} + o_K(1) } \\
    &= (\lambda d)^{2\ell} \cdot \parens*{-\delta + \frac{\delta}{(1+\eps)d} + \frac{\delta^2}{(1+\eps)d} + \frac{\delta^2}{\eps} + o_K(1)},
\end{align*}
where $o_K(1)$ denotes a quantity which vanishes as $K \to \infty$ regardless of the other parameters. The above expression can be made negative by choosing $\delta$ small enough, $K$ large enough, and using that $d\ge 1$.
Since $\lambda_2 = \sqrt{\frac{1+\eps}{d}}$ for some $\eps > 0$, for $\ell = O\left(\tfrac{\log(1/\eps)}{\eps}\right)$ there exists a choice of $\delta$ small enough and an absolute constant $\alpha > 0$ such that $p(\lambda_2) \le -\alpha$.
Consequently, for any $v$ in $\mathrm{span}\qty{\psi_1,\dots,\psi_r}$, $\angles*{v,p(T)v} \le -\alpha$.
\end{proof}

\subsection{Outlier eigenspace after degree truncation}\label{sec:truncation}
In this section we show that by picking the truncation threshold $B$ to be large enough, the $r$-dimensional subspace which witnesses the negative eigenvalues for $M_{\bG, \ell}(t)$ also does the same for the truncated matrix $\ol{M}_{\bG_B, \ell}(t)$. In this section, we fix $t$ as specified by \cref{thm:main-outlier} and suppress dependence on $t$, as the same arguments apply regardless of whether we consider positive or negative $t$. 
\begin{lemma}[Truncation doesn't affect negative witnesses]\label{lemma:truncation}
    Let $\bG \sim \SBM_n(\Model, \pi, d)$. For $B > 0$ sufficiently large, the following holds true with probability $1-o_n(1)$. For all unit $x \in \mathrm{span}\qty{\psi_1, \ldots, \psi_r}$, we have $\ev{\bx^{(n)}, \ol{M}_{\bG_B, \ell} \bx^{(n)}} < -\alpha_{\scaleto{\ref{lemma:truncation}}{5pt}} n$, where $\alpha_{\scaleto{\ref{lemma:truncation}}{5pt}} > 0$ is an absolute constant. 
\end{lemma}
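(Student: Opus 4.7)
The plan is to show that degree truncation perturbs the quadratic form by at most $\alpha n/2$, so that the $r$-dimensional witness from \cref{lemma:lifted-outlier} continues to certify substantially negative eigenvalues, yielding the lemma with $\alpha_{\ref{lemma:truncation}} = \alpha/2$. The key observation is that $M_{\bG,\ell}$ and $\ol{M}_{\bG_B,\ell}$ can only differ on walks and pendants that touch a vertex of degree $> B$ in $\bG$, and in a sparse SBM the total mass of such walks of bounded length is at most $o_B(1) \cdot n$ with high probability over the choice of $\bG$.

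First I would reuse the shape-matrix decomposition developed to prove \cref{lemma:lifted-outlier}: both $M_{\bG,\ell}$ and $\ol{M}_{\bG_B,\ell}$ expand as constant-coefficient linear combinations over a finite (depending on $\ell$) family of tree-like shapes $H$ of bounded size. The only change is that $\ol{M}_{\bG_B,\ell}$ restricts the nonbacktracking walks to $\bG_B$ --- forcing every walk-vertex to have degree $\le B$ in $\bG$ --- and the modified degree matrix $\ol{D}_{\bG_B}$ zeros out the pendant-degree contribution at degree-$>B$ vertices. Consequently, $\ev{\bx^{(n)}, (M_{\bG,\ell} - \ol{M}_{\bG_B,\ell})\bx^{(n)}}$ decomposes into a sum over constantly many shapes of contributions from $H$-copies involving at least one degree-$>B$ vertex of $\bG$, weighted by entries of $x$ with $\|x\|_\infty^2 \le \|x\|_2^2 = 1$.

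Next I would bound this ``bad'' contribution by a first-moment argument. For any fixed tree shape $H$ of size $O(\ell)$, the expected number of $H$-copies rooted at any specific vertex $v$ is $O_\ell(1)$ using $|E(H)| = |V(H)|-1$ and edge probability $\Theta(1/n)$. Conditioned on a fixed set of $|E(H)|$ shape-edges at $v$ being present, the residual edges from $v$ are still asymptotically Poisson$(d)$-distributed, so $\Pr[\deg_{\bG}(v) > B \mid H\text{-copy through }v]$ decays as $e^{-\Omega(B\log(B/d))}$ once $B \gg d$. Multiplying and summing over $v \in [n]$, the expected bad mass from each shape is at most $c_\ell\cdot n\cdot e^{-\Omega(B)}$; Markov together with a union bound over the $O_\ell(1)$ shapes bounds the total bad contribution by $C_\ell\cdot n\cdot e^{-\Omega(B)}$ with probability $1-o_n(1)$.

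Finally, since these bounds are uniform in the unit vector $x$ (depending only on $\|x\|_\infty \le 1$), I would choose $B = B(d,\ell,\alpha)$ large enough that $C_\ell\cdot e^{-\Omega(B)} \le \alpha/2$; combined with \cref{lemma:lifted-outlier}, this yields $\ev{\bx^{(n)}, \ol{M}_{\bG_B,\ell}\bx^{(n)}} \le -\alpha n + (\alpha/2)n \le -(\alpha/2)n$ for every unit $x \in \mathrm{span}\qty{\psi_1,\dots,\psi_r}$. The main technical subtlety I anticipate is the joint concentration between $\bone[\deg_{\bG}(v)>B]$ and the number of small-subgraph incidences at $v$; I would handle this by conditioning on the relevant shape-edges and using the asymptotic Poisson degree distribution of SBM, or more crudely by controlling $\sum_v \bone[\deg_{\bG}(v)>B]\deg_{\bG}(v)^{2\ell+2}$ in expectation via a truncated moment calculation and applying Markov.
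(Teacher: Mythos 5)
Your high-level strategy matches the paper's: show the quadratic form changes by at most $o_B(1)\cdot n$ under truncation, then invoke \cref{lemma:lifted-outlier} after choosing $B$ large. The paper implements this slightly differently — it does not pass back through the shape decomposition but instead writes, via H\"older's inequality,
\[
\abs{\ev{xx^\top, M_{\bG, \ell} - \ol{M}_{\bG_B, \ell}}} \le \norm{xx^\top}_{\max}\,\norm{M_{\bG, \ell} - \ol{M}_{\bG_B, \ell}}_1,
\]
and then controls the entrywise $\ell_1$ norm of the difference by summing, over vertices whose $(2\ell+1)$-neighborhood is affected by truncation, the sizes of their $(2\ell+1)$-neighborhoods and using the BLM neighborhood-size tail bound (\cref{lemma:sbm-neighborhoods}) together with an extremal-conditioning trick to decouple the event $\{\deg(v)>B\}$ from the neighborhood sizes. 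Your shape-level accounting is a workable and conceptually equivalent route to the same first-moment estimate $\E\norm{M_{\bG, \ell}-\ol{M}_{\bG_B, \ell}}_1 = o_B(1)\,n$, and the Poisson-conditioning subtlety you flag is exactly what the paper handles with the extremal-conditioning inequality.

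However, there is a genuine gap at the last step. You conclude by applying Markov's inequality, asserting that ``Markov together with a union bound \ldots bounds the total bad contribution by $C_\ell\cdot n\cdot e^{-\Omega(B)}$ with probability $1-o_n(1)$.'' This is not what Markov gives you. If $Y$ is the bad mass with $\E[Y]\le c_\ell e^{-\Omega(B)} n$, Markov yields $\Pr[Y > (\alpha/2)n] \le 2c_\ell e^{-\Omega(B)}/\alpha$, which for fixed $B$ is a \emph{constant} independent of $n$ — not $o_n(1)$. The lemma, however, requires failure probability $o_n(1)$ as $n\to\infty$ (you can make the constant small by increasing $B$, but you cannot make it vanish with $n$). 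To close this you need a second-moment (or stronger) concentration argument showing $Y$ clusters near its mean on the scale $o(n)$. The paper does exactly this via Efron--Stein: rerandomizing a single potential edge of $\bG$ changes $\norm{M_{\bG,\ell}-\ol{M}_{\bG_B,\ell}}_1$ by at most $\operatorname{polylog}(n)$ (using that whp all $(4\ell+2)$-neighborhoods have $\operatorname{polylog}(n)$ size), and each edge rerandomization causes a change with probability only $O(1/n)$, giving $\operatorname{Var}(Y) = \widetilde{O}(n)$, after which Chebyshev yields the $1-o_n(1)$ probability. Your proof is missing this variance (or equivalent) control, and without it the stated high-probability guarantee does not follow.
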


To facilitate the discussion, we introduce the following notion of an \emph{affected} vertex.
\begin{definition}[Affected vertices]
Let $G$ be a graph. We say that a vertex $v \in V(G)$ is \emph{affected} if its $2\ell+1$ neighborhood differs in $G$ and $G_B$, otherwise say that $v$ is \emph{unaffected}.  
\end{definition}

We also need the following concentration inequality, which follows from \cite[Lemma 29]{BLM15}. 
\begin{lemma}\label{lemma:sbm-neighborhoods}
    Let $\bG \sim \SBM_n(\Model, \pi, d)$. For any vertex $v$, let $\cN_{\ell}(v)$ denote the distance-$\ell$ neighborhood of $v$ in $\bG$. Then for any $s \ge 0$, there exists universal constants $C, c > 0$ such that 
    \begin{align*}
        \Pr[\exists \ell: \abs{\cN_{\ell}(v)} \ge s d^{\ell}]\le C\exp(-cs). 
    \end{align*}
\end{lemma}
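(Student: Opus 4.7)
The plan is to reduce the tail bound to a standard branching-process computation. First I would couple the breadth-first exploration of $\cN_\ell(v)$ in $\bG$ to a multi-type Galton--Watson process whose mean matrix has Perron eigenvalue exactly $d$, and then extract an exponential tail via Doob's inequality applied to an exponential martingale.

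\textbf{Step 1 (Coupling with a branching process).} Conditioned on the color assignment $\bx$, explore $\bG$ from $v$ via BFS, revealing edges only as they are encountered. For a parent vertex of color $c$, the number of children of color $c'$ is $\mathrm{Bin}(N_{c'}, \Model_{c,c'}d/n)$, where $N_{c'}\le n$ is the count of unexplored vertices of color $c'$. Replacing $N_{c'}$ by $n$ upper-bounds each such count stochastically by $\mathrm{Poi}(\Model_{c,c'}d(1+o_n(1)))$. Iterating, $|\cN_\ell(v)|$ is stochastically dominated by $Z_\ell$, the generation-$\ell$ total population of a multi-type Poisson branching process with mean matrix $\Lambda \triangleq Td$. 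Crucially, $\Lambda\bone = d\bone$ (since $T\bone=\bone$), so the Perron eigenvalue of $\Lambda$ equals $d$ with right eigenvector $\bone$, and the total-offspring count of any single parent is $\mathrm{Poi}(d)$ by Poisson superposition.

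\textbf{Step 2 (Exponential moments via martingale).} Since $\bone$ is the right Perron eigenvector, the rescaled total size $W_\ell\triangleq Z_\ell/d^\ell$ is a non-negative martingale. The Poisson offspring structure gives the clean recursion
\[
    \E\!\left[\,e^{\theta W_{\ell+1}}\,\middle|\,Z_{\le \ell}\,\right] \;=\; \exp\!\left(Z_\ell\cdot d\cdot\bigl(e^{\theta/d^{\ell+1}}-1\bigr)\right) \;\le\; \exp\!\left(\theta W_\ell + \theta^2 W_\ell/d^{\ell+1}\right),
\]
where the inequality uses $e^x-1\le x+x^2$ for $x\le 1$. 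Iterating from $W_0=1$, the error terms telescope via $\sum_{\ell\ge 0} d^{-\ell-1}<\infty$ (for $d>1$), giving $\sup_\ell \E[e^{\theta W_\ell}]\le C(\theta)<\infty$ for any sufficiently small $\theta>0$.

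\textbf{Step 3 (Doob's maximal inequality and wrap-up).} Because $W_\ell$ is a non-negative martingale and $x\mapsto e^{\theta x}$ is convex, $(e^{\theta W_\ell})_{\ell\ge 0}$ is a non-negative submartingale by Jensen. Doob's maximal inequality then yields
\[
    \Pr\!\left[\sup_\ell W_\ell \ge s\right] \;\le\; e^{-\theta s}\cdot\sup_L \E[e^{\theta W_L}] \;\le\; C(\theta)\,e^{-\theta s},
\]
which, combined with the stochastic domination $|\cN_\ell(v)|\le Z_\ell$ from Step 1, proves the stated bound after relabeling constants. The main technical subtlety lies in Step 1: the true offspring parameter $N_{c'}$ shrinks as the exploration proceeds, so the dominance by a \emph{time-homogeneous} Poisson branching process is only obtained up to a $(1+o_n(1))$ factor in the rate; this perturbs the effective Perron eigenvalue by a vanishing amount that is absorbed into the constants $C,c$. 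Everything else reduces to standard Galton--Watson exponential-moment computations.
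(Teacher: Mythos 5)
The paper does not actually prove this lemma --- it simply cites \cite[Lemma 29]{BLM15} --- so your self-contained branching-process argument is a welcome reconstruction of the standard proof, and its skeleton (BFS domination by a multi-type Poisson branching process, the observation that $\Model\pi=\bone$ makes the total offspring of every particle $\mathrm{Poi}(d)$ so that $W_\ell = Z_\ell/d^\ell$ is a martingale, uniform exponential moments via the Poisson MGF recursion, and Doob's maximal inequality) is correct. One slip in Step 1 needs fixing: replacing $N_{c'}$ by $n$ gives offspring rate $\Model_{c,c'}d$ and hence mean matrix $\Model d$, whose row sums are $\sum_{c'}\Model_{c,c'}d \neq d$ in general (e.g.\ in the symmetric two-community model they equal $2d$), which would only prove the bound with $d$ replaced by a larger Perron eigenvalue. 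What you want is to replace $N_{c'}$ by the number of color-$c'$ vertices, which is at most $(1+o(1))\pi_{c'}n$ with probability $1-e^{-\Omega(n)}$; this yields the mean matrix $Td=\Model\Pi d$ you actually use in the rest of the argument, and the exceptional event is absorbed into $C$. Two further cosmetic points: the iteration in Step 2 should track how the exponent parameter grows backwards from level $L$ to level $0$ (a short bootstrapping shows it stays below $2\theta$ for $\theta$ small, using $d>1$, which holds above the KS threshold); and since $\cN_\ell(v)$ as used in the paper is the radius-$\ell$ ball rather than the sphere, you should sum your generation bound over $j\le\ell$, costing only a factor $d/(d-1)$ that is absorbed into the constants.
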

In other words, the neighborhoods in SBMs look approximately regular. With these ingredients assembled, we can now prove \cref{lemma:truncation}. 

\begin{proof}[Proof of \cref{lemma:truncation}]
    The desired statement follows from \Cref{lemma:lifted-outlier} once we prove $\abs{\ev{uu^\top, M_{\bG, \ell} - \ol{M}_{\bG_B, \ell}}}\le o_B(1)$, and choose $B$ as a large enough constant.

    Throughout the proof, we use the following crude estimates: if $\abs{\cN_{2\ell+1}(v)} \le L$, then $\norm{M_{\bG, \ell}[v]}_1 \le 2L$, and $\norm{\ol{M}_{\bG_B, \ell}[v]}_1 \le B^{2\ell+1}$.

    By H\"{o}lder's inequality we have 
    \begin{align*}
        \abs{\ev{xx^\top, M_{\bG, \ell} - \ol{M}_{\bG_B, \ell}}} \le \norm{xx^\top}_{\max} \norm{M_{\bG, \ell} - \ol{M}_{\bG_B, \ell}}_1,
    \end{align*}
    where $\norm{M_{\bG, \ell} - \ol{M}_{\bG_B, \ell}}_1$ is the entrywise $\ell_1$ norm. Let $v$ be any vertex in $\bG$; we first upper bound the expected contribution of the $v$th row of $M_{\bG, \ell} - \ol{M}_{\bG_B, \ell}$ to the norm. 

    By our earlier observation and triangle inequality, we deduce 
    \begin{align*}
        \norm{M_{\bG, \ell} - \ol{M}_{\bG_B, \ell}}_1 &\le 2\sum_{v} \bone[v \text{ is affected}] (\abs{\cN_{2\ell+1}(v)} + B^{2\ell+1}) \\
        &= 2\sum_{v} \bone[\deg(v) > B] \sum_{w \in \cN_{2\ell+1}(v)} (\abs{\cN_{2\ell+1}(w)} + B^{2\ell+1}) \\ 
        &\le 2\sum_{v} \bone[\deg(v) > B] \abs{\cN_{4\ell+2}(v)}(\abs{\cN_{4\ell+2}(v)} + B^{2\ell+1}),
    \end{align*}
    since $\cN_{4\ell+2}(v) \supseteq \cN_{2\ell+1}(w)$. For ease of notation, introduce the quantity $f(v) \triangleq \abs{\cN_{4\ell+2}(v)}(\abs{\cN_{4\ell+2}(v)} + B^{2\ell+1})$. Taking expectations, we see that 
    \begin{align*}
        \expt{\norm{M_{\bG, \ell} - \ol{M}_{\bG_B, \ell}}_1} &\le 2 \sum_{v} \Pr[\deg(v) > B] \E[f(v) | \deg(v) > B] 
    \end{align*}
    To proceed, we use the general property that if $\mathcal{E}$ is an event, $X$ is a random variable with CDF $\Phi$, and $f: \R \to \R$ is increasing, then $\E[f(X) | \mathcal{E}] \le \E[f(X) | X \ge \Phi^{-1}(\Pr[\mathcal{E}])]$. Take $\mathcal{E}$ to be the event $\qty{\deg(v) > B}$, and set $p \triangleq \Pr[\deg(v) > B]$. Then as $\E[X | X \ge \Phi^{-1}(p)] = \frac{\E[X \bone[X \ge \Phi^{-1}(p)]]}{p}$, we can upper bound the conditional expectation by inverting \cref{lemma:sbm-neighborhoods}. We see that 
    \begin{align*}
        \Pr[\deg(v) > B] \E[f(v) | \deg(v) > B] &\le \int_{c\log(C/p)}^{\infty} yd^{4\ell+2}(yd^{4\ell+2} + B^{2\ell+1})\Pr[\abs{\cN_{4\ell+2}(v)} \ge yd^{4\ell+2}]dy \\
        &\le C\int_{c\log(C/p)}^{\infty} yd^{4\ell+2}(yd^{4\ell+2} + B^{2\ell+1}) \exp(-cy)dy \\
        &\le O(B^{8\ell+4}) \cdot p(\log(1/p)^2 + \log(1/p)), 
    \end{align*}
    where the last line follows from from an explicit integral computation. Note that $p(\log(1/p)^2 + \log(1/p))$ is monotonically increasing for $p < 0.1$, so for sufficiently large $B$ we can simply set $p$ to be the upper bound furnished by \cref{lemma:sbm-neighborhoods}. Indeed, since $\ell = O(1)$, we conclude that 
    \begin{align*}
        \Pr[\deg(v) > B] \E[f(v) | \deg(v) > B] &\le O(B^{8\ell+6}) \exp(-cB/d) \\
        &= o_B(1),
    \end{align*}
    for $B = \Omega(\ell \log \ell)$. To summarize, we have just shown that
    \begin{align*}
        \expt{\norm{M_{\bG, \ell} - \ol{M}_{\bG_B, \ell}}_1} = o_B(1)n
    \end{align*}

    Next, we bound the variance. Since $\ell = O(1)$, an Efron-Stein argument as before shows that the variance of $\norm{M_{\bG, \ell} - \ol{M}_{\bG_B, \ell}}_1$ is $\widetilde{O}(n)$. More precisely, consider rerandomizing edge $(i,j) \in \bG$. With probability $1 - O(1/n)$, the rerandomized edge agrees with the original edge, in which case the $\ell_1$ norm doesn't change. Otherwise, with probability $O(1/n)$, we can crudely bound the effect on the $\ell_1$ norm by $(\log n)^{\ell}$. Hence $\Var\qty(\norm{M_{\bG, \ell} - \ol{M}_{\bG_B, \ell}}_1) \le \widetilde{O}(n)$, and the desired concentration follows.
\end{proof}

As a corollary, we obtain the analogous version of \cref{thm:main-outlier} for the truncated matrix $\ol{M}_{\bG_B, \ell}$.
\begin{lemma}[Spectral gap for truncated matrix]\label{lemma:spectral-gap}
For sufficiently large $B > 0$, with high probability over $\bG \sim \SBM_n(\Model, \pi, d)$, the matrix $\ol{M}_{\bG_B, \ell}$ has at most $r+1$ negative eigenvalues, at least $r$ of which are upper bounded by $-\alpha_{\scaleto{\ref{lemma:truncation}}{5pt}}$.
\end{lemma}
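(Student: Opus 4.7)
The plan is to prove the two assertions separately: the lower bound (at least $r$ eigenvalues below $-\alpha_{\scaleto{\ref{lemma:truncation}}{5pt}}$) follows from \cref{lemma:truncation} via Courant--Fischer, while the upper bound (at most $r+1$ negative eigenvalues) follows from a chain of linear-algebraic reductions culminating in \cref{lem:B-to-H} applied to the induced subgraph $\bG[S]$ on low-degree vertices.

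For the lower bound, \cref{lemma:truncation} gives, for $B$ sufficiently large and with high probability, $\ev{\bx^{(n)}, \ol{M}_{\bG_B, \ell} \bx^{(n)}} < -\alpha_{\scaleto{\ref{lemma:truncation}}{5pt}} n$ for every unit $x \in \mathrm{span}\qty{\psi_1,\dots,\psi_r}$. With high probability every community is nonempty, so the lift map $x \mapsto \bx^{(n)}$ is injective and produces an $r$-dimensional subspace $V \subseteq \R^n$. Using the crude bound $\norm{\bx^{(n)}}^2 = \sum_c x(c)^2 n_c \le n$ for unit $x \in \R^k$, normalization yields $\ev{\hat{v}, \ol{M}_{\bG_B, \ell}\hat{v}} \le -\alpha_{\scaleto{\ref{lemma:truncation}}{5pt}}$ on all unit $\hat{v} \in V$, and Courant--Fischer delivers at least $r$ eigenvalues of $\ol{M}_{\bG_B, \ell}$ at most $-\alpha_{\scaleto{\ref{lemma:truncation}}{5pt}}$.

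For the upper bound I would chain four reductions. First, any $v$ with $\ev{v, \ol{M}_{\bG_B, \ell} v} < 0$ must have $A_{\bG_B}^{(\ell)} v \ne 0$, so the map $v \mapsto A_{\bG_B}^{(\ell)} v$ injectively embeds a maximal negative subspace of $\ol{M}_{\bG_B, \ell} = A_{\bG_B}^{(\ell)}\ol{H}_{\bG_B}(t)A_{\bG_B}^{(\ell)}$ into a negative subspace of $\ol{H}_{\bG_B}(t)$ (a one-sided Sylvester's law of inertia). Second, $\ol{D}_{\bG_B} - D_{\bG_B}$ is a nonnegative diagonal (on $S$-vertices it counts edges from $S$ to $T = \qty{v : \deg_{\bG}(v) > B}$; on $T$-vertices it is zero), so $\ol{H}_{\bG_B}(t) \psdge H_{\bG_B}(t)$ and Loewner monotonicity shrinks the count further. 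Third, since edges to $T$ are deleted in $\bG_B$, the matrix $H_{\bG_B}(t)$ is block diagonal with positive block $(1-t^2)I$ on $T$ and $H_{\bG[S]}(t)$ on $S$, so all of its negative eigenvalues live in the $S$ block. Fourth, \cref{lem:B-to-H} applied to $\bG[S]$ bounds the number of negative eigenvalues of $H_{\bG[S]}(t)$ by the number of real eigenvalues of $B_{\bG[S]}$ beyond $1/t$ (above for $t > 0$, below for $t < 0$).

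The main obstacle is the final step: bounding the number of real eigenvalues of $B_{\bG[S]}$ beyond $1/t$ by $r+1$. \cref{thm:BLM} provides exactly this count for $B_{\bG}$ itself --- the outliers are $d$ together with the $r$ copies of $\lambda_2(T) d$, while the remaining spectrum is trapped in a disk of radius $(1+o(1))\sqrt{d}$ which sits strictly inside $1/t = \sqrt{(1+\eps)d}/(1+\delta)$ --- but one needs it for the vertex-deleted subgraph $\bG[S]$. Since \cref{lemma:sbm-neighborhoods} implies $\abs{T}/n$ can be driven arbitrarily small by taking $B$ large, $\bG[S]$ ought to inherit the BLM spectral picture. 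I would establish this either by a perturbation of the BLM argument (tracking how the outlier eigenvalues drift under deletion of an $o_B(1)$ fraction of vertices, and using the neighborhood bound from \cref{lemma:sbm-neighborhoods} to control the trace-moment bulk estimates for $\mathrm{tr}(B_{\bG[S]}^{2k})$) or by coupling $\bG[S]$ to an honest SBM on $\abs{S}$ vertices and invoking \cref{thm:BLM} as a black box on the coupled graph.
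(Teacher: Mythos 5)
Your lower-bound argument (Courant--Fischer applied to the witness subspace from \Cref{lemma:truncation}) matches the paper, and steps 1--3 of your upper-bound chain are individually correct. But you have maneuvered yourself into a genuinely harder problem than the paper's, and your step 4 is a real gap: you arrive at needing to control the number of real eigenvalues of $B_{\bG[S]}$ beyond $1/t$, which is a BLM-type theorem for a vertex-deleted subgraph, and neither of your two sketched fixes (perturbative re-derivation of BLM, or coupling $\bG[S]$ to a fresh SBM) is carried out --- each would be a substantial piece of work on its own.

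The step that puts you on the harder path is step 2, where you downgrade $\ol{H}_{\bG_B}(t) \psdge H_{\bG_B}(t)$ and pass to $H_{\bG_B}(t)$, whose $S\times S$ block is $H_{\bG[S]}(t)$ (degrees computed inside $\bG[S]$). That discards exactly the structure the nonstandard truncation was designed to provide. Observe instead that for $u,v\in S$ (i.e.\ $\deg_{\bG}(u),\deg_{\bG}(v)\le B$) we have $\ol{D}_{\bG_B}[u,u]=\deg_{\bG}(u)$ and $(A_{\bG_B})_{uv}=(A_{\bG})_{uv}$, so $\ol{H}_{\bG_B}(t)\big|_{S,S} = H_{\bG}(t)\big|_{S,S}$ \emph{exactly}, with $\ol{H}_{\bG_B}(t) = H_{\bG}(t)\big|_{S,S} \oplus (1-t^2)I_T$ block-diagonal and the $T$-block positive definite. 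So the negative eigenvalues of $\ol{H}_{\bG_B}(t)$ are precisely those of a \emph{principal submatrix of the full-graph Bethe Hessian} $H_{\bG}(t)$, and Cauchy interlacing plus \Cref{lem:B-to-H} and \Cref{thm:BLM} applied to $\bG$ (not $\bG[S]$!) immediately give the bound of $r+1$. Your step 1 (Sylvester) then transfers the count to $\ol{M}_{\bG_B,\ell}$. This is what the paper's remark that ``it would not have been true if we had naively truncated $D_{\bG_B}$'' is pointing at: your step 2 is the naive truncation in disguise, and it forces the $\bG[S]$ detour that the construction was built to avoid.
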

\begin{proof}
    By construction (see \cref{sec:recovery-algo}), $\ol{H}_{\ell}$ is a principal submatrix of $H_{\ell}$. This is where we use the non-standard way in which we truncated the degree matrix to get $\ol{H}_{\ell}$. It would not have been true if we had naively truncated $D_{\bG_B}$. 

    Since $H_{\ell}$ has at most $r+1$ negative eigenvalues, by the Cauchy interlacing theorem, its principal submatrix $\ol{H}_{\ell}$ has at most $r+1$ negative eigenvalues. 
    The number of negative eigenvalues of $ \ol{M}_{\bG_B,\ell} = A_{\bG_B}^{(\ell)} \ol{H}_{\ell} A_{\bG_B}^{(\ell)}$ is the same as that of $\ol{H}_{\ell}$.  
    Hence $\ol{M}_{\bG_B, \ell}$ has at most $r+1$ negative eigenvalues, and \cref{lemma:truncation} implies that $\ol{M}_{\bG_B, \ell}$ has at least $r$ negative eigenvalues.
    Now using $\alpha_{\scaleto{\ref{lemma:truncation}}{5pt}}$ guaranteed by the second part of \cref{lemma:truncation}, we conclude.
\end{proof}

\section{Robust recovery of a subspace} \label{sec:robust-eigenspace}

In this section, we consider the following algorithmic problem, which we recall from \cref{sec:robust-eigenspace-overview}. 
\begin{assumption}  \label{assumption:matrix-perturb}
    Let $M$ be a $n \times n$ matrix with at most $r$ negative eigenvalues.
    Let $\wt{M}$ be a corrupted version of $M$.
    Formally, let $\wt{M} = M + \Delta$ where $\Delta$ is supported on a $\gamma n \times \gamma n$ submatrix, indexed by $Q\subseteq[n]$.
    Further, assume that the $\ell_1$-norm of every row and column of $M$ and $\Delta$ are bounded by $K$.
    Observe that this implies that $\norm*{M} \leq K$ and $\norm*{\Delta} \leq K$.
\end{assumption}

\problemsubspace*

We show that Phase 2 of the algorithm described in \Cref{sec:recovery-algo} with $\wt{M}$ as input indeed solves \Cref{prob:subspace-recovery}, where a more precise quantitative dependence is articulated below.
\begin{theorem} \label{thm:eigenspace-recovery}
    For every $C, v, K > 0$, the following holds for all sufficiently small $\gamma > 0$:
    There is an efficient algorithm that takes in $\wt{M}$ from \Cref{assumption:matrix-perturb} along with $C$ and $\upsilon$ as input, and with high probability over the randomness of the algorithm, outputs a subspace $U$ with the following properties.
    \begin{enumerate}
        \item \label{item:low-dim} The dimension of $U$ is at most $O\parens*{\frac{K}{\upsilon}\cdot r}$.
        \item \label{item:deloc} The diagonal entries of the projection matrix $\Pi_U$ are at most $O\parens*{\frac{C^2K^3}{\upsilon^3}\cdot\frac{r}{n}}$.
        \item \label{item:deloc-quadform} For any $C$-delocalized unit vector $y$ such that $\angles*{y,My} < -\upsilon$, we have $\angles*{y, \Pi_U y} \ge \Omega\parens*{\frac{\upsilon}{K}}$.
    \end{enumerate}
\end{theorem}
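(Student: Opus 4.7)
The plan is to analyze each of the three operations in Phase 2 in turn, corresponding to the three guarantees of the theorem. I set $\eta = \Theta(\upsilon)$ throughout, and aim to show (i) the while loop terminates in $T = O(\gamma n)$ iterations with $\Phi(M^{(T)}) \le 2Kr/\eta$; (ii) the trimming forces $\Pi^{(T)}_{S,S}$ to have all entries of order $\tau/n$; and (iii) any delocalized negative witness $y$ survives each step via H\"older-type estimates.

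For the iteration count, the basic observation is that $\Phi(M^{(t)})$ is non-increasing: zeroing row/column $i$ turns $M^{(t)}$ into a principal submatrix of size $n-1$ padded with zeros, so by Cauchy interlacing each step can move at most one eigenvalue across the threshold $-\eta$. Decomposing $M^{(t)} = M_t + \Delta_t$ with $M_t = M_{S^{(t)}, S^{(t)}}$ and $\Delta_t = \Delta_{S^{(t)}, S^{(t)}}$ supported on $(Q \setminus Q^{(t)}) \times (Q \setminus Q^{(t)})$, Weyl's inequality yields $\Phi(M^{(t)}) \le r + |Q \setminus Q^{(t)}|$, so while the loop runs, $|Q \setminus Q^{(t)}| \gtrsim \Phi(M^{(t)})$. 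A potential argument using the eigenvalue equation $\mu_j P u_j = PMu_j$ for the clean projector $P$ onto $S^{(t)} \setminus Q$ (which holds since $P \Delta_t = 0$), combined with the rank-$r$ structure of $PMP$ via the bound $\Tr(PMP\, \Pi^{(t)}) \ge -Kr - \eta \Phi(M^{(t)})$, shows that a constant fraction of the diagonal mass of $\Pi^{(t)}$ sits on $Q \setminus Q^{(t)}$; so each sampled index lies in $Q$ with constant probability and the algorithm absorbs the corrupted set within $T = O(\gamma n)$ iterations with high probability. The dimension bound $\dim U \le \Phi(M^{(T)}) \le 2Kr/\eta = O(Kr/\upsilon)$ then follows immediately.

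For delocalization, trimming gives $\Pi^{(T)}_{ii} \le \tau/n$ for $i \in S$, and the PSD inequality $|\Pi^{(T)}_{ij}| \le \sqrt{\Pi^{(T)}_{ii}\Pi^{(T)}_{jj}}$ propagates this bound to every entry of $\Pi^{(T)}_{S,S}$. For an eigenvector $v$ of $\Pi^{(T)}_{S,S}$ with eigenvalue $\lambda \ge \eta/K$, Cauchy--Schwarz applied to the eigenvalue equation $\lambda v_i = \sum_k \Pi^{(T)}_{ik} v_k$ gives $|v_i|^2 \le \Pi^{(T)}_{ii}/\lambda^2 \le K^2 \tau/(\eta^2 n)$, and summing $v_{j,i}^2$ over the $O(Kr/\upsilon)$ basis vectors of $U$ gives the claimed diagonal bound on $\Pi_U$ after choosing $\tau$ appropriately.

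For witness preservation, H\"older's inequality bounds $\abs{\angles*{y, \Delta y}} \le \norm*{yy^\top}_{\max} \cdot \norm*{\Delta}_1 \le (C^2/n)(K\gamma n) = C^2 K \gamma$, and a row-sum accounting shows each zeroing step changes $\angles*{y, M^{(t)} y}$ by at most $O(C^2 K/n)$, so the cumulative drift over $T = O(\gamma n)$ iterations is $O(C^2 K \gamma)$; for $\gamma$ small enough both are less than $\upsilon/4$, yielding $\angles*{y, M^{(T)} y} \le -\upsilon/2$. Since $M^{(T)} \psdge -K\Pi^{(T)} - \eta(I - \Pi^{(T)})$, this forces $\angles*{y, \Pi^{(T)} y} \ge (\upsilon/2 - \eta)/K = \Omega(\upsilon/K)$. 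Trimming loses at most $\norm*{y_{S^c}}^2 \le |S^c| C^2/n = O(C^2 Kr/(\upsilon\tau))$, which is $o(\upsilon/K)$ once $\tau$ is large enough, and the $\eta/K$ eigenvalue threshold truncates only a small additional fraction of the quadratic form, leaving $\angles*{y, \Pi_U y} \ge \Omega(\upsilon/K)$. The main technical obstacle is the iteration bound $T = O(\gamma n)$: rigorously showing that diagonal-proportional sampling concentrates on the corruption set $Q$ requires a delicate combination of the quadratic form $\Tr(PMP\, \Pi^{(t)})$ with the eigenvalue equation, tracking how the rank structure of $\Delta$ interacts with the spectral structure of $\Pi^{(t)}$.
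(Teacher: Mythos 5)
Your proposal follows the same high-level plan as the paper's proof (iterate, show the sampling concentrates on $Q$, then trim and truncate), but there are two genuine gaps and the argument as written would not close.

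First, the termination/iteration bound. You correctly identify this as the crux, but you only sketch a ``potential argument'' involving the clean projector $P$ and $\Tr(PMP\,\Pi^{(t)})$, and you explicitly acknowledge you haven't carried it out. The paper's Lemma~\ref{lem:outliers-localize} handles this with a much simpler chain of inequalities that does not require tracking eigenvalue equations: since every eigenvalue of $\wt M^{(t)}$ in the range of $\Pi^{(t)}$ is $\le -\eta$, one has $-\eta\,\Tr(\Pi^{(t)}) \ge \langle \wt M^{(t)}, \Pi^{(t)}\rangle = \langle M_{R,R},\Pi^{(t)}\rangle + \langle \Delta_{R,R},\Pi^{(t)}\rangle \ge -Kr - K\,\Tr(\Pi^{(t)}_{Q,Q})$, where the first term is bounded by the rank-$\le r$ negative part of $M_{R,R}$ (Cauchy interlacing) and the second uses that $\Delta_{R,R}$ is supported on surviving corrupted indices. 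Rearranging gives directly that the trace mass on $Q$ is at least an $\eta/K - r/\Tr(\Pi^{(t)})$ fraction, which is $\ge \eta/(2K)$ while the loop is running. Your Weyl bound $\Phi(M^{(t)}) \le r + |Q^{(t)}|$ is true but not what's needed; what is needed is precisely this lower bound on the conditional probability of sampling from $Q$, and your outline does not yield it.

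Second, in the preservation of the quadratic form under trimming, you claim the row/column trimming ``loses at most $\norm{y_{S^c}}^2$.'' This is incorrect because of the cross term: writing $y = y_S + y_{\bar S}$, the loss $\langle y, \Pi y\rangle - \langle y_S,\Pi y_S\rangle = 2\langle y_S,\Pi y_{\bar S}\rangle + \langle y_{\bar S},\Pi y_{\bar S}\rangle$ is of order $\norm{y_{\bar S}}$ (first power), not $\norm{y_{\bar S}}^2$. With $\tau = \Theta(C^2K^2r/\eta^2)$ one gets $\norm{y_{\bar S}} = O(\sqrt{\eta/K})$, which is \emph{not} small compared to the target $\Omega(\upsilon/K)$ when $K$ is large; the argument breaks. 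The paper avoids this by using the AM--GM inequality $\langle y,\Pi y\rangle \le 2\bigl(\langle y,\Pi_{S,S} y\rangle + \langle y,\Pi_{\bar S,\bar S} y\rangle\bigr)$, after which the $\bar S$ contribution really is bounded by $\norm{y_{\bar S}}^2 \le \eta/K$ (genuinely quadratic), and the cross term never appears.

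There are also two minor inaccuracies worth noting: $\norm{\wt M_{R,R}}\le 2K$ (not $K$), since both $M$ and $\Delta$ contribute $\ell_1$ row mass up to $K$ each; and your per-eigenvector Cauchy--Schwarz argument for delocalization gives $\diag(\Pi_U)\lesssim K^3 r\tau/(\eta^3 n)$, which is off by an extra factor of $r$ relative to the claimed bound --- the paper's PSD chain $\Pi_{S,S}\succeq \wt\Pi$ and $\Pi_U \preceq (K/\eta)\wt\Pi$ avoids this loss entirely.
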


To prove \Cref{thm:eigenspace-recovery}, we analyze Phase 2 of the algorithm in \Cref{sec:recovery-algo}.
The precise version of the algorithm we analyze is stated below.

\noindent\rule{16cm}{0.4pt}

\begin{algorithm}   \label{algo:subspace-recovery}
The provided input is $\wt{M}$ according to \Cref{assumption:matrix-perturb}, and real numbers $\upsilon$ and $C$.

Define $\eta \triangleq \frac{\upsilon}{48}$.
\begin{enumerate}
    \item Define $\wt{M}^{(0)}$ as $\wt{M}$.  Let $t$ as a counter initialized at $0$, and let $\Phi(X)$ be the number of eigenvalues of $X$ smaller than $-\eta$.
    \item While $\Phi(\wt{M}^{(t)}) > \frac{2K}{\eta}r$: compute the projection matrix $\Pi^{(t)}\triangleq \Pi_{\le -\eta}(\wt{M}^{(t)})$, choose a random $i\in[n]$ with probability $\frac{\Pi^{(t)}_{i,i}}{\Tr\parens*{\Pi^{(t)}}}$, and define $\wt{M}^{(t+1)}$ as the matrix obtained by zeroing out the $i$-th row and column of $\wt{M}^{(t)}$.
    Then increment $t$.
    \item Let $T$ be the time of termination and let $\tau = \frac{2C^2K^2r}{\eta^2}$.
    Compute $\Pi^{(T)}$, and define the set of indices $S$ as,
    \[ S = \left\{ i \Big |  \Pi^{(T)}_{i,i} \le \frac{\tau}{n} \right\} \ .  \]
    Define $\wt{\Pi}$ as $\parens*{\Pi^{(T)}_{S,S}}_{\ge \eta/K}$, and compute its span $U$.\footnote{Recall that $\parens*{X}_{\ge a}$ denotes the truncation of the eigendecomposition of $X$ for eigenvalues at least $a$.}
    \item Output $U$.
\end{enumerate}
\noindent\rule{16cm}{0.4pt}
\end{algorithm}

We first bound on the number of eigenvalues of $\wt{M}^{(T)}$ smaller than $-\eta$, thus establishing \Cref{item:low-dim} of \Cref{thm:eigenspace-recovery}.

\subsection{Proof of \Cref{item:low-dim}: spectrum of cleaned-up matrix}\label{subsec:cleaned-up-spectrum}

\begin{lemma}   \label{lem:cleaned-up-matrix}
    The matrix $\wt{M}^{(T)}$ has the following properties:
    \begin{enumerate}
        \item \label{item:few-outliers} $\wt{M}^{(T)}$ has at most $\frac{2K}{\eta}r$ eigenvalues smaller than $-\eta$.
        \item \label{item:small-delete} $\wt{M}^{(T)}$ is a submatrix $\wt{M}_{R,R}$ where $|R|\ge \parens*{1-\frac{4K}{\eta}\gamma}n$ with high probability over the randomness of the algorithm.
    \end{enumerate}
\end{lemma}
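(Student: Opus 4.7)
The first claim is immediate, since the algorithm terminates precisely when $\Phi(\wt{M}^{(t)}) \le \frac{2K}{\eta} r$. For the second claim, the plan is to bound the termination time $T$ by $\frac{4K}{\eta} \gamma n$ with high probability, after which $|R| = n - T \ge \parens*{1 - \frac{4K}{\eta}\gamma} n$ follows.

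The main step is a ``localization'' lemma: at every iteration $t$ before termination, the sampling distribution places probability at least $\frac{\eta}{2K}$ on $Q$. To prove this, let $R_t \subseteq [n]$ be the set of surviving indices, and decompose $\wt{M}^{(t)} = P_{R_t} M P_{R_t} + P_{R_t} \Delta P_{R_t}$, where $P_{R_t}$ denotes coordinate projection onto $R_t$. By Cauchy interlacing applied to $M$, the principal submatrix $P_{R_t} M P_{R_t}$ has at most $r$ negative eigenvalues. Let $V_t$ denote the eigenspace of $\wt{M}^{(t)}$ with eigenvalue at most $-\eta$, of dimension $p = \Phi(\wt{M}^{(t)}) > \frac{2K}{\eta} r$; then there is a subspace $V'_t \subseteq V_t$ of dimension at least $p - r$ orthogonal to the negative eigenspace of $P_{R_t} M P_{R_t}$. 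For any unit $v \in V'_t$, $\angles*{v, P_{R_t} M P_{R_t} v} \ge 0$ and $\angles*{v, \wt{M}^{(t)} v} \le -\eta$ together force $\angles*{v, P_{R_t} \Delta P_{R_t} v} \le -\eta$. Since $\Delta$ is supported on $Q$ with $\|\Delta\| \le K$, this yields $\|P_Q v\|^2 \ge \eta/K$, where $P_Q$ is coordinate projection onto $Q$. Extending an orthonormal basis $v_1, \dots, v_{p-r}$ of $V'_t$ to an orthonormal basis of $V_t$, we get
\[
\sum_{i \in Q} \Pi^{(t)}_{i,i} = \Tr(P_Q \Pi^{(t)}) = \sum_{j=1}^{p} \|P_Q v_j\|^2 \ge (p - r) \cdot \frac{\eta}{K} \ge \frac{p}{2} \cdot \frac{\eta}{K},
\]
where the final inequality uses $p > 2Kr/\eta \ge 2r$ under the harmless assumption $\eta \le K$ (otherwise $\|\wt{M}^{(t)}\| \le 2K < \eta$ and $\Phi \equiv 0$). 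Dividing by $\Tr(\Pi^{(t)}) = p$ gives the claimed $\frac{\eta}{2K}$ lower bound on the probability that the sampled index lies in $Q$.

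Finally, I bound $T$ via concentration. Once $Q \cap R_t = \emptyset$, we have $\wt{M}^{(t)} = P_{R_t} M P_{R_t}$ with at most $r \le \frac{2K}{\eta} r$ negative eigenvalues, so the algorithm has terminated. Thus $T$ is at most the first time the algorithm has sampled $|Q| \le \gamma n$ indices lying in $Q$. Since each sample lies in $Q$ with conditional probability at least $\eta/(2K)$, a standard multiplicative Chernoff bound for the negative binomial distribution gives $T \le \frac{4K}{\eta} \gamma n$ except with probability $\exp(-\Omega(\gamma n))$. The main obstacle in this proof is the localization lemma, which requires carefully combining the submatrix structure of $\wt{M}^{(t)}$, the spectral bound on $M$, and the $Q$-support of $\Delta$; the rest is accounting.
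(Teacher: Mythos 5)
Your proof is correct and reaches the same $\eta/(2K)$ localization bound, but via a genuinely different route than the paper's Lemma~\ref{lem:outliers-localize}. The paper works at the level of traces: from $-\eta\,\Tr(\Pi^{(t)}) \ge \angles*{\wt{M}^{(t)},\Pi^{(t)}} = \angles*{M_{R,R},\Pi^{(t)}} + \angles*{\Delta_{R,R},\Pi^{(t)}}$ it lower bounds the first term by $\Tr\parens*{(M_{R,R})_{\le 0}} \ge -Kr$ using both Cauchy interlacing (at most $r$ negative eigenvalues of $M_{R,R}$) and the operator norm bound $\norm*{M}\le K$, and the second by $-K\,\Tr(\Pi^{(t)}_{Q,Q})$, yielding the ratio $\ge \frac{\eta}{K} - \frac{r}{\Tr(\Pi^{(t)})}$. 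You instead argue eigenvector by eigenvector: carve out a $(p-r)$-dimensional subspace $V'_t \subseteq V_t$ orthogonal to the negative eigenspace of $M_{R,R}$, show each unit vector there must satisfy $\norm*{P_Q v}^2 \ge \eta/K$, and sum over an orthonormal basis of $V_t$. This replaces the norm bound $\norm*{M}\le K$ on the $M$-term with a pure dimension count, and gives the (marginally sharper) intermediate bound $\frac{p-r}{p}\cdot\frac{\eta}{K}$; both collapse to $\eta/(2K)$ under the termination condition $p > \frac{2K}{\eta}r$. The termination-time argument is the same: once $Q$ is exhausted the algorithm stops, and a Chernoff/martingale bound on $T$ controls $|R|$. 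One small nit: your justification that $\eta \le K$ is harmless appeals to $\norm*{\wt{M}^{(t)}}\le 2K < \eta$, which needs $\eta > 2K$ rather than $\eta > K$ to force $\Phi\equiv 0$; in the algorithm one has $\eta = \upsilon/48 \ll K$, so the assumption does hold, but the stated fallback leaves an unaddressed window $K < \eta \le 2K$.
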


In service of proving \Cref{lem:cleaned-up-matrix}, we prove the following statement about the localization of outlier eigenvectors on the indices of the corruptions $Q$.
\begin{lemma}   \label{lem:outliers-localize}
    Let $R^{(t)} \subseteq [n]$ denote the non-zero rows and columns of $\wt{M}^{(t)}$ and let $Q^{(t)} = Q \cap R^{(t)}$. Then the following holds:
    \[
        \frac{\Tr\parens*{\Pi_{\le -\eta}\parens*{\wt{M}^{(t)}}_{Q^{(t)}, Q^{(t)}}}}{\Tr\parens*{\Pi_{\le -\eta}\parens*{\wt{M}^{(t)}}}} \ge \frac{\eta}{K} - \frac{r}{\Tr\parens*{\Pi_{\le -\eta}\parens*{\wt{M}^{(t)}}}}
    \]
\end{lemma}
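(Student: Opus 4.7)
The plan is to exploit the decomposition $\wt{M}^{(t)} = \widehat{M} + \widehat{\Delta}$ where $\widehat{M} \triangleq M_{R^{(t)}, R^{(t)}}$ and $\widehat{\Delta} \triangleq \Delta_{R^{(t)}, R^{(t)}}$. The key point is that $\widehat{M}$ inherits at most $r$ negative eigenvalues from $M$ (via Cauchy interlacing applied to the nonzero principal block), while $\widehat{\Delta}$ is supported entirely on the submatrix indexed by $Q^{(t)} \times Q^{(t)}$. The outlier eigenspace of $\wt{M}^{(t)}$ must therefore ``borrow'' its negativity from $\widehat{\Delta}$ beyond what $\widehat{M}$ can provide, which will force the eigenspace to place significant mass on the coordinates in $Q^{(t)}$.

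Concretely, let $\Pi \triangleq \Pi_{\le -\eta}(\wt{M}^{(t)})$. Since every eigenvalue captured by $\Pi$ is at most $-\eta$, we have
\[
    \Tr\bigl(\Pi\, \wt{M}^{(t)}\, \Pi\bigr) \le -\eta \, \Tr(\Pi).
\]
Decomposing this trace as $\Tr(\Pi \widehat{M} \Pi) + \Tr(\Pi \widehat{\Delta} \Pi)$, I would bound each term separately. For the clean part, the inertia of $\Pi \widehat{M} \Pi$ restricted to $\operatorname{range}(\Pi)$ is controlled by the inertia of $\widehat{M}$, which has at most $r$ negative eigenvalues, each of magnitude at most $\|\widehat{M}\|_{\mathrm{op}} \le \|M\|_{\mathrm{op}} \le K$. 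Hence
\[
    \Tr(\Pi \widehat{M} \Pi) \ge -rK.
\]

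For the corruption part, I would use the fact that $\widehat{\Delta} = \mathbf{1}_{Q^{(t)}} \widehat{\Delta} \mathbf{1}_{Q^{(t)}}$, where $\mathbf{1}_{Q^{(t)}}$ denotes the diagonal indicator matrix for $Q^{(t)}$. Cycling the trace and applying the inequality $|\Tr(AB)| \le \|A\|_{\mathrm{tr}} \|B\|_{\mathrm{op}}$ with $A = \mathbf{1}_{Q^{(t)}} \Pi \mathbf{1}_{Q^{(t)}}$ (which is PSD, so its trace norm equals its trace) and $B = \widehat{\Delta}$ gives
\[
    \bigl|\Tr(\Pi \widehat{\Delta} \Pi)\bigr| \le \|\widehat{\Delta}\|_{\mathrm{op}} \cdot \Tr\bigl(\mathbf{1}_{Q^{(t)}} \Pi \mathbf{1}_{Q^{(t)}}\bigr) \le K \cdot \Tr\bigl(\Pi_{Q^{(t)}, Q^{(t)}}\bigr),
\]
using $\|\widehat{\Delta}\|_{\mathrm{op}} \le \|\Delta\|_{\mathrm{op}} \le K$.

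Combining the three displays yields
\[
    -\eta \, \Tr(\Pi) \ge -rK - K \, \Tr\bigl(\Pi_{Q^{(t)}, Q^{(t)}}\bigr),
\]
and dividing through by $K \, \Tr(\Pi)$ and rearranging gives exactly the claimed bound. The only subtle step is the inertia argument for $\Tr(\Pi \widehat{M} \Pi)$, which requires care since $\Pi \widehat{M} \Pi$ viewed on $\operatorname{range}(\Pi)$ need not share eigenvectors with $\widehat{M}$; but the number of negative eigenvalues of the compressed operator is bounded by that of $\widehat{M}$ via Cauchy interlacing, which is all that is needed. Everything else is a bookkeeping application of the trace-operator norm inequality together with the sparsity of $\widehat{\Delta}$.
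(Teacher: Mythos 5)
Your proposal is correct and follows essentially the same route as the paper: both start from $-\eta\,\Tr(\Pi) \ge \Tr(\Pi\,\wt{M}^{(t)}\Pi)$, split $\wt{M}^{(t)} = M_{R^{(t)},R^{(t)}} + \Delta_{R^{(t)},R^{(t)}}$, bound the clean term below by $-rK$ using that $M_{R^{(t)},R^{(t)}}$ has at most $r$ negative eigenvalues of magnitude at most $K$, and bound the corruption term via trace--operator-norm duality and the support of $\Delta$ on $Q^{(t)}\times Q^{(t)}$. The only cosmetic difference is in the clean term: the paper uses $\Tr(M_{R^{(t)},R^{(t)}}\Pi) \ge \Tr\bigl((M_{R^{(t)},R^{(t)}})_{\le 0}\bigr)$ by a direct PSD argument, whereas you invoke Cauchy interlacing for the compression $\Pi\widehat{M}\Pi$; both are valid and equivalent in substance.
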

\begin{proof}
    The desired statement follows from the chain of inequalities below.
    \begin{align*}
        -\eta \cdot \Tr\parens*{\Pi_{\le-\eta}\parens*{\wt{M}^{(t)}}} &\ge \Tr\parens*{\wt{M}^{(t)}_{\le-\eta}} \\
        &= \angles*{\wt{M}^{(t)}, \Pi_{\le-\eta}\parens*{\wt{M}^{(t)}}} \\
        &= \angles*{M_{R^{(t)},R^{(t)}}+ \Delta_{R^{(t)},R^{(t)}}, \Pi_{\le-\eta}\parens*{\wt{M}^{(t)}}} \\
        &\ge \Tr(\parens*{M_{R^{(t)},R^{(t)}}}_{\le 0}) - \norm*{\Delta} \cdot \Tr\parens*{ \Pi_{\le-\eta}\parens*{\wt{M}^{(t)}}_{Q^{(t)}, Q^{(t)}} } \\
        &\ge -Kr - K\cdot \Tr\parens*{ \Pi_{\le-\eta}\parens*{\wt{M}^{(t)}}_{Q^{(t)}, Q^{(t)}} }.
    \end{align*}
    In the last step, we used the fact that $\norm{M_{R^{(t)},R^{(t)}}} \leq \norm{M} \leq K$ and the number of negative eigenvalues of $\parens*{M}_{R^{(t)},R^{(t)}}$ is at most the number of negative eigenvalues of $M$, namely $r$.
\end{proof}

We now prove \Cref{lem:cleaned-up-matrix}.
\begin{proof}[Proof of \Cref{lem:cleaned-up-matrix}] 
    By definition of the iterative process, \Cref{item:few-outliers} is satisfied on termination, and the process has to terminate since it cannot run for more than $n$ steps.

    It remains to prove \Cref{item:small-delete}.
    Towards doing so, first observe that the process terminates if all rows and columns in $Q$ are zeroed out.
    Indeed, $M$ itself satisfies the termination condition, and if all the rows and columns in $Q$ are zeroed out, then the support is a principal submatrix of $M$, which satisfies the termination condition by Cauchy's interlacing theorem.

    Suppose the termination condition is not satisfied at iteration $t$, then the $i$-th row/column of the matrix is zeroed out where $i$ is chosen with probability $\frac{\Pi_{\le-\eta}\parens{\wt{M}^{(t)}}_{i,i}}{\Tr\parens*{\Pi_{\le-\eta}\parens{\wt{M}^{(t)}}}}$.
    Consequently, the probability that $i$ is in $Q$ is equal to:
    \[
        \frac{\Tr\parens*{\Pi_{\le -\eta}\parens*{\wt{M}^{(t)}}_{Q^{(t)}, Q^{(t)}}}}{\Tr\parens*{\Pi_{\le -\eta}\parens*{\wt{M}^{(t)}}}},
    \]
    which by \Cref{lem:outliers-localize} is at least
    \[
        \frac{\eta}{K} - \frac{r}{\Tr\parens*{\Pi_{\le -\eta}\parens*{\wt{M}^{(t)}}}}.
    \]
    Since $\Tr\parens*{\Pi_{\le -\eta}\parens*{\wt{M}^{(t)}}}$ counts the number of eigenvalues smaller than $-\eta$ and the termination condition is not satisfied, the above probability is at least $\frac{\eta}{2K}$.

    Since the process terminates if all of $Q$ is zeroed out (in which case $R = \ol{Q}$), a standard martingale argument shows that the process can last for at most $\frac{4K}{\eta}\gamma n$ steps with high probability.
\end{proof}

We now proceed to proving \Cref{item:deloc,item:deloc-quadform} of \Cref{thm:eigenspace-recovery}.

\subsection{Proofs of \Cref{item:deloc,item:deloc-quadform}: delocalization and correlation with recovered subspace}\label{subsec:postproc-props}
The proof of \Cref{item:deloc} is fairly straightforward.
\begin{lemma}
    All the diagonal entries of $\Pi_U$, the projection matrix onto $U$, are bounded by $\frac{2C^2K^3}{\eta^3}\cdot\frac{r}{n}$.
\end{lemma}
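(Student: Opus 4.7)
The plan is to leverage the fact that $\wt{\Pi}$ is a truncated eigendecomposition of $\Pi^{(T)}_{S,S}$, which zeroes out all rows and columns outside $S$. First I would observe that any eigenvector of $\Pi^{(T)}_{S,S}$ with nonzero eigenvalue must be supported on $S$, since the matrix annihilates vectors supported off $S$ (both on the left and right). Hence $U = \mathrm{span}(\wt{\Pi})$ lies in the coordinate subspace indexed by $S$, and so $(\Pi_U)_{i,i} = 0$ for every $i \notin S$ --- trivially within the target bound.

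For $i \in S$, the main step is a Loewner-order comparison between $\Pi_U$ and $\Pi^{(T)}_{S,S}$. Writing the symmetric PSD matrix $\Pi^{(T)}_{S,S} = \sum_j \mu_j v_j v_j^\top$ in its eigendecomposition, we have $\Pi_U = \sum_{j : \mu_j \ge \eta/K} v_j v_j^\top$ by definition, and every retained term satisfies $\mu_j \ge \eta/K$. Comparing term-by-term thus gives
\[
    \tfrac{\eta}{K}\, \Pi_U \;\preceq\; \sum_{j : \mu_j \ge \eta/K} \mu_j v_j v_j^\top \;\preceq\; \Pi^{(T)}_{S,S},
\]
where the last step uses that the dropped terms are PSD. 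Reading off the $i$-th diagonal entry and invoking the definition of $S$ yields, for every $i \in S$,
\[
    (\Pi_U)_{i,i} \;\le\; \tfrac{K}{\eta}\,(\Pi^{(T)}_{S,S})_{i,i} \;=\; \tfrac{K}{\eta}\,\Pi^{(T)}_{i,i} \;\le\; \tfrac{K}{\eta}\cdot\tfrac{\tau}{n}.
\]
Substituting the algorithm's choice $\tau = 2C^2 K^2 r/\eta^2$ produces the claimed bound $\tfrac{2C^2 K^3}{\eta^3}\cdot\tfrac{r}{n}$.

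There is no real obstacle here: once one recognizes that $\Pi_U$ can be sandwiched as $(\eta/K)\Pi_U \preceq \Pi^{(T)}_{S,S}$ via the spectral truncation threshold, and that the definition of $S$ directly controls the diagonals of $\Pi^{(T)}$ on the relevant indices, the rest is a one-line calculation. The only mildly subtle point is handling the indices outside $S$, which is where the support observation from the first paragraph does the work.
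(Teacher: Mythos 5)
Your proof is correct and takes essentially the same route as the paper's: both hinge on the Loewner-order sandwich $\tfrac{\eta}{K}\Pi_U \psdle \wt{\Pi} \psdle \Pi^{(T)}_{S,S}$ together with the diagonal control $\Pi^{(T)}_{i,i} \le \tau/n$ for $i\in S$ baked into the definition of $S$. The only cosmetic difference is that you split off the $i\notin S$ case explicitly and argue from the eigendecomposition directly, whereas the paper phrases the same comparison via $\wt{\Pi}$ as an explicit intermediary.
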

\begin{proof}
    Let $S$ be the set of indices whose diagonal entries are at most $\frac{\tau}{n}$, where $\tau = \frac{2C^2K^2}{\eta^2}\cdot r$.
    By definition, $\Pi_{S,S}$ has entries bounded by $\frac{\tau}{n}$, and since $\Pi_{S,S}\psdge\parens*{\Pi_{S,S}}_{\ge\frac{\eta}{K}} = \wt{\Pi}$, the diagonal entries of $\wt{\Pi}$ are at most $\frac{\tau}{n}$.
    All eigenvalues of $\wt{\Pi}$ are at least $\frac{\eta}{K}$, and hence $\frac{K}{\eta}\cdot\wt{\Pi}\psdge \Pi_U$, which gives us a bound of $\frac{K\tau}{\eta n}$ on the diagonal entries of $\Pi_U$, from which the statement follows.
\end{proof}

Now, we turn our attention to proving \Cref{item:deloc-quadform}.
We first show that $\wt{M}_{R,R} = \wt{M}^{(T)}$ preserves negative correlations with delocalized vectors.
Henceforth, we assume that the high probability guarantee from \Cref{lem:cleaned-up-matrix} that $|R|\ge \parens*{1-\frac{4K}{\eta}\gamma}n$ holds.
\begin{lemma}   \label{lem:neg-cleanup}
    For unit $C$-delocalized vector $y$ such that $\angles*{y, My} < -\upsilon$, we have:
    \(
        \angles*{y, \wt{M}_{R,R}y} < -\frac{\upsilon}{2}.
    \)
\end{lemma}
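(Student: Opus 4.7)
The plan is to directly compare $\angles*{y, \wt{M}_{R,R} y}$ against $\angles*{y, M y}$ and show that their difference is at most $\upsilon/2$ in magnitude, provided $\gamma$ is sufficiently small. Let $P \triangleq [n] \setminus R$, so that by \Cref{lem:cleaned-up-matrix}, $|P| \le \frac{4K}{\eta}\gamma n$ with high probability. Let $y_R$ denote the vector $y$ with all coordinates outside $R$ zeroed out (and $y_P$ analogously). By the definition of $\wt{M}_{R,R}$, we have $\angles*{y, \wt{M}_{R,R} y} = \angles*{y_R, \wt{M} y_R} = \angles*{y_R, M y_R} + \angles*{y_R, \Delta y_R}$, so it suffices to separately control the two error terms $\angles*{y_R, \Delta y_R}$ and $\angles*{y_R, M y_R} - \angles*{y, M y}$.

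The first step is to bound $|\angles*{y_R, \Delta y_R}|$. Since $\Delta$ is supported on $Q \times Q$ with $|Q| \le \gamma n$, I expand
\[
    |\angles*{y_R, \Delta y_R}| \le \sum_{i,j \in Q} |y_i|\,|\Delta_{ij}|\,|y_j| \le \norm*{y}_{\infty}^2 \sum_{i \in Q} \sum_j |\Delta_{ij}| \le \frac{C}{n} \cdot |Q| \cdot K \le CK\gamma,
\]
using $\norm*{y}_\infty^2 \le C/n$ (the delocalization hypothesis) together with the row-$\ell_1$ bound $K$ on $\Delta$.

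The second step handles the truncation cost $\angles*{y, My} - \angles*{y_R, My_R}$. Writing $y = y_R + y_P$ and expanding,
\[
    \angles*{y, My} - \angles*{y_R, My_R} = 2\angles*{y_P, My_R} + \angles*{y_P, My_P}.
\]
Both terms can be controlled in the same way: using $\norm*{y}_\infty^2 \le C/n$, the row-$\ell_1$ bound $K$ on $M$, and $|P| \le \frac{4K}{\eta}\gamma n$, a Hölder-type estimate gives
\[
    |\angles*{y_P, My_R}| + |\angles*{y_P, My_P}| \le \norm*{y}_\infty^2 \sum_{i \in P} \sum_j |M_{ij}| \cdot 2 \le \frac{2C}{n} \cdot \frac{4K\gamma n}{\eta} \cdot K = \frac{8 C K^2 \gamma}{\eta}.
\]

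Combining the two bounds, $\bigl|\angles*{y, \wt{M}_{R,R} y} - \angles*{y, My}\bigr| = O\!\parens*{\frac{C K^2 \gamma}{\eta}}$, which is smaller than $\upsilon/2$ once $\gamma$ is chosen sufficiently small as a function of $C$, $K$, and $\upsilon$ (recalling $\eta = \upsilon/48$). Since $\angles*{y, My} < -\upsilon$ by hypothesis, the conclusion $\angles*{y, \wt{M}_{R,R} y} < -\upsilon/2$ follows. There is no real obstacle here: the entire argument rests on the interplay between the $\ell_\infty$ delocalization of $y$ and the $\ell_1$ row/column bound on $M$ and $\Delta$, which together produce a dimension-free bound on the quadratic-form perturbation.
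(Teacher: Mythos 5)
Your proof is correct and uses essentially the same argument as the paper: both rest on the Hölder-type estimate $\abs{\ev{y, Ey}} \le \norm{yy^\top}_{\max} \norm{E}_1$, combining the $\ell_\infty$ delocalization of $y$ with the row/column $\ell_1$ bounds on $M$ and $\Delta$, and the fact that the error matrix is confined to $O(\gamma n)$ rows and columns. The paper simply lumps the corruption $\Delta_{R,R}$ and the truncation $M - M_{R,R}$ into a single error matrix $E$ before applying Hölder, whereas you bound the two contributions separately; the bookkeeping differs but the idea is identical.
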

\begin{proof}
    By \Cref{assumption:matrix-perturb,lem:cleaned-up-matrix}, we can write $\wt{M}_{R,R}$ as $M+E_1+E_2$ where $E_1$ is supported on at most $\parens*{\frac{4K}{\eta}+1}\gamma n$ rows and $E_2$ is supported on at most $\parens*{\frac{4K}{\eta}+1}\gamma n$ columns.
    Consequently, the entrywise $\ell_1$ norm of $E \triangleq E_1+E_2$ is at most $\frac{10K^2}{\eta}\gamma n$.
    The statement follows from the inequality below for small enough $\gamma$:
    $$
     \angles*{y, Ey} \le \norm*{yy^{\top}}_{\max}\cdot\norm*{E}_1 \le O\parens*{ \frac{10K^2C^2}{\eta} \gamma } < \frac{\upsilon}{2}. \qedhere
    $$
\end{proof}

We have the following immediate corollary of \Cref{lem:neg-cleanup}.
\begin{corollary}   \label{cor:cleanup-proj-corr}
    For any $C$-delocalized unit vector $y$ such that $\angles*{y, My} < -\upsilon$, we have:
    \[
        \angles*{y, \Pi_{\le -\eta}\parens*{\wt{M}_{R,R}} y } \ge \frac{\upsilon/2 - \eta}{2K-\eta}.
    \]
\end{corollary}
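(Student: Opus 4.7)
The plan is to first invoke \Cref{lem:neg-cleanup} to transfer the negative correlation $\angles{y, My} < -\upsilon$ for the uncorrupted matrix to the trimmed corrupted matrix, obtaining $\angles{y, \wt{M}_{R,R} y} < -\upsilon/2$. From there, the corollary is a short spectral argument: decompose the unit vector $y$ along the spectral decomposition of $\wt{M}_{R,R}$ into a piece $y_-$ in the $\le -\eta$ eigenspace and a piece $y_+$ in its orthogonal complement, and show that the quadratic form cannot be as small as $-\upsilon/2$ unless $\|y_-\|^2$ is at least as large as claimed.

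Concretely, write $a \triangleq \angles{y, \Pi_{\le-\eta}(\wt{M}_{R,R}) y} = \|y_-\|^2$, so that $\|y_+\|^2 = 1 - a$. On the orthogonal complement, every eigenvalue of $\wt{M}_{R,R}$ is strictly greater than $-\eta$, giving the lower bound $\angles{y_+, \wt{M}_{R,R} y_+} \ge -\eta(1-a)$. On the $\le -\eta$ eigenspace, I will use the crude operator norm bound $\|\wt{M}_{R,R}\| \le \|M\| + \|\Delta\| \le 2K$ coming from \Cref{assumption:matrix-perturb}, which yields $\angles{y_-, \wt{M}_{R,R} y_-} \ge -2K a$. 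Summing the two contributions gives
\[
    -\tfrac{\upsilon}{2} > \angles{y, \wt{M}_{R,R} y} \ge -(2K-\eta)\, a \;-\; \eta,
\]
and rearranging immediately produces $a \ge \tfrac{\upsilon/2 - \eta}{2K - \eta}$, which is exactly the bound in the corollary.

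There is essentially no obstacle here beyond bookkeeping: the content of the statement has already been carried out in \Cref{lem:neg-cleanup} (which absorbs the corruption and the row/column deletions from the cleanup procedure at the cost of losing a factor of $2$ in $\upsilon$); the rest is a two-line variational argument using only the spectral decomposition of $\wt{M}_{R,R}$ and the norm bound on $\wt{M}$. The only thing to double-check is that $\eta = \upsilon/48$ is small enough that $\upsilon/2 - \eta > 0$, so the denominator-positive quantity on the right is actually a meaningful lower bound; this is immediate from the choice of $\eta$ in \Cref{algo:subspace-recovery}.
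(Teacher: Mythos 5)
Your proposal is correct and matches the paper's argument essentially step for step: invoke \Cref{lem:neg-cleanup} to get $\angles{y, \wt{M}_{R,R}y} < -\upsilon/2$, decompose $y$ along the spectral decomposition of $\wt{M}_{R,R}$ into the $\le -\eta$ eigenspace and its complement, apply the operator-norm bound $\|\wt{M}_{R,R}\| \le 2K$ on the first piece and the eigenvalue threshold $-\eta$ on the second, and rearrange. The paper derives $\|\wt{M}_{R,R}\| \le 2K$ from the maximum row $\ell_1$-norm bound rather than your triangle inequality $\|\wt{M}_{R,R}\| \le \|M\| + \|\Delta\| \le 2K$, but both follow directly from \Cref{assumption:matrix-perturb}, so this is a cosmetic difference.
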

\begin{proof}
Starting with \Cref{lem:neg-cleanup}, we get
\begin{align*}
-\frac{v}{2} & > \angles*{y, \wt{M}_{R,R} y}\\
             &\geq -\norm{\wt{M}_{R,R}} \cdot \angles*{y, \Pi_{\le -\eta}\parens*{\wt{M}_{R,R}} y } + (-\eta) \cdot \parens*{1- \angles*{y, \Pi_{\le -\eta}\parens*{\wt{M}_{R,R}} y }} \ .
\end{align*}
Since the spectral norm of a matrix is at most its maximum $\ell_1$ norm of a row, $\norm*{\wt{M}_{R,R}} \le 2K$.  Substituting in the above inequality and simplifying yields the desired result.
\end{proof}

We are finally ready to establish \Cref{item:deloc-quadform}.
\begin{lemma}
    For any unit $C$-delocalized vector $y$ such that $\angles*{y,My} < -\upsilon$,
    \[
         \angles*{y, \Pi_U y} \ge \Omega\parens*{\frac{\upsilon}{K}}.
    \]
\end{lemma}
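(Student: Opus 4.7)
The plan is to chain several inequalities starting from the lower bound $\angles*{y, \Pi^{(T)} y} \ge \Omega(\upsilon/K)$ supplied by \Cref{cor:cleanup-proj-corr}, and successively transfer it to $\angles*{y, \Pi_U y}$. Schematically, I would pass from $\Pi^{(T)}$ to its principal submatrix $\Pi^{(T)}_{S,S}$, then to the eigenvalue truncation $\wt{\Pi}$, and finally to the projector $\Pi_U$ onto the range of $\wt{\Pi}$, losing only a controlled additive or multiplicative constant at each step.

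The first transfer is the delicate one. I would first bound the size of $T \triangleq [n]\setminus S$: by \Cref{item:few-outliers} of \Cref{lem:cleaned-up-matrix} we have $\Tr\parens*{\Pi^{(T)}} \le \tfrac{2K}{\eta} r$, while by definition every $i \in T$ contributes at least $\tau/n$ to the trace. Thus $|T| \le \tfrac{2Krn}{\eta\tau} = \tfrac{n\eta}{C^2 K}$ for the chosen $\tau = 2C^2 K^2 r/\eta^2$. Because $y$ is a $C$-delocalized unit vector (which forces $C \ge 1$ automatically by pigeonhole on $\norm*{y}^2 = 1$), this gives $\norm*{y_T}^2 \le |T|\cdot C/n \le \eta/(CK) \le \eta/K$, a small constant fraction of $\norm*{\Pi^{(T)} y}^2 = \angles*{y, \Pi^{(T)} y} = \Omega(\upsilon/K)$. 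By the triangle inequality, $\norm*{\Pi^{(T)} y_S} \ge \norm*{\Pi^{(T)} y} - \norm*{\Pi^{(T)} y_T} \ge \norm*{\Pi^{(T)} y} - \norm*{y_T}$, and since $\Pi^{(T)}$ is an orthogonal projector we may rewrite $\angles*{y, \Pi^{(T)}_{S,S} y} = \angles*{y_S, \Pi^{(T)} y_S} = \norm*{\Pi^{(T)} y_S}^2$, which is therefore also $\Omega(\upsilon/K)$.

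For the second transfer, $\wt{\Pi}$ is obtained from $\Pi^{(T)}_{S,S}$ by zeroing out eigenvalues below $\eta/K$, so $\Pi^{(T)}_{S,S} \psdle \wt{\Pi} + (\eta/K) I$ and hence $\angles*{y, \wt{\Pi} y} \ge \angles*{y, \Pi^{(T)}_{S,S} y} - \eta/K = \Omega(\upsilon/K)$, using that $\eta/K = \upsilon/(48K)$ is a small fraction of the previous bound. For the third transfer, note that $\Pi^{(T)}_{S,S}$ is a principal submatrix of the orthogonal projector $\Pi^{(T)}$, so by Cauchy interlacing its eigenvalues lie in $[0,1]$; thus the nonzero eigenvalues of $\wt{\Pi}$ lie in $[\eta/K, 1]$, and consequently $\wt{\Pi} \psdle \Pi_U$. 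Combining the three gives $\angles*{y, \Pi_U y} \ge \angles*{y, \wt{\Pi} y} = \Omega(\upsilon/K)$, as required.

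The main obstacle is the first transfer: \emph{a priori} the mass $\norm*{y_T}$ that $y$ places on the trimmed coordinates could be comparable to $\norm*{\Pi^{(T)} y}$ and destroy the lower bound. The specific choice $\tau = 2C^2 K^2 r/\eta^2$ is exactly what is required to keep $|T|$ small enough that, even after paying the factor $C$ from the delocalization hypothesis, $\norm*{y_T}^2$ remains a negligible fraction of the $\Omega(\upsilon/K)$ lower bound; the remaining two transfers are then routine consequences of operator inequalities and Cauchy interlacing.
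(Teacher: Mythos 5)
Your proof is correct and follows essentially the same route as the paper's: bound $|\ol{S}|$ via the trace of $\Pi^{(T)}$, use delocalization to show $y$ carries negligible mass on $\ol{S}$, then pass through $\wt{\Pi}$ and $\wt{\Pi}\psdle\Pi_U$. The only cosmetic difference is that you handle the cross terms in the $S/\ol{S}$ decomposition via a reverse triangle inequality on $\norm*{\Pi^{(T)}y_S}$, whereas the paper uses the AM--GM-type bound $\angles*{y,\Pi y}\le 2\big(\angles*{y,\Pi_{S,S}y}+\angles*{y,\Pi_{\ol S,\ol S}y}\big)$ for PSD $\Pi$; these are equivalent.
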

\begin{proof}
    We abbreviate $\Pi_{\le-\eta}\parens*{\wt{M}_{R,R}}$ to $\Pi$.
    Recall $\tau = \frac{2C^2K^2r}{\eta^2}$ from \Cref{algo:subspace-recovery}, and let $S$ denote the set of indices where the diagonal entries of $\Pi$ are at most $\frac{\tau}{n}$.
    By \Cref{lem:cleaned-up-matrix}, $\Tr(\Pi)\le \frac{2K}{\eta}r$, and hence
    \(
        |\ol{S}| \le \frac{2Kr}{\eta\tau}n
    \).
    By positivity of $\Pi$ and \Cref{cor:cleanup-proj-corr},
    \[
        \angles*{ y, \Pi_{S,S} y} + \angles*{ y, \Pi_{\ol{S},\ol{S}} y } \ge \frac{1}{2} \angles*{ y, \Pi y} \ge \frac{1}{2}\cdot \frac{\upsilon/2-\eta}{2K-\eta}.
    \]
    The second term of the LHS is equal to
    $$
        \angles*{yy^{\top}_{\ol{S},\ol{S}}, \Pi} \le \Tr\parens*{yy^{\top}_{\ol{S},\ol{S}}} \cdot \norm*{\Pi} \le \frac{C^2 |\ol{S}|}{n} \le \frac{\eta}{K},
    $$
    and consequently,
    \[
        \angles*{ y, \Pi_{S,S} y } \ge \frac{1}{2} \cdot \frac{\upsilon/2 - \eta}{2K-\eta} - \frac{\eta}{K}.
    \]
    Finally, since $\wt{\Pi}$ is obtained by discarding eigenvalues of $\Pi_{S,S}$ of magnitude at most $\frac{\eta}{K}$:
    \[
        \angles*{y, \wt{\Pi} y} \ge \frac{1}{2} \cdot \frac{\upsilon/2 - \eta}{2K-\eta} - \frac{2\eta}{K} \ge \frac{\upsilon/8 - 3\eta}{K} \ge \frac{\upsilon}{16K}.
    \]
    The statement follows since $\wt{\Pi}\psdle \Pi_U$.
\end{proof}

\subsection{Correlation of subspace with communities}
In the model of adversarial corruptions we are considering, a graph $\bG$ is sampled from $\SBM_n(\Model, \pi, d)$.
The graph then undergoes $\delta n$ adversarial edge insertions and deletions, and the resulting graph $\wt{\bG}$ is handed to us.

In this section, we prove a guarantee on the subspace $U$ produced by Phase 2 of the algorithm in \Cref{sec:recovery-algo} as a special case of \Cref{thm:eigenspace-recovery}.
\begin{lemma} \label{lem:subspace-output}
    The subspace $U$ produced as output at the end of phase 2 of the algorithm described in \Cref{sec:recovery-algo} has the following guarantees with high probability over the randomness of the stochastic block model.
    \begin{enumerate}
        \item The dimension of $U$ is at most $O\parens*{\frac{B^{2\ell+3}}{\alpha_{\scaleto{\ref{lemma:truncation}}{5pt}}}r}$.
        \item All the diagonal entries of $\Pi_U$ are bounded by $O\parens*{\frac{B^{6\ell+9}}{\alpha_{\scaleto{\ref{lemma:truncation}}{5pt}}^3\pi_{\min}} \cdot \frac{r}{n}}$.
        \item For any unit $x$ in $\mathrm{span}\{\psi_1,\dots,\psi_r\}$,
        \[
            \angles*{\bx^{(n)}, \Pi_U \bx^{(n)}} \ge \Omega\parens*{\frac{\alpha_{\scaleto{\ref{lemma:truncation}}{5pt}}}{B^{2\ell+3}}\cdot n}.
        \]
    \end{enumerate}
\end{lemma}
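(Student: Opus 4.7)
The plan is to deduce \Cref{lem:subspace-output} as a direct application of \Cref{thm:eigenspace-recovery}, instantiating the abstract robust-PCA setup with $M = \ol{M}_{\bG_B, \ell}$ as the uncorrupted matrix and $\wt{M} = \ol{M}_{\wt{\bG}_B, \ell}$ as the corrupted one. The main work is translating the SBM-specific objects into the parameters $(K, \upsilon, C, \gamma, r)$ appearing in \Cref{assumption:matrix-perturb}, and then verifying that the delocalized negative witness required to activate conclusion 3 of \Cref{thm:eigenspace-recovery} is supplied by $\bx^{(n)}$ itself.

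First, I would verify \Cref{assumption:matrix-perturb}. Since $\bG_B$ (and $\wt{\bG}_B$) have maximum degree $B$, each row of $A_{\bG_B}^{(\ell)}$ is supported on at most $B^{\ell}$ coordinates with entries at most $B^{\ell-1}$, and each diagonal entry of $\ol{D}_{\bG_B}$ is at most $B$. Multiplying out, every row of $M$ has $\ell_1$-norm $O(B^{2\ell+3})$, so I would set $K = \Theta(B^{2\ell+3})$; the same bound holds for $\wt{M}$. The perturbation $\Delta = \wt{M} - M$ is supported on rows and columns indexed by vertices within distance $\ell + 1$ of an edge altered by the adversary, a set of size $O(\delta B^{\ell+1} n)$, so $\gamma = O(\delta B^{\ell+1})$, which is driven below the threshold required by \Cref{thm:eigenspace-recovery} by choosing $\delta$ small enough as a function of $B, \ell, \alpha_{\ref{lemma:truncation}}$. \Cref{lemma:spectral-gap} gives that $M$ has at most $r+1$ negative eigenvalues, which is absorbed into the $O(\cdot)$ in the theorem's conclusions.

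Next, to activate the third conclusion, I would supply the delocalized negative witness $y = \bx^{(n)}/\|\bx^{(n)}\|$ for any unit $x \in \mathrm{span}\{\psi_1, \dots, \psi_r\}$. A standard concentration argument gives $\|\bx^{(n)}\|^2 = (1 + o(1))\, n \sum_c \pi_c x(c)^2 \geq \Omega(\pi_{\min}\, n)$, while $\|\bx^{(n)}\|_\infty \leq \|x\|_\infty \leq 1$. Hence $y$ is $C$-delocalized with $C = O(1/\pi_{\min})$. The negative quadratic form is furnished by \Cref{lemma:truncation}: $\angles*{\bx^{(n)}, M \bx^{(n)}} < -\alpha_{\ref{lemma:truncation}}\, n$, which after normalization becomes $\angles*{y, My} < -\Omega(\alpha_{\ref{lemma:truncation}})$, so I would take $\upsilon = \Omega(\alpha_{\ref{lemma:truncation}})$.

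Finally, the three bullets follow by plugging these parameters into \Cref{thm:eigenspace-recovery}: the dimension bound $O(Kr/\upsilon)$ becomes $O(B^{2\ell+3} r / \alpha_{\ref{lemma:truncation}})$; the diagonal bound $O(C^2 K^3 r / (\upsilon^3 n))$ becomes $O(B^{6\ell+9} r / (\alpha_{\ref{lemma:truncation}}^3 \pi_{\min}^{O(1)} n))$; and the correlation guarantee $\angles*{y, \Pi_U y} \geq \Omega(\upsilon/K)$, after multiplying through by $\|\bx^{(n)}\|^2 \geq \Omega(\pi_{\min}\, n)$, yields $\angles*{\bx^{(n)}, \Pi_U \bx^{(n)}} \geq \Omega(\alpha_{\ref{lemma:truncation}}\, n / B^{2\ell+3})$. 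The argument is essentially bookkeeping once \Cref{thm:eigenspace-recovery,lemma:spectral-gap,lemma:truncation} are in hand; the main subtlety is tracking the $\pi_{\min}$-dependence carried by the delocalization constant $C$ and the normalization of the lifted witness $\bx^{(n)}$.
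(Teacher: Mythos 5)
Your proposal follows essentially the same route as the paper: instantiate \cref{thm:eigenspace-recovery} with $M = \ol{M}_{\bG_B,\ell}$ and $\wt M = \ol{M}_{\wt{\bG}_B,\ell}$, use \cref{lem:corruption-M}-style bounds to verify \cref{assumption:matrix-perturb} (with $K = \Theta(B^{2\ell+3})$, $\gamma = \mathrm{poly}(B^\ell)\,\delta$), use \cref{lemma:spectral-gap} to bound the number of negative eigenvalues, and supply the lifted community vectors from \cref{lemma:truncation} as the delocalized negative witnesses. The paper chooses $K = 2B^{2\ell+3}$, $\gamma = 8B^{2\ell+2}\delta$, $C = 1/\sqrt{\pi_{\min}}$, $\upsilon = \alpha_{\scaleto{\ref{lemma:truncation}}{5pt}}$, and the substance of your argument matches that choice.

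One bookkeeping point worth tightening: you take $x$ to be a Euclidean unit vector (so $\norm{x}_\infty \le 1$, $\norm{\bx^{(n)}}^2 \gtrsim \pi_{\min} n$), whereas the paper's verification step uses $\norm{x}_\infty \le 1/\sqrt{\pi_{\min}}$, which corresponds to $x$ being unit in the $\pi$-weighted inner product so that $\norm{\bx^{(n)}}^2 = (1\pm o(1))n$. Under your Euclidean convention, multiplying the Theorem~\ref{thm:eigenspace-recovery} bound $\langle y,\Pi_U y\rangle \ge \Omega(\upsilon/K)$ by $\norm{\bx^{(n)}}^2 \ge \Omega(\pi_{\min} n)$ yields $\langle \bx^{(n)},\Pi_U\bx^{(n)}\rangle \ge \Omega(\pi_{\min}\alpha n / B^{2\ell+3})$, i.e.\ a residual $\pi_{\min}$ that you drop in your final display; under the $\pi$-normalization the $\pi_{\min}$ disappears because $\norm{\bx^{(n)}}^2 \approx n$, matching the lemma's statement. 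Either way the conclusion used downstream (a lower bound of $\Omega(n)$) holds, so this is a normalization convention issue rather than a gap.
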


To prove \Cref{lem:subspace-output}, we plug in the following into the setup of \Cref{thm:eigenspace-recovery}.
\begin{displayquote}
    $M = \ol{M}_{\bG_B,\ell}$, $\Delta \triangleq  \ol{M}_{\wt{\bG}_B,\ell} - \ol{M}_{\bG_B,\ell}$,
    $K = 2B^{2\ell+3}$, $C = \frac{1}{\sqrt{\pi_{\min}}}$, $\gamma = 8B^{2\ell+2}\delta$, and $\upsilon = \alpha_{\scaleto{\ref{lemma:truncation}}{5pt}}$.
\end{displayquote}
Our proof of \Cref{lem:subspace-output} is complete once we verify that:
\begin{enumerate}
    \item $\frac{1}{\sqrt{n}}\bx^{(n)}$ is $C$-delocalized and achieves quadratic form less than $-\upsilon$ with our choice of $M$.
    \item The choice of $\gamma$ is indeed a sufficiently small constant depending on $\nu,C$, and $K$.
    \item $M$, $\Delta$, and $K$ satisfy \Cref{assumption:matrix-perturb}.
\end{enumerate}

\noindent {\bf Verifying the assumption on $\bx^{(n)}$.}
Note that \Cref{lemma:truncation} establishes that the quadratic form of $\bx^{(n)}$ is sufficiently negative, and the delocalization follows from the fact that $\norm*{\bx^{(n)}}_{\infty} \le \norm*{x}_{\infty} \le \frac{1}{\sqrt{\pi}_{\min}}$ since $x$ is a unit vector, along with the fact that $\norm*{\bx^{(n)}}$ concentrates around $\sqrt{n}$.

\medskip

\noindent {\bf Verifying the assumption on $\gamma$.}
The magnitude of $\gamma$ can be made sufficiently small by choosing $\delta$ as a sufficient small constant.

\medskip

\noindent {\bf Verifying the assumption on $M$, $\Delta$, and $K$.}
The below lemma proves why our choice of $M$, $\Delta$ and $K$ satisfy \Cref{assumption:matrix-perturb}.
\begin{lemma}   \label{lem:corruption-M}
    The matrix $\Delta \triangleq \ol{M}_{\bG_B,\ell} - \ol{M}_{\wt{\bG}_B,\ell}$ is supported on a $8 B^{2\ell + 2} \delta  n \times 8 B^{2\ell + 2} \delta n$ principal submatrix.
    Further, each row and column of $\ol{M}_{\bG_B,\ell}$ and $\ol{M}_{\wt{\bG}_B,\ell}$ have $\ell_1$ norm bounded by $B^{2\ell+3}$, and consequently each row and column of $\Delta$ has $\ell_1$ norm bounded by $2B^{2\ell+3}$.
\end{lemma}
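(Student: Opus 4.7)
The claim splits into a support bound and an $\ell_1$-norm bound, which I would handle separately.

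For the support of $\Delta$, the plan is to identify a small ``source set'' $V^* \subseteq V(\bG)$ such that every edge in $E(\bG_B) \triangle E(\wt{\bG}_B)$ has an endpoint in $V^*$, then inflate by the relevant graph radius. I would take $V^*$ to be the endpoint set of the corrupted edges, so $|V^*| \le 2\delta n$. The key combinatorial step is to verify that every edge in $E(\bG_B) \triangle E(\wt{\bG}_B)$ has an endpoint in $V^*$: corrupted edges do by definition, while any uncorrupted edge that is truncated in one graph but not the other must have an endpoint whose degree crosses the threshold $B$ between $\bG$ and $\wt{\bG}$, and such an endpoint is necessarily incident to at least one corrupted edge. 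Because an entry $\ol{M}_{G_B,\ell}[v,j]$ is determined by the subgraph of $G_B$ traced out by walks of length at most $2\ell+1$ from $v$ (with the diagonal part of $\ol{H}_{G_B}$ sensitive only to truncated degrees at the midpoint vertex, which also lie in $V^*$ whenever they differ between $\bG_B$ and $\wt{\bG}_B$), the $v$-th row of $\Delta$ is nonzero only when $v$ lies within distance $2\ell+2$ of some element of $V^*$ in $\bG_B$ or $\wt{\bG}_B$. Since both graphs have maximum degree at most $B$, each $(2\ell+2)$-ball has size at most $2 B^{2\ell+2}$, and summing over $V^*$ and the two graphs gives at most $|V^*|\cdot 4 B^{2\ell+2} \le 8 B^{2\ell+2}\delta n$ affected vertices, as claimed.

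For the row/column $\ell_1$-norm bound, I would factor $\ol{M}_{G_B,\ell} = A_{G_B}^{(\ell)}\,\ol{H}_{G_B}(t)\,A_{G_B}^{(\ell)}$ and multiply row $\ell_1$-norms of the three factors. A row of $A_{G_B}^{(\ell)}$ counts length-$\ell$ nonbacktracking walks from its base vertex and has $\ell_1$-norm at most $B(B-1)^{\ell-1} \le B^{\ell}$ when the max degree is at most $B$; a row of $\ol{H}_{G_B}(t)$ has $\ell_1$-norm at most $1 + t^2(B-1) + tB = O(B)$ for our constant choice of $t$. Multiplying yields a row $\ell_1$-norm of $O(B^{2\ell+1}) \le B^{2\ell+3}$ for $B$ sufficiently large. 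The same argument handles $\ol{M}_{\wt{\bG}_B,\ell}$, and the triangle inequality gives the bound $2B^{2\ell+3}$ on rows and columns of $\Delta$.

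The only step that requires any genuine insight is the combinatorial observation that a vertex whose truncation status changes must be incident to a corrupted edge; without this, the support bound could scale with the total number of edges whose truncation flips, which might be much larger than $\delta n$. Everything else is a routine combination of max-degree bounds and the triangle inequality.
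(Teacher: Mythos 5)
Your proof is correct and follows essentially the same strategy as the paper's: identify a small set of ``source'' vertices driven by the $\delta n$ corruptions, observe that any edge appearing in one truncated graph but not the other must be near that set because a vertex whose truncation status flips is necessarily incident to a corrupted edge, and then inflate by a radius of $O(\ell)$ using the degree bound $B$. The only cosmetic differences are that you take the source set $V^*$ to be the endpoints of corrupted edges directly (the paper uses the slightly larger intermediate set of vertices whose 1-hop neighborhood differs between $\bG_B$ and $\wt{\bG}_B$, which inflates $|V^*|$ by one more factor of $B$ and then uses a radius of $2\ell+1$ instead of your $2\ell+2$ --- the counts come out the same), and for the $\ell_1$ bound you factor $\ol{M}=A^{(\ell)}\ol{H}A^{(\ell)}$ and multiply row-$\ell_1$ norms rather than directly counting walks of length at most $2\ell+2$; these are the same walk-counting bound expressed differently.
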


\begin{proof}[Proof of \Cref{lem:corruption-M}]
    First, observe that $\ol{D}_{\bG_B}$ differs from $\ol{D}_{\wt{\bG}_B}$ only on rows and columns corresponding to vertices incident to a corrupted edge, of which there are at most $2\delta n$.

    Next, we bound the number of vertices whose neighborhoods differ between $\bG_B$ and $\wt{\bG}_B$.
    Observe that if a vertex $v$ has differing neighborhoods between $\bG_B$ and $\wt{\bG}_B$, then either $v$ is incident to an edge corruption, or $v$ has a neighbor in $\bG_B \cup \wt{\bG}_B$ that is truncated in one of $\wt{\bG}_B$ or $\bG_{B}$ but not the other.

    The number of vertices incident to an edge corruption is at most $2\delta n$.
    We can bound the number of vertices of the second kind by:
    \begin{align*}
        &\, \sum_{v\in V(\bG)} \sum_{u\in N_{\bG_B \cup \wt{\bG}_B }(v)} \abs*{ \Ind[\deg_{\bG}(u) > B] - \Ind[\deg_{\wt{\bG}}(u) > B] } \\
        &= \sum_{u\in V(\bG)} \sum_{v\in N_{\bG_B \cup \wt{\bG}_B }(u)} \abs*{ \Ind[\deg_{\bG}(u) > B] - \Ind[\deg_{\wt{\bG}}(u) > B] } \\
        &\le \sum_{u\in V(\bG)} B\cdot \abs*{ \Ind[\deg_{\bG}(u) > B] - \Ind[\deg_{\wt{\bG}}(u) > B] } \\
        &\le 2\delta B n
    \end{align*}
    where the last inequality is because $\deg_{\bG}(u)$ can differ from $\deg_{\wt{\bG}}(u)$ only when $u$ is incident to an edge corruption.
    Thus, the number of vertices with differing neighborhoods is at most $2\delta(B+1)n \le 4\delta B n$.

    Finally, $\ol{M}_{\wt{\bG}_B, \ell}$ and $\ol{M}_{\bG_B, \ell}$ can differ only on rows and columns corresponding to vertices that are at most $2\ell+1$ away in either $\wt{\bG}_B$ or $\bG_B$ from a vertex with differing neighborhoods.
    Since the degrees in both graphs are bounded by $B$, there are at most $8\delta B^{2\ell+3} n$ such vertices.

    The $\ell_1$ norm of any row $v$ of either $\ol{M}_{\wt{\bG}_B, \ell}$ or $\ol{M}_{\bG_B, \ell}$ is bounded by the total number of walks leaving $v$ of length at most $2\ell+2$, which is at most $B^{2\ell+3}$.
    Thus, the $\ell_1$ norm of any row of $\Delta$ is at most $2B^{2\ell+3}$, which completes the proof.
\end{proof}

\section{Rounding algorithm}    \label{sec:rounding}
In this section we fill in the details for \cref{sec:algo-analysis}, which analyzed the rounding guarantees. First, we review the rounding procedure described in \cref{sec:recovery-algo}. From phase 2 of the algorithm, we obtain a constant-dimensional delocalized subspace $U$ which has constant correlation with the true communities. We process $U$ into a new matrix $M' \in \R^{n \times r'}$ which is directly used in the rounding scheme. Consider the matrix $\Psi_{r'}$ of the $r'$ right eigenvectors of $T$ corresponding to $\lambda_2$, assumed to be above the KS threshold:
\begin{align*}
    \Psi_{r'} \triangleq \mqty[\psi_1 & \cdots & \psi_{r'}].
\end{align*}
The rows of $\Psi_{r'}$ induce row embeddings $\{\phi_i\}_{i \in [k]}$ into $\R^{r'}$. We then rescale the rows of $M'$ by a uniform factor $c$ such that we can express each row of $c \cdot M'$ as a convex combination of the row embeddings $\{\phi_i\}_{i \in [k]}$. These weights are then used to sample a valid community assignment.

In \cref{lemma:convex-hull} we prove that the scaling $c$ used to obtain the rounding weights can be taken to be $\Omega(\sqrt{n})$ so long as the subspace $U$ from phase 2 of the algorithm is delocalized.
\begin{lemma}\label{lemma:convex-hull}
    Suppose that $\diag(\Pi_U) \le O(\frac{1}{n})$. 
    There exists $c = \Theta(\sqrt{n})$ such that the scaled rows of $c \cdot M'$ lie within the convex hull of the $\phi_i$'s. 
\end{lemma}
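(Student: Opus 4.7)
The plan is to combine two facts: (i) every row of $M'$ has Euclidean norm $O(1/\sqrt{n})$, and (ii) the convex hull $\mathrm{conv}(\{\phi_i\}_{i\in[k]})$ contains a Euclidean ball of constant radius $\rho > 0$ around the origin, where $\rho$ depends only on $(\Model, \pi, d)$ and not on $n$. These two facts immediately yield the lemma: for any $c$ satisfying $c \cdot \max_i \|M'[i]\|_2 \le \rho$, every row of $c \cdot M'$ lies in the ball, hence in the convex hull. Since $\max_i \|M'[i]\|_2 = O(1/\sqrt{n})$, this permits $c = \Theta(\sqrt{n})$.

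For (i), I would use that the columns $u_1', \ldots, u_{r'}'$ of $M'$ form an orthonormal basis of $U' \subseteq U$, so $M' (M')^\top = \Pi_{U'}$. Because $U' \subseteq U$, we have $\Pi_{U'} \psdle \Pi_U$, giving $\|M'[i]\|_2^2 = \Pi_{U'}[i,i] \le \Pi_U[i,i] = O(1/n)$ by hypothesis.

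For (ii), I would establish two sub-claims. First, the origin is expressible as a convex combination of $\{\phi_i\}$ with strictly positive weights. The transition matrix $T$ is self-adjoint with respect to $\ev{\cdot, \cdot}_\pi$ and has $\bone$ as its top eigenvector with eigenvalue $1$; the columns $\psi_1, \ldots, \psi_{r'}$ of $\Psi_{r'}$ have eigenvalue $\lambda_2(T) \neq 1$ and are therefore $\pi$-orthogonal to $\bone$. In coordinates this reads $\sum_{i=1}^k \pi_i \phi_i = 0$, which exhibits the origin as a convex combination with strictly positive weights $\pi_i \ge \pi_{\min} > 0$. Second, $\{\phi_i\}_{i \in [k]}$ spans $\R^{r'}$: the $r'$ columns of $\Psi_{r'}$ are linearly independent (they are $\pi$-orthonormal), so $\Psi_{r'}$ has full column rank $r'$, and consequently its rows span $\R^{r'}$. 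Combined, these imply that $\mathrm{conv}(\{\phi_i\})$ is full-dimensional in $\R^{r'}$ with the origin in its topological interior, so it contains an open ball of some radius $\rho > 0$ around the origin.

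Neither step requires intricate computation; the only conceptual point to verify is that $\rho$ is a genuine positive constant independent of $n$, which is immediate because $\Psi_{r'}$ and $\pi$ are determined entirely by the SBM parameters. If I wanted an explicit bound, $\rho$ can be taken to be $\Omega(\pi_{\min} \cdot \sigma_{\min}(\Psi_{r'}))$ by a standard argument relating the inradius of a centered polytope to the smallest singular value of its vertex matrix weighted by the convex combination witnessing the origin. This completes the proof plan.
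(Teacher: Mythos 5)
Your proposal is correct and follows essentially the same route as the paper: bound the row norms of $M'$ via $\Pi_{U'} \psdle \Pi_U$, express the origin as the $\pi$-weighted convex combination of the $\phi_i$'s, and show the origin lies in the interior of the (full-dimensional) convex hull so that a constant-radius ball fits inside. Your explicit observation that the $\phi_i$ span $\R^{r'}$ because $\Psi_{r'}$ has full column rank is a cleaner phrasing of what the paper's Lemma~\ref{lemma:origin-interior} is getting at (the paper's wording ``the $\phi_i$ are linearly independent'' is a small slip, since there are $k > r'$ of them), but the substance is identical.
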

\begin{proof}
We first show that we can pick some $c = \Omega(1)$. Since $\psi_1 = \bone$ is the trivial eigenvector of $T$, by spectral theory we have that $\ev{\bone, \psi_j}_{\pi} = 0$ for $j \neq 1$. This implies that the convex combination $\sum_{i=1}^k \pi(i) \phi_i = 0 \in \R^{r-1}$, so the origin is in the convex hull of the row embeddings $\phi_i$. 
Note that this alone does not suffice to show that we can take $c > 0$ because the origin might be on the boundary of the hull. 

However, as we'll show in \cref{lemma:origin-interior} below, the origin in fact lies in the interior of the convex hull of the $\phi_i$'s, which permits us to take $c > 0$.  Furthermore, the minimum distance from the origin to the boundary is bounded below by a positive constant (possibly depending on $k$), because $0$ lies in the interior and the $\phi_i$'s are constant-sized objects. So we can take $c = \Omega(1)$.
To attain $c = \Theta(\sqrt{n})$, first recall that $\diag(\Pi_U) \le O(\frac{1}{n})$ by assumption, which implies that $\diag(\Pi_U') \le O(\frac{1}{n})$ since $U' \subseteq U$. By definition, $\Pi_{U'} = M' M'^\top$, the $\ell_2$ norm of each row of $M'$ is $O(\frac{1}{\sqrt{n}})$, from which the desired claim holds.
\end{proof}

\begin{lemma}\label{lemma:origin-interior}
    Assuming that $\pi > 0$, we have $0 \in \mathrm{int}(\mathrm{conv}(\phi_1, \ldots, \phi_k))$.
\end{lemma}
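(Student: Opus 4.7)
The plan is to verify two things: that the origin can be written as a convex combination of the $\phi_i$ with strictly positive coefficients, and that the affine hull of $\phi_1, \ldots, \phi_k$ is all of $\R^{r'}$. Together these two facts imply that $0$ lies in the relative interior of the convex hull, and since the affine hull is full-dimensional, $0$ lies in the interior.

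For the first point, the key observation is that every column $\psi_j$ of $\Psi_{r'}$ is an eigenvector of $T$ corresponding to $\lambda_2 \ne 1 = \lambda_1$. Since $T$ is self-adjoint with respect to $\ev{\cdot,\cdot}_\pi$, eigenvectors for distinct eigenvalues are $\pi$-orthogonal, so $\pi^\top \psi_j = \ev{\bone, \psi_j}_\pi = 0$. Taking this over all columns gives $\pi^\top \Psi_{r'} = 0$, i.e. $\sum_{i=1}^k \pi_i \phi_i = 0$. Under the hypothesis $\pi > 0$, this is a strictly positive convex combination, so $0 \in \mathrm{conv}(\phi_1,\ldots,\phi_k)$.

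For the second point, I would argue that $\Psi_{r'}$ has full column rank $r'$: its columns are eigenvectors of $T$ that can be taken to be $\pi$-orthonormal (being drawn from the $\lambda_2$-eigenspace of a $\pi$-self-adjoint operator), hence linearly independent. Therefore $\Psi_{r'}$ has rank $r'$, meaning its rows $\phi_1,\ldots,\phi_k$ span $\R^{r'}$ as a linear subspace. Combined with the first point that $0$ is in the affine (in fact convex) hull, the affine hull is a linear subspace containing the full linear span of the $\phi_i$, hence equals $\R^{r'}$.

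Putting these together: $0$ is expressed as a convex combination with all weights $\pi_i > 0$ bounded away from $0$ and $1$, so one can perturb the weights in any direction within the simplex and produce any element of the full-dimensional affine hull in a neighborhood of $0$. This places $0$ in the interior of $\mathrm{conv}(\phi_1,\ldots,\phi_k)$. The argument is essentially bookkeeping in linear algebra; I do not anticipate a substantial obstacle, though some care is needed in passing from ``relative interior of the convex hull'' to ``interior in $\R^{r'}$,'' which is exactly the role of showing $\mathrm{rank}(\Psi_{r'}) = r'$.
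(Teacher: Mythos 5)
Your proposal is correct and follows the same route as the paper: express $0 = \sum_i \pi(i)\phi_i$ with all $\pi(i)>0$ via $\pi$-orthogonality of $\psi_j$ to $\bone$, and show the affine hull is all of $\R^{r'}$ by noting the columns of $\Psi_{r'}$ are $\pi$-orthogonal hence linearly independent. One small point in your favor: the paper's phrasing "the $\phi_i$ are linearly independent'' is imprecise (there are $k$ of them in $\R^{r'}$ with $k > r'$, so they cannot literally be linearly independent); what is meant, and what you correctly state, is that $\Psi_{r'}$ has full \emph{column} rank $r'$, so its rows $\phi_1,\ldots,\phi_k$ span $\R^{r'}$. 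Your explicit handling of the passage from relative interior to interior is also a clarification the paper leaves implicit.
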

\begin{proof}
    First, note that since the column eigenvectors $\psi_j$ are orthogonal with respect to $\ev{\cdot, \cdot}_{\pi}$, and $\pi > 0$, the matrix of column eigenvectors has full row rank. On the other hand, we know that $0 \in \mathrm{conv}(\phi_1, \ldots, \phi_k)$ because $\sum_{i} \pi(i) \phi_i = 0$. Since the $\phi_i$ are linearly independent, they in fact lie on the boundary of the convex hull and the interior of the convex hull is a nonempty open set in $\R^r$. As $\pi > 0$, we have just shown that 0 lies in the interior of the convex hull.
\end{proof}


\section*{Acknowledgments}
We would like to thank Omar Alrabiah and Kiril Bangachev for diligent feedback on an earlier draft of this paper.

\bibliographystyle{alpha}
\bibliography{main}
\appendix
\section{On defining weak-recovery }\label{sec:justify-metric} 

One might wonder whether our definition of weak recovery (\cref{def:weak-recovery}) is meaningful. 
In \cref{sec:weak-recovery-intuition}, we argue that our definition is natural and interpret it in various natural settings.
In \cref{sec:weak-recovery-comparison}, we show that our definition implies the other definitions of weak recovery found in the literature.

For self-containedness, we reproduce our and previous definitions of weak recovery below. 
\weakrecovery*

Next, we recall the notion of weak recovery used in \cite{DdNS22}. Note that their metric is specialized to the symmetric 2-community block model.
\begin{definition}[Weak recovery --- symmetric 2-community SBM]\label{def:simplified-overlap}
    We say that an algorithm achieves weak recovery for the symmetric 2-community SBM if it outputs a labeling $\wh{\bx} \in \qty{\pm 1}^n$ such that 
    \begin{align*}
        \E_{\bx, \bG} [|\langle\bx, \wh{\bx}\rangle|] \ge \Omega(n).
    \end{align*}
\end{definition}

Moving beyond the symmetric 2-community setting, we recall a more general notion of weak recovery present in the literature, which applies to arbitrary $k$-community SBMs \cite{Abb17}.
To define it, it is helpful to introduce the shorthand $\Omega_i = \qty{v \in [n]: \bX_v = i}$ for $i \in [k]$.
\begin{definition}[Weak recovery --- nontrivial partition]\label{def:partition}
    We say that an algorithm achieves weak recovery in the sense of nontrivial partition if it outputs with high probability a subset of vertices $S \subseteq V$ such that there exists communities $i, j \in [k]$ with 
    \begin{align*}
        \frac{|\Omega_i \cap S|}{|\Omega_i|} - \frac{|\Omega_j \cap S|}{|\Omega_j|} \ge \Omega(1).
    \end{align*}
\end{definition}

Note that \cref{def:partition} implies \cref{def:simplified-overlap}, by setting $\wh{\bx}_v = +1$ for each $v \in S$ and $\wh{\bx}_v = -1$ for each $v \in \ol{S}$. 

Another reasonable definition would require that the mutual information between $\bX$ and $\wh{\bX}$ is nontrivially large. 
\begin{definition}[Weak recovery --- mutual information]\label{def:mutual-information}
    We say that an algorithm achieves weak recovery in the sense of mutual information if 
    \begin{align*}
        \sum_{v \in [n]} \mathrm{I}(\bX_v ; \wh{\bX}_v) \ge \Omega(n).
    \end{align*}
\end{definition}
By chain rule and monotonicity of entropy, the above guarantee also implies that $\mathrm{I}(\bX; \wh{\bX}) \ge \Omega(n)$.

\subsection{Interpreting our new notion of weak recovery}\label{sec:weak-recovery-intuition}
In this section, we show that if the correlation goes to 1 then we achieve exact recovery on $r$ communities, and if the algorithm uses no information about the graph then we do not achieve $\rho$-weak recovery for any $\rho > 0$ as $n \to \infty$. 

First, let us establish some elementary formulas which will prove useful. Since $\Psi \in \R^{k \times (k-1)}$ is the matrix of right eigenvectors of $T$ with the all-ones vector removed, we have $\Psi \Psi^\top = \Pi^{-1} - \bone_k \bone_k^\top$. We therefore have  
\begin{align*}
    \ev{W_{\Psi}, \bX_{\Psi}} &= \Tr(\Psi^\top W^\top \bX \Psi \Psi^\top \bX^\top W \Psi ) \\
    &= \norm{\Psi^\top W^\top \bX \Psi}_F^2.
\end{align*}
It's not hard to see that conditioned on $\bX$ and $W$, we have 
\begin{align*}
    (W^\top \bX)_{ij}&= n \Pr_{v \sim [n], \wh{\bX}} [\wh{\bX}_v = i | \bX_v = j],
\end{align*}
and below we suppress the subscript in the probabilities for sake of conciseness. Notice that the above formula connects our recovery metric to the \emph{confusion matrix}, a popular metric for multiclass classification in machine learning \cite{DG06}. In particular, the confusion matrix $P \in \R^{k \times k}$ is defined by $P_{ij} \triangleq  \Pr[\wh{\bX}_v = i | \bX_v = j]$, which is exactly the expression on the RHS. 

On the other hand, we can calculate 
\begin{align*}
    \ev{W_{\Psi}, \bX_{\Psi}} &= \ev{W(\Pi^{-1} - \bone_k \bone_k^\top) W^\top, \bX(\Pi^{-1} - \bone_k \bone_k^\top) \bX^\top}\\
    &= \ev{W\Pi^{-1}W^\top, \bX\Pi^{-1}\bX^\top} - \ev{\bone_n \bone_n^\top, \bX \Pi^{-1} \bX^\top + W \Pi^{-1} W^\top} + n^2.
\end{align*}
Now using the fact that $\bone_n^\top \bX = n\pi^\top(1 + \Tilde{O}(1/\sqrt{n}))$, the fact that $\pi$ is a distribution, and writing $\wh{\pi} \triangleq \frac{1}{n}\bone_n^\top W$, we obtain
\begin{align}
    \ev{W_{\Psi}, \bX_{\Psi}} &= \norm{\Pi^{-1/2} W^\top \bX \Pi^{-1/2}}_F^2 - n^2 \norm{\Pi^{-1/2} \wh{\pi}}_2^2 + \Tilde{O}(n^{3/2}) \nonumber \\
    &= n^2 \qty(\sum_{i, j} (\pi(i)\pi(j))^{-1} \Pr[\wh{\bX}_v = i, \bX_v = j]^2 - \norm{\Pi^{-1/2} \wh{\pi}}_2^2 )+ \Tilde{O}(n^{3/2}) \nonumber \\
    &= n^2 \qty(\sum_{i, j} \pi(i)^{-1} \pi(j) \Pr[\wh{\bX}_v = i | \bX_v = j]^2 - \sum_i \pi(i)^{-1} \Pr[\wh{\bX}_v = i]^2 )+ \Tilde{O}(n^{3/2}) \label{eq:correlation-expansion}
\end{align}
with high probability over $\bX$.
\paragraph{Exact recovery for $r$ communities} Suppose there are $r$ eigenvalues of $T$ above the KS threshold. Let $\bU$ be the subspace spanned by $\qty{\bpsi_1^{(n)}, \ldots \bpsi_r^{(n)}}$. If we achieve perfect correlation with $\bU$, then any orthonormal basis $\bB$ of $\bU$ with $\bU = \bB \bB^\top$ exactly recovers at least $r$ of the communities. This follows from the fact that $\bB$ has full row rank, so there are $r$ rows which are pairwise distinct. From these rows we can exactly recover $r$ communities.  
\paragraph{Interpretation of the metric in the symmetric $k$-community block model}
Suppose we are in the symmetric $k$-community block model, with parameter $\lambda$, so that $\pi = \frac{1}{k} \bone$ and $T = k\lambda I + (1-\lambda) \bone \bone^\top$ with $a \neq b$. 
The trivial top eigenvector is $\bone$, and for the remaining eigenvectors we can pick any orthonormal basis in $\bone^\perp$. 
Then $\Psi \Psi^\top = kI - \bone \bone^\top$, and from \cref{eq:correlation-expansion}, when we achieve weak recovery we have
\begin{align*}
    \frac{1}{n^2}\ev{W_{\Psi}, \bX_{\Psi}} &= \sum_{i, j} \qty(\Pr[\wh{\bX}_v = i | \bX_v = j]^2 - \Pr[\wh{\bX}_v = i]^2) + \Tilde{O}(1/\sqrt{n}) \\
    &\ge \Omega(1).
\end{align*}

Therefore, when we succeed at weak recovery in this setting, there exists some $j$ such that 
\begin{align*}
    \sum_{i} \Pr[\wh{\bX}_v = i | \bX_v = j]^2 \ge \sum_{i} \Pr[\wh{\bX}_v = i]^2 + \Omega(1).
\end{align*}
For example, this correlation is maximized if after permuting the columns of $W$ we have $W = X$. 

\paragraph{Comparing to natural baselines}
There are two obvious baselines to sanity check our metric for: labeling vertices randomly according to the prior $\pi$, and deterministically guessing the largest community. In fact, we will characterize the correlation for any randomized rounding scheme which is independent of the true assignment. In this case, we have $W = \bone w^\top$, where $w$ is a simplex vector which is independent of $\bX$. Then we have 
\begin{align*}
    \ev{W_{\Psi}, \bX_{\Psi}} &= \norm{\Psi^\top w \bone^\top \bX \Psi}_F^2 \\
    &= n^2\norm{\Psi^\top w \pi^\top \Psi}_F^2(1 + \Tilde{O}(1/\sqrt{n}))\\
    &= \Tilde{O}(n^{3/2}),
\end{align*}
where the second line holds with high probability over $\bX$, and the third line holds because we removed $\bone$ from $\Psi$. On the other hand, we know that $\norm{W_{\Psi}}_F \norm{\bX_{\Psi}}_F = \Theta(n^2)$, so indeed these baselines fail to achieve weak recovery. 

\paragraph{Indistinguishability and the KS threshold}
Finally, suppose that two communities are actually indistinguishable, in the precise sense that their corresponding rows in $T$ are identical. In our rounding scheme, notice that the rows for $\Psi$ are identical so indeed these two communities are ``merged'' from the perspective of the correlation guarantee. In other words, we could have just reduced the weights $W \in \R^{n \times k}$ into weights $W' \in \R^{n \times (k-1)}$ in the obvious way. 

This is an appealing property because it is a modest step towards understanding the more general setting where two communities are \emph{computationally distinguishable}. 
For example, one could consider a 4-community setup where communities 1 and 2 are computationally indistinguishable (as well as communities 3 and 4), but the pairs of communities $(1,2)$ and $(3,4)$ are computationally distinguishable from each other.

\subsection{Comparing the different notions of weak recovery}\label{sec:weak-recovery-comparison}
In this section, we show that our definition of weak recovery \cref{def:weak-recovery} implies \cref{def:partition,def:mutual-information}.

\begin{lemma}[Weak recovery implies nontrivial partition]
    If a recovery algorithm achieves weak recovery in the sense of \cref{def:weak-recovery}, then it also satisfies \cref{def:partition}.
\end{lemma}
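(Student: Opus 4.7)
The plan is to use the explicit expansion of $\langle W_\Psi, \bX_\Psi\rangle$ derived in equation \eqref{eq:correlation-expansion}, where $W = \E\wh{\bX}$.  That expansion writes the correlation as $n^2 \Phi + \tilde{O}(n^{3/2})$ with $\Phi = \sum_i \pi(i)^{-1}\mathrm{Var}_{j\sim\pi}[p_{ij}]$, where $p_{ij} = \Pr[\wh{\bX}_v = i \mid \bX_v = j]$.  Combined with $\|\bX_\Psi\|_F = \Theta(n)$ with high probability (which follows from $|\Omega_j|$ concentrating around $\pi(j)n$), weak recovery in the sense of \cref{def:weak-recovery} --- in its non-degenerate regime $\|W_\Psi\|_F = \Theta(n)$, matching the rounding scheme of \cref{sec:recovery-algo} --- yields $\langle W_\Psi, \bX_\Psi\rangle \ge \Omega(n^2)$, and therefore $\Phi \ge \Omega(1)$.

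Since $\pi(i)^{-1} \le \pi_{\min}^{-1}$ and $k = O(1)$, at least one community $i^*$ must satisfy $\mathrm{Var}_{j \sim \pi}[p_{i^*j}] \ge \Omega(1)$.  Rewriting the variance via the identity $\mathrm{Var}_{j\sim\pi}[X_j] = \tfrac{1}{2}\sum_{j_1, j_2} \pi(j_1) \pi(j_2) (X_{j_1} - X_{j_2})^2$ and using $\pi(j_1), \pi(j_2) \ge \pi_{\min}$ then produces a pair of communities $j_1, j_2$ for which $|p_{i^*j_1} - p_{i^*j_2}| \ge \Omega(1)$.  The desired partition is obtained by the natural independent rounding: sample $\wh{\bx}_v$ from the row distribution $W_{v,\cdot}$ independently across $v$, and set $S = \{v : \wh{\bx}_v = i^*\}$.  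By construction $\E[|\Omega_j \cap S|/|\Omega_j|] = p_{i^*j}$ exactly, and since the $\wh{\bx}_v$ are independent and $|\Omega_j| = \Theta(n)$ with high probability, Hoeffding's inequality gives concentration within $O(\sqrt{\log n / n})$ of the expectation.  Combining these pieces yields $|\Omega_{j_1}\cap S|/|\Omega_{j_1}| - |\Omega_{j_2}\cap S|/|\Omega_{j_2}| \ge \Omega(1)$ with high probability, as required.

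The only subtle step is the first one: technically the hypothesis is a relative (cosine) correlation bound, and we must turn it into an absolute bound of order $n^2$.  Degenerate outputs such as $W = \bone \pi^\top$ give $W_\Psi = 0$ and vacuously satisfy the hypothesis as $0 \ge 0$ without providing any signal, so one must either restrict attention to the non-degenerate regime $\|W_\Psi\|_F = \Theta(n)$ (which is automatic for the rounding scheme produced by \cref{sec:recovery-algo}) or sharpen \cref{def:weak-recovery} to require $\langle W_\Psi, \bX_\Psi\rangle \ge \Omega(n^2)$ directly.  Once this is done, the remainder of the argument is a routine extraction of two asymmetric conditional probabilities followed by a Hoeffding concentration.
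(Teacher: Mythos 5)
Your proof is correct and follows the paper's overall strategy of starting from the expansion in \cref{eq:correlation-expansion}, but the extraction of the two distinguished communities is done by a cleaner route. You observe that the quantity $\sum_{i,j}\pi(i)^{-1}\pi(j)p_{ij}^2 - \sum_i\pi(i)^{-1}\Pr[\wh{\bX}_v=i]^2$ is exactly $\sum_i \pi(i)^{-1}\mathrm{Var}_{j\sim\pi}[p_{ij}]$ (using $\Pr[\wh{\bX}_v=i]=\E_{j\sim\pi}[p_{ij}]$), pick an $i^*$ with nontrivial variance, and then apply the identity $\mathrm{Var}_{j\sim\pi}[X_j]=\tfrac12\sum_{j_1,j_2}\pi(j_1)\pi(j_2)(X_{j_1}-X_{j_2})^2$ to directly exhibit $j_1\neq j_2$ with $|p_{i^*j_1}-p_{i^*j_2}|\ge\Omega(1)$. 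This avoids the paper's intermediate step of first extracting a $c'$ and an $i$ with $\Pr[\wh{\bX}_v=i\mid\bX_v=c']\ge\Pr[\wh{\bX}_v=i]+\Omega(1)$, relabeling so $c'=i$, and then averaging again over $j$; both arrive at the same kind of pair $(i^*,j_1,j_2)$, but your variance decomposition gets there in one shot and needs no relabeling. Your flagged subtlety about the definition being a relative (cosine) bound rather than an absolute $\Omega(n^2)$ bound is well-taken --- the paper's proof also quietly jumps to the absolute statement --- and your resolution (nondegenerate $\|W_\Psi\|_F=\Theta(n)$, which the rounding scheme guarantees) is the right one. The Hoeffding step you spell out at the end is needed because \cref{def:partition} is a high-probability statement about the realized assignment; the paper leaves this implicit. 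Everything checks out.
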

\begin{proof}
    
If $\ev{W_{\Psi}, \bX_{\Psi}} \ge \Omega(n^2)$, then it follows that 
\begin{align*}
    \sum_{c, c' \in [k]} \pi(c)^{-1} \pi(c') \Pr[\wh{\bX}_v = c | \bX_v = c']^2 - \sum_c \pi(c)^{-1} \Pr[\wh{\bX}_v = c]^2 \ge \Omega(1),
\end{align*}
and we conclude that there must exist some $c'$ such that 
\begin{align*}
    \sum_{c} \pi(c)^{-1} (\Pr[\wh{\bX}_v = c | \bX_v = c']^2 - \Pr[\wh{\bX}_v=c]^2) \ge \Omega(1).
\end{align*}

It follows that there must exist some $i \in [k]$ for which $\Pr[\wh{\bX}_v = i | \bX_v = c'] \ge \Pr[\wh{\bX}_v = i] + \Omega(1)$. 
After relabeling the communities for $\bX$, using any $\sigma \in S_k$ which swaps $c'$ and $i$, we can assume that in fact 
\begin{align}
    \Pr[\wh{\bX}_v = i | \bX_v = i] \ge \Pr[\wh{\bX}_v = i] + \Omega(1) \label{eq:marginal-inequality}
\end{align}

Moreover, with $i$ as defined above, since $\Pr[\wh{\bX}_v = i] = \sum_{c} \Pr[\wh{\bX}_v = i | \bX_v = c] \pi(c)$, by averaging there must exist some $j \neq i$  such that 
\begin{align}
    \Pr[\wh{\bX}_v = i | \bX_v = i] \ge \Pr[\wh{\bX}_v = i | \wh{\bX}_v = j] + \Omega(1). \label{eq:nontrivial-cond-probs}
\end{align}
With \cref{eq:nontrivial-cond-probs} in hand, it is easy to see how to define the set $S$ for the partition in \cref{def:partition}. 
In particular, set $S = \qty{v: \hat{X}_v = i}$, and pick the same $i$ and $j$ as in the above guarantee. 
\end{proof}

\begin{lemma}[Weak correlation implies nontrivial mutual information]
    If $\ev{W_{\Psi}, \bX_{\Psi}} \ge \Omega(n^2)$, then $\sum_{v \in [n]} \mathrm{I}(\bX_v ; \wh{\bX}_v) \ge \Omega(n)$.
\end{lemma}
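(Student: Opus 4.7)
The plan is to reduce to an averaged single-vertex computation via a standard information-theoretic chain rule, and then invoke Pinsker's inequality to convert the conclusion of the preceding lemma into a KL lower bound. Specifically, introduce an auxiliary $V \sim \mathrm{Unif}([n])$ independent of $(\bX, \bG, \wh{\bX})$ and set $Z \triangleq \bX_V$, $Y \triangleq \wh{\bX}_V$. Since the components $\bX_v \sim \pi$ are i.i.d., $Z \sim \pi$ is independent of $V$, so $\mathrm{I}(Z; V) = 0$. The chain rule then gives
\begin{align*}
\mathrm{I}(Z; Y, V) \;=\; \mathrm{I}(Z; V) + \mathrm{I}(Z; Y \mid V) \;=\; \mathrm{I}(Z; Y \mid V) \;=\; \tfrac{1}{n}\sum_{v \in [n]} \mathrm{I}(\bX_v; \wh{\bX}_v),
\end{align*}
while data processing (dropping $V$) gives $\mathrm{I}(Z; Y) \le \mathrm{I}(Z; Y, V)$. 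Combining, $\sum_v \mathrm{I}(\bX_v; \wh{\bX}_v) \ge n\cdot \mathrm{I}(Z; Y)$, so it suffices to prove $\mathrm{I}(Z; Y) = \Omega(1)$.

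Next, we reuse the combinatorial consequence of weak recovery established in the proof of the preceding lemma, namely that there exist $i, j \in [k]$ (with $\pi(j) \ge \pi_{\min} > 0$) such that $\Pr[Y = i \mid Z = j] - \Pr[Y = i] \ge \Omega(1)$. Evaluating $d_{\mathrm{TV}}(P_{Y \mid Z=j}, P_Y)$ against the singleton set $\{i\}$ shows this total variation distance is $\Omega(1)$, and Pinsker's inequality then gives $\KL{P_{Y \mid Z=j}}{P_Y} \ge 2\, d_{\mathrm{TV}}^2 = \Omega(1)$. Using the standard decomposition $\mathrm{I}(Z; Y) = \sum_{j' \in [k]} \pi(j')\, \KL{P_{Y \mid Z=j'}}{P_Y}$ together with $\pi(j) \ge \pi_{\min}$, we obtain $\mathrm{I}(Z; Y) \ge \pi_{\min}\cdot \Omega(1) = \Omega(1)$, completing the proof.

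The main subtlety I anticipate is in Step 1: the target is a \emph{sum} of per-vertex mutual informations rather than a single averaged quantity, and a priori the per-vertex mutual informations could vary with $v$. The clean resolution is the auxiliary uniform index $V$ together with the data-processing bound $\mathrm{I}(Z; Y) \le \mathrm{I}(Z; Y, V)$; this is actually essentially tight by the exchangeability of the SBM and the (assumed) permutation-equivariance of the algorithm, but the data-processing form avoids having to invoke symmetry explicitly. Everything after that reduction is routine information theory (Pinsker + convexity of KL), and hides only constants that depend on $\pi_{\min}$, which is itself a constant of the SBM.
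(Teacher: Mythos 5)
Your proof is correct, but it takes a genuinely different route from the paper's. The paper proceeds pointwise in $v$: starting from the averaged inequality (the paper's equation labeled \emph{marginal-inequality}) and applying a reverse-Markov / averaging argument to extract a set of $\Omega(n)$ individual vertices $v$ for which $\Pr[\wh{\bX}_v = i \mid \bX_v = i] \ge \Pr[\wh{\bX}_v = i] + \Omega(1)$ holds, and only then applying Pinsker per vertex and summing. You instead never descend to individual vertices: you introduce the auxiliary uniform index $V$, use the chain rule $\mathrm{I}(Z;Y,V) = \mathrm{I}(Z;V) + \mathrm{I}(Z;Y\mid V)$ together with the observation $\mathrm{I}(Z;V) = 0$ and data processing $\mathrm{I}(Z;Y) \le \mathrm{I}(Z;Y,V)$ to reduce the target sum to a single averaged mutual information $\mathrm{I}(Z;Y)$, and then apply Pinsker once at that averaged level. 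Your route has the advantage of being cleaner and self-contained --- it avoids the ``another averaging argument'' the paper gestures at without spelling out (which, while true, requires noting that the relevant per-vertex quantity is bounded so that a large average forces $\Omega(n)$ large summands) --- whereas the paper's pointwise route gives the mildly stronger conclusion that a constant fraction of \emph{individual} vertices carry $\Omega(1)$ information, not merely that the sum is large. Both arguments use the same essential inputs: the single displayed inequality inherited from the preceding lemma, Pinsker, and the lower bound $\pi_{\min} > 0$.
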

\begin{proof}
    Notice that \cref{eq:marginal-inequality} and another averaging argument implies that there exists a subset $S$ of $\Omega(n)$ vertices such that for each $v \in S$, 
    \begin{align*}
        \Pr_{\wh{\bX}_v, \bX, \bG}[\wh{\bX}_v = i | \bX_v = i] \ge \Pr_{\wh{\bX}_v, \bX, \bG}[\wh{\bX}_v = i] + \Omega(1),
    \end{align*}
    where we have used the subscript to emphasize that the probability no longer samples a random $v \sim [n]$.
    Now by Pinsker's inequality it suffices to lower bound $d_{\mathsf{TV}}(p_{\bX_v, \wh{\bX}_v}, p_{\bX_v} p_{\wh{\bX}_v})$.
    Since $\pi(i) \ge \Omega(1)$, the above inequality implies that $d_{\mathsf{TV}}(p_{\bX_v, \wh{\bX}_v}, p_{\bX_v} p_{\wh{\bX}_v}) \ge \Omega(1)$, so we conclude that 
    \begin{align*}
        \mathrm{I}(\bX_v ; \wh{\bX}_v) &\ge \Omega(1).
    \end{align*}
    Summing up over all $v$ and using the fact that mutual information is nonnegative, we conclude that $\mathrm{I}(\bX; \wh{\bX}) \ge \Omega(n)$.
\end{proof}

\end{document}